\documentclass[pdflatex,sn-mathphys-num]{sn-jnl}% Math and Physical Sciences Numbered Reference Style

\usepackage{graphicx}%
\usepackage{multirow}%
\usepackage{amsmath,amssymb,amsfonts}%
\usepackage{amsthm}%
\usepackage{mathtools}%
\usepackage{bm}%
\usepackage{mathrsfs}%
\usepackage[title]{appendix}%
\usepackage{xcolor}%
\usepackage{textcomp}%
\usepackage{manyfoot}%
\usepackage{booktabs}%
\usepackage{enumitem}

\usepackage{color}

\theoremstyle{plain}
\newtheorem{theorem}{Theorem}
\newtheorem{lemma}{Lemma}
\newtheorem{proposition}{Proposition}
\newtheorem{corollary}{Corollary}
\theoremstyle{remark}
\newtheorem{remark}{Remark}
\theoremstyle{definition}

\newcommand{\M}{\mathcal{M}}
\newcommand{\Ln}{\operatorname{Ln}}

\newcommand{\sigmah}{\sigma_\mathcal{H}}

\newcommand{\rhonh}{\rho_{\mathcal{N}\mathcal{H}}}

\newcommand{\1}{\mathbf{1}}

\newcommand{\RR}{\mathcal{R}}
\newcommand{\B}{\mathcal{B}}
\newcommand{\N}{\mathbb{N}}
\newcommand{\NN}{\mathcal{N}}
\newcommand{\D}{\mathcal{D}}
\newcommand{\eps}{\varepsilon}
\newcommand{\ot}{\otimes}
\newcommand{\norm}[1]{\left\lVert #1\right\rVert}

\newcommand{\Deltaq}{\Delta}

\newcommand{\infs}{\inf_{s\geq 0}}

\begin{document}

\title{Exact exponents for smoothing the max-relative entropy and quantum information decoupling in von Neumann algebras}

\author[1]{ \fnm{Zhiwen} \sur{Lin} }
\email{21b912029@stu.hit.edu.cn}

\author[1]{ \fnm{Hongsen} \sur{Qiu} } % 加 * 号标记为通讯作者
\email{chieughongsen@gmail.com}

\author*[1]{ \fnm{Xinyu} \sur{Zhang} }
\email{xy.zhang@stu.hit.edu.cn}

% --- 定义单位 (编号为1) ---
\affil*[1]{ % 加 * 号标记通讯作者的单位
  \orgdiv{Institute for Advanced Study in Mathematics}, 
  \orgname{Harbin Institute of Technology}, 
  \orgaddress{%
    \city{Harbin}, 
    \postcode{150001}, 
    \country{China}
  }
}

\abstract{
  We establish the exact exponent for smoothing the max-relative entropy
on arbitrary von Neumann algebras. The proof first
develops the required large-deviation and smoothing estimates in the
semifinite setting and then passes to general von Neumann algebras
through Haagerup--Junge--Xu reduction.

  We next study catalytic quantum information decoupling when the
reference system is described by an arbitrary von Neumann algebra. A key ingredient is an operator layer-cake formula valid on arbitrary von Neumann algebras. This formula leads to a convex-split estimate for normal states with a general von Neumann algebraic reference system. Combining this estimate with the smoothing exponent result, we obtain lower and upper bounds on the catalytic-decoupling
reliability function. These bounds coincide below the corresponding critical rate, yielding
the exact decoupling reliability function in that regime.}

\maketitle
\section{Introduction}
Smooth entropies were introduced in the development of one-shot quantum information theory as entropic quantities capable of describing finite-resource information-processing tasks without imposing an independent and identically distributed (i.i.d.) assumption, with the smooth min- and max-entropies playing a central role in quantum cryptography and one-shot information theory \cite{Renner2005,TomamichelColbeckRenner2009,TomamichelColbeckRenner2010}. Within this framework, the max-relative entropy was isolated and systematically studied by Datta as a relative-entropy-type quantity measuring the smallest multiplicative factor by which one positive operator dominates another, and its smoothed version was introduced by optimizing this quantity over states in a prescribed neighbourhood of the original state \cite{Datta2009,Renner2005}. The resulting smooth max-relative entropy provides a common underlying quantity for several one-shot information measures: conditional min-entropy is obtained by minimizing a max-relative entropy over reference states, while max-information is likewise constructed from the max-relative entropy and serves as a one-shot counterpart of mutual information \cite{Datta2009,BertaChristandlRenner2011,CiganovicBeaudryRenner2014,Tomamichel2016}. These quantities have consequently entered the analysis of a wide range of operational problems, including privacy amplification against quantum side information, one-shot channel coding and simulation, quantum state merging, and other protocols whose finite-resource performance is naturally expressed through smooth entropic quantities \cite{Renner2005,TomamichelSchaffnerSmithRenner2011,BertaChristandlRenner2011,DupuisBertaWullschlegerRenner2014}. 

Their connection with conventional asymptotic information theory is provided by asymptotic equipartition properties: for i.i.d. quantum states, suitably smoothed entropies converge to the corresponding von Neumann entropic quantities, thereby recovering the familiar first-order rates from one-shot characterizations \cite{TomamichelColbeckRenner2009,Tomamichel2016}. Beyond the first-order regime, Tomamichel and Hayashi established a hierarchy connecting smooth max-relative entropy, information-spectrum quantities, and hypothesis-testing relative entropy and showed that these quantities exhibit the same leading second-order behavior under appropriate smoothing \cite{TomamichelHayashi2013}; more recently, increasingly tight one-shot relations between smoothed max-relative entropy and hypothesis-testing relative entropy have further clarified the equivalence and differences among the various smoothing formulations \cite{RegulaLamiDatta2026}. At the large-deviation scale, Li, Yao, and Hayashi determined the exact exponential decay rate of the purified-distance modification required to smooth the max-relative entropy in the i.i.d. setting, showing that the exponent is given by a variational formula involving sandwiched Rényi divergences, and used this result to obtain the exact security exponent of quantum privacy amplification in the high-rate regime \cite{LiYaoHayashi2023}; complementary strong-converse analyses subsequently identified sandwiched Rényi divergences of orders below one as the quantities governing the opposite large-deviation regime \cite{LiYao2024StrongConverse}. 

Closely connected with this smooth-entropy framework is ``quantum information decoupling'', whose basic objective is to transform a system correlated with a reference into one that is approximately independent of that reference, a principle that emerged as a unifying technique in the development of quantum state merging, the mother protocol, and quantum channel coding \cite{HorodeckiOppenheimWinter2007,AbeyesingheDevetakHaydenWinter2009,HaydenHorodeckiWinterYard2008}. The decoupling viewpoint proved particularly powerful because successful decoupling from an environment can be converted into operational statements about quantum communication and state transfer, allowing a number of quantum Shannon-theoretic coding results to be derived within a common framework \cite{AbeyesingheDevetakHaydenWinter2009,HaydenHorodeckiWinterYard2008,Dupuis2010}. A general one-shot decoupling theorem was established by Dupuis, Berta, Wullschleger, and Renner, who expressed decoupling conditions directly in terms of smooth entropies and demonstrated their application to one-shot state merging \cite{DupuisBertaWullschlegerRenner2014}, while subsequent work investigated more structured and efficient randomizing operations, including random quantum circuits and partial-decoupling schemes \cite{BrownFawzi2015,WakakuwaNakata2021}. Majenz, Berta, Dupuis, Renner, and Christandl later introduced ``catalytic decoupling'', in which an initially uncorrelated ancillary system is supplied as a catalyst, and obtained a tight one-shot characterization in terms of smooth max-information, thereby placing correlation erasure and state merging within a common decoupling framework \cite{MajenzBertaDupuisRennerChristandl2017}. 

The asymptotic theory has since been refined beyond first-order rates: tight second-order and moderate-deviation characterizations have been obtained for standard decoupling \cite{ShenGaoCheng2023}, whereas Li and Yao initiated a systematic reliability-function analysis of catalytic decoupling and showed that, below a critical cost rate, the exact exponential decay of the decoupling error is characterized by a Legendre-type transform of the sandwiched Rényi mutual information, with general upper and lower bounds outside this regime \cite{LiYao2024}. More recent developments have pursued sharper Rényi-based one-shot and exponential bounds, including approaches that bypass explicit smoothing, improved convex-splitting and decoupling estimates, and any-shot formulations of decoupling, further reinforcing the role of Rényi information quantities in quantitative decoupling theory \cite{Dupuis2023,ChengGaoHircheHuangLiu2025,BertaChengYao2026}.

Von Neumann algebras provide the natural operator-algebraic framework for quantum systems
beyond finite-dimensional matrix algebras.
This point of view has already become a useful language in quantum
information theory.  R\'enyi relative entropies and sandwiched R\'enyi
quantities have been formulated in this setting, e.g., \cite{Jencova2018,HiaiJencova2024}.

Recent work has shown that asymptotic quantum information theory also admits a genuinely
von Neumann algebraic formulation.  Fawzi, Gao and Rahaman established
asymptotic equipartition theorems for smooth entropies on von Neumann
algebras \cite{Gao2025}.  Besides,
Junge and Laracuente studied strong converse exponents for asymptotic hypothesis testing in type-III von Neumann algebras \cite{Junge2025}.  These results indicate that operational asymptotic formulas
are not restricted to finite-dimensional systems.

The first goal of the present paper is to give the exact exponent for smoothing the max-relative entropy.  Let \(\M\) be a von Neumann algebra. Let
\(\rho \in \mathcal{S}(\M)\) and \(\sigma\in \M_*^+\). We prove that, 
\[
    \lim_{n\to\infty}
    -\frac1n
    \log
    \Delta(\rho_n\Vert\sigma_n,nr)
    =
    \frac12
    \sup_{s\ge0}
    s\bigl(r-D_{1+s}(\rho\Vert\sigma)\bigr).
\]
In finite-dimensional cases, Li, Yao, and Hayashi identified this exponent in terms of sandwiched Rényi
divergences \cite{LiYaoHayashi2023}. However, the standard finite-dimensional arguments based on pinching, eigenvalue counting and matrix approximation, cannot survive on arbitrary von Neumann algebras. Our  proof replaces dimension-dependent ingredients by operator-algebraic ones.

We next apply this framework to catalytic quantum information decoupling.  Let \(
    \rho_{\M\mathcal H}\) be a bipartite state where the reference system $\M$ is an arbitrary von Neumann algebra. If
\[
    r\le
    \frac12
    \left.
    \frac{d}{ds}
    \left(
        s I_{1+s}(\M:\mathcal H)_\rho
    \right)
    \right|_{s=1},
\]
then the reliability function of catalytic quantum information decoupling is
\[
    E^{\mathrm{dec}}_{\M:\mathcal H}
    (
        \rho_{\M\mathcal H},r
    )
    =
    \max_{0\le s\le1}
    s\left(
        r-\frac12 I_{1+s}(\M:\mathcal H)_\rho
    \right).
\]
In finite-dimensional cases, Li and Yao characterized the reliability function in terms of the sandwiched R\'enyi mutual
information \cite{LiYao2024}. Our decoupling result shows that the same reliability formula survives in this broader setting.
In this sense, the decoupling exponent is governed by the von Neumann
algebraic structure of the bipartite normal state.

The remainder of this paper is organized as follows. In Sect.~\ref{pre} we introduce the information-theoretic quantities on von Neumann algebras. In Sect.~\ref{exponent} we give  the exact exponent for smoothing the max-relative entropy. In Sect. ~\ref{decouple} we characterize the reliability function of
quantum information decoupling when the reference system is an arbitrary von Neumann algebra.

\section{Preliminaries}\label{pre}

\subsection{Von Neumann algebras and noncommutative \(L^p\)-spaces}
Let \(\mathcal M\subseteq \mathcal B(\mathcal H)\) be a von Neumann
algebra, namely, a unital \(^*\)-subalgebra of
\(\mathcal B(\mathcal H)\) which is closed in the weak operator topology.
Equivalently, by the bicommutant theorem,
\[
    \mathcal M=\mathcal M'',
\]
where
\[
    \mathcal M'
    =
    \{x\in\mathcal B(\mathcal H):xy=yx
      \text{ for all }y\in\mathcal M\}
\]
is the commutant of \(\mathcal M\). We write
\(\mathcal M_+\) for the positive cone of \(\mathcal M\),
\(\operatorname{Proj}(\mathcal M)\) for its projection lattice, and
\(\1\) for its identity; see, for instance,
\cite[Chapters I--III]{TakesakiI}.

Every von Neumann algebra admits a unique isometric predual, denoted by
\(\mathcal M_*\). The predual \(\mathcal M_*\) is canonically
identified with the Banach space of all normal linear functionals on
\(\mathcal M\).
We denote by
\(
    \mathcal M_*^+
\)
the cone of normal positive functionals. For
\(\omega\in\mathcal M_*^+\), one has
\[
    \|\omega\|_{\M_*}=\omega(\1).
\]

For \(\omega\in\mathcal M_*^+\), its support projection
\(s(\omega)\in  \operatorname{Proj}(\mathcal M)\) is the smallest projection \(e\)
such that
\[
    \omega(e)=\omega(\1).
\]
Equivalently,
\[
    \1-s(\omega)
    =
    \sup\{e\in \operatorname{Proj}(\mathcal M):\omega(e)=0\}.
\]
The functional \(\omega\) is faithful if and only if
\(s(\omega)=\1\). A von Neumann algebra \(\mathcal M\) is called \(\sigma\)-finite if it
admits a normal faithful  state.

For von Neumann algebras \(\mathcal M\) and \(\mathcal N\), their von Neumann algebra tensor product is denoted by
\[
    \mathcal M\bar{\otimes}\mathcal N.
\]
If \(\omega\in\mathcal M_*^+\) and
\(\eta\in\mathcal N_*^+\), then
\(\omega\otimes\eta\) denotes the corresponding normal product functional
on \(\mathcal M\bar{\otimes}\mathcal N\).

We next recall the tracial construction of noncommutative
\(L^p\)-spaces. A von Neumann algebra is called semifinite if it admits
a  normal semifinite faithful (n.s.f.) trace. 

% Every type-I or type-II von
% Neumann algebra is semifinite, whereas a type-III von Neumann algebra
% admits no normal semifinite faithful (n.s.f.) trace.

Let \((\mathcal M,\tau)\) be a semifinite von Neumann
algebra equipped with an n.s.f. trace $\tau$. A closed densely
defined operator \(x\) on \(\mathcal H\) is affiliated with
\(\mathcal M\) if it commutes with every unitary in \(\mathcal M'\). 
An affiliated operator \(x\) is called \(\tau\)-measurable if for every
\(\varepsilon>0\), there exists a projection \(e\in \operatorname{Proj}(\mathcal M)\) such that
\[
    e\mathcal H\subseteq\operatorname{Dom}(x),
    \quad
    \tau(\1-e)<\varepsilon.
\]
The collection of all \(\tau\)-measurable operators affiliated with
\(\mathcal M\) forms a \(^*\)-algebra, denoted by
\(L^0(\mathcal M,\tau)\); see \cite{Nelson1974,PX03,Xu07}.

For \(1\leq p<\infty\), the tracial noncommutative \(L^p\)-space is
\[
    L^p(\mathcal M,\tau)
    =
    \{x\in L^0(\mathcal M,\tau):\tau(|x|^p)<\infty\},
\]
equipped with the norm
\(
    \|x\|_p
    =
    \tau(|x|^p)^{1/p}.
\)
We also put
\[
    L^\infty(\mathcal M,\tau)=\mathcal M
\]
with its usual operator norm. 
In particular,
\[
    L^1(\mathcal M,\tau)\cong\mathcal M_*.
\]
Whenever a normal semifinite faithful trace is fixed, we identify normal
positive functionals with their \(L^1\)-densities.

We denote by
\[
    \operatorname{Proj}_\tau(\mathcal M)
    =
    \{e\in \operatorname{Proj}(\mathcal M):\tau(e)<\infty\}
\]
the set of finite-trace projections.

\begin{remark}
The above construction contains the usual finite- and infinite-dimensional quantum systems as special cases. If
\[
    \mathcal M=M_n(\mathbb C),
    \qquad
    \tau=\operatorname{Tr}_n,
\]
then \(L^p(\mathcal M,\tau)\) is the finite-dimensional Schatten class \(S^p_n\). If \(\mathcal{H}\) is a separable infinite-dimensional Hilbert space and
\[
    \mathcal M=\mathcal{B}(\mathcal H),
    \qquad
    \tau=\operatorname{Tr},
\]
then \(L^p(\mathcal M,\tau)\) is the Schatten class \(S^p(\mathcal{H})\).
\end{remark}

For general von Neumann algebras, we briefly introduce some concepts and adopt standard notation from the theory of operator algebras. We refer to \cite{Kadison1986,PT73,Stratila1981,Takesaki1970,Takesaki1972,Takesaki1973}
 for modular theory and \cite{Haa75,TakII03} for the standard form.

We now fix a standard form
\(
\bigl(\mathcal M,\mathcal{H},
J,\mathcal P\bigr)
\)
of \(\mathcal M\), which consists of an injective $*$-homomorphism $\pi:\M\to \mathcal{B}(\mathcal{H})$, an anti-linear isometry $J$ on $\mathcal{H}$, and a self-dual cone $\mathcal{P}$. We identify $\pi(x)=x\in \M$ with its faithful normal representation on $\mathcal{H}$. Such standard form satisfies:
\begin{itemize}[leftmargin=*]
\item[1)] $ J^2=\1,\quad J\M J=\M^\prime,$
\item[2)]$Jx J=x^*,\quad x\in \M\cap \M^\prime,$
\item[3)] $J\xi=\xi,\quad \xi\in \mathcal{P},$
\item[4)] $ xJx\mathcal{P}=\mathcal{P},\quad x\in \M.$
\end{itemize} 
Every \(\omega\in\mathcal M_*^+\) admits a unique vector
representative \(\xi_\omega\in\mathcal P\) satisfying
\[
\omega(x)
=
\langle\xi_\omega,x\xi_\omega\rangle,
\qquad x\in\mathcal M.
\]
In this sense, the support projection $s(\omega)$ is then the projection onto $\overline{\M^\prime \xi_\omega}$.
% and
% fix a faithful normal state \(\varphi\in\mathcal M_*^+\).
% Let \(\xi_\varphi\in\mathcal P_{\mathcal M}\) be the standard-form
% representative of \(\varphi\), so that
% \[
%     \varphi(x)
%     =
%     \langle\xi_\varphi,x\xi_\varphi\rangle,
%     \qquad x\in\mathcal M.
% \]
% Since \(\varphi\) is faithful, \(\xi_\varphi\) is cyclic and
% separating for \(\mathcal M\). 
For $\psi,\varphi\in \M_*^+$, define \(S_{\psi,\varphi}\) as the closure of 
\[
    S_{\psi,\varphi}(x\xi_\varphi+\eta)=s(\varphi)x^*\xi_\psi,
    \qquad x\in\mathcal M,\quad\eta\in \M \xi_{\varphi}^{\perp}
\]
and put
\(
    \Delta_{\psi,\varphi}= S_{\psi,\varphi}^* S_{\psi,\varphi}
\). In particular, the modular operator is given by $\Delta_{\varphi}=S_{\varphi}^*S_{\varphi}$ with $S_{\varphi}=S_{\varphi,
\varphi}$. 

The modular automorphism group associated with \(\varphi\) is
given by
\[
    \sigma_t^\varphi(x)
    =
    \Delta_\varphi^{it}x\Delta_\varphi^{-it},
    \qquad x\in\mathcal M,\quad t\in\mathbb R.
\]
It is a $\sigma$-weakly continuous one-parameter group of
$*$-automorphisms satisfying
\[
    \varphi\circ\sigma_t^\varphi=\varphi,
    \qquad t\in\mathbb R.
\]
If $\varphi$ is tracial, then $\Delta_\varphi=1$ and
$\sigma_t^\varphi=\operatorname{id}_{\mathcal M}$ for every
$t\in\mathbb R$.

\subsection{Information-theoretic concepts on von Neumann algebras}
Several works have formulated the framework of quantum
information theory in the setting of von
Neumann algebras, e.g., see \cite{Hiaibook,HiaiJencova2024,Gao2025}. We briefly review the
relevant concepts here, with the notation and definitions given in the previous part of this section. 

We define the normalized positive states as
\[
    \mathcal S(\mathcal M)
    :=
    \bigl\{
        \omega\in\mathcal M_*^+:
        \omega(\1)= 1
    \bigr\},
\]
and the subnormalized positive states as
\[
    \mathcal S_{\leq}(\mathcal M)
    :=
    \bigl\{
        \omega\in\mathcal M_*^+:
        \omega(\1)\leq 1
    \bigr\}.
\]

% For $\omega\in\mathcal M_*^+$, let
% $h_\omega\in L^1(\mathcal M)_+$ denote the Haagerup $L^1$-density of
% $\omega$.  Thus
% \[
%     \omega(x)=\operatorname{tr}(xh_\omega),
%     \qquad x\in\mathcal M,
% \]
% where $\operatorname{tr}$ is the canonical functional on
% $L^1(\mathcal M)$.  

 Next, we introduce the formulation of the following information-theoretic quantities through the standard form and modular operator theory as \cite{Hiaibook}. For $\alpha>1$, the sandwiched R\'enyi divergence of
$\rho,\sigma\in\mathcal M_*^+$ is defined as
\[
    Q_\alpha(\rho\Vert\sigma)= \sup_{\omega\in \mathcal{S}(\M)}\langle
        \xi_\rho,
        \Delta_{\omega,\sigma}^{1/\alpha^\prime}\,\xi_\rho
     \rangle^\alpha,
\]
where $\frac{1}{\alpha}+\frac{1}{\alpha^\prime}=1$. It then follows that,
\[
    D_\alpha(\rho\Vert\sigma)
    =
    \frac{1}{\alpha-1}\log Q_\alpha(\rho\Vert\sigma).
\]  
Throughout this work, $\log$ denotes the logarithm to base 2, whereas $\ln$ denotes the
natural logarithm.

The max-relative entropy is defined as
$$D_\infty(\rho\Vert \sigma)=D_{\max}(\rho\Vert \sigma)=\log \inf \{\lambda> 0\; :\; \rho\leq \lambda \sigma\},$$
% Using the standard form fixed above, for every
% \(\omega\in\mathcal M_*^+\), let
% \(\xi_\omega\in\mathcal P_{\mathcal M}\) denote its unique
% standard-form representative, so that
% \[
%     \omega(x)
%     =
%     \langle\xi_\omega,x\xi_\omega\rangle,
%     \qquad x\in\mathcal M.
% \]
% For $\rho,\sigma\in\mathcal M_*^+$, let
% $\Delta_{\sigma,\rho}$ denote the relative modular operator associated
% with the ordered pair $(\sigma,\rho)$ in the sense of Araki
% \cite{Araki1976-1ent,Araki1977-2ent}. When $\rho$ is faithful,
% $\Delta_{\sigma,\rho}=S_{\sigma,\rho}^*S_{\sigma,\rho}$, where
% $S_{\sigma,\rho}$ is the closure of the densely defined antilinear
% operator
% \[
%     S_{\sigma,\rho}(x\xi_\rho)
%     =
%     x^*\xi_\sigma,
%     \qquad x\in\mathcal M.
% \]
% For general normal positive functionals, the relative modular operator
% is understood in its standard support-reduced form. Notice that the
% ordinary modular operator introduced above is the diagonal special
% case
% \[
%     \Delta_{\varphi,\varphi}=\Delta_\varphi.
% \]
and the Araki relative entropy (see \cite{Araki1976-1ent,Araki1977-2ent}) is defined by
\[
    D_1(\rho\Vert\sigma)
    =
    D(\rho\Vert\sigma)
    =
    \langle
        \xi_\rho,
        \log \Delta_{\rho,\sigma}\,\xi_\rho
     \rangle.
\]

 For $\rho,\sigma\in\mathcal S_{\leq}(\mathcal M)$, we use the
generalized fidelity
\[
    F(\rho,\sigma)
    =\inf_{\omega\in \mathcal{S}(\M)}\langle
        \xi_\rho,
        \Delta_{\omega,\sigma}^{-1}\,\xi_\rho
     \rangle
\]
and the purified distance
\[
    P(\rho,\sigma)
    =
    \sqrt{1-F(\rho,\sigma)^2}.
\]
 For $\rho \in\mathcal S(\mathcal M)$ and $\sigma \in \M_*^+$, we also define the smoothing quantity as
\begin{equation}
\label{eq:general-smoothing-quantity}
    \Delta(\rho\Vert\sigma,\lambda)
    =
    \inf
    \left\{
        P(\rho,\widetilde\rho):
        \widetilde\rho\in\mathcal S_{\leq}(\mathcal M),
        \ \widetilde\rho\leq 2^\lambda\sigma
    \right\},\quad \lambda \in \mathbb{R}.
\end{equation}
\begin{remark}
Equivalent formulation of these quantities through Haagerup $L^p$-spaces is available. Readers can refer to \cite{Haagerup1979,Ter81} for Haagerup 
\(L^p\)-spaces and the formulation could be found in \cite{Jencova2018}.
\end{remark}

Let \(\M\) and \(\mathcal{N}\) be von Neumann algebras. A quantum channel, in the Heisenberg picture, is a normal unital completely
positive (UCP) map
\[
\Phi:\mathcal{N}\longrightarrow \M.
\]
Its preadjoint, in the so-called Schrödinger picture, is the map
\[
\Phi_*:\M_*\longrightarrow \mathcal{N}_*,
\qquad
\Phi_*(\omega):=\omega\circ\Phi.
\]
Since \(\Phi\) is normal and UCP, 
\(\Phi_*\) maps normal states to normal states and normal
subnormalized positive functionals to normal subnormalized positive
functionals.

We shall repeatedly use the standard data-processing properties. For every
\(\alpha> 1\),
\[
Q_\alpha\bigl(\Phi_*(\rho)\Vert\Phi_*(\sigma)\bigr)
\leq
Q_\alpha(\rho\Vert\sigma),\quad
D_\alpha\bigl(\Phi_*(\rho)\Vert\Phi_*(\sigma)\bigr)
\leq
D_\alpha(\rho\Vert\sigma).
\]
Besides, for the Araki relative entropy $D_1$ and the max-relative entropy $D_\infty$, the data-processing property also holds.

For \(\rho,\sigma\in \mathcal{S}_{\leq}(\M)\), the generalized fidelity and the
purified distance satisfy
\[
F\bigl(\Phi_*(\rho),\Phi_*(\sigma)\bigr)
\geq
F(\rho,\sigma),\quad
P\bigl(\Phi_*(\rho),\Phi_*(\sigma)\bigr)
\leq
P(\rho,\sigma).
\]
Consequently, for $\rho\in\mathcal{S}(\M)$ and every
\(\sigma\neq 0\) and every \(\lambda\in\mathbb R\),
\[
\Delta\bigl(
\Phi_*(\rho)\Vert\Phi_*(\sigma),\lambda
\bigr)
\leq
\Delta(\rho\Vert\sigma,\lambda).
\]

The data-processing inequalities for the Araki relative entropy, max-relative entropy and
the sandwiched R\'enyi divergence on arbitrary von Neumann algebras
are standard; see, e.g.,
\cite{Jencova2018,BertaScholzTomamichel2018,HiaiJencova2024}.
The fidelity inequality is the monotonicity of the transition
probability under normal quantum channels, and the purified-distance
inequality follows from its definition; see
\cite{Uhlmann1976,BertaFurrerScholz2016}.

% For the max-relative entropy, if \(
% \rho\leq c\sigma\),
% then positivity of \(\Phi_*\) gives
% \[
% \Phi_*(\rho)\leq c\,\Phi_*(\sigma).
% \]
% Taking the infimum over \(c\) proves the assertion.

Finally, let \(\widetilde\rho\in \mathcal{S}_{\leq}(\M)\) satisfy
\(
\widetilde\rho\leq 2^\lambda\sigma
\).
Then
\[
\Phi_*(\widetilde\rho)
\leq
2^\lambda\Phi_*(\sigma),
\]
and data processing of the purified distance gives
\[
P\bigl(
\Phi_*(\rho),\Phi_*(\widetilde\rho)
\bigr)
\leq
P(\rho,\widetilde\rho).
\]
Taking the infimum over all admissible \(\widetilde\rho\) proves the claim for the smoothing quantity.

When $\M$ is semifinite, the previous definition naturally recovers
\begin{align*}
\mathcal{S}(\M,\tau)&=\{\rho\in L^1(\M,\tau)_+:\tau(\rho)=1\}\\
\mathcal{S}_{\leq}(\M,\tau)&=\{\rho\in L^1(\M,\tau)_+:\tau(\rho)\leq 1\}
\end{align*}
for the set of normalized and subnormalized states.

Moreover, for \(\rho,\sigma\in \mathcal{S}_{\leq}(\M,\tau)\), one also has
\[
F(\rho,\sigma)
=
\tau\!\left(\left|\rho^{1/2}\sigma^{1/2}\right|\right)
+
\sqrt{(1-\tau(\rho))(1-\tau(\sigma))},
\]
and
\[
P(\rho,\sigma)
=
\sqrt{1-F(\rho,\sigma)^2}.
\]
For \(\lambda\in\mathbb R\), the smoothing quantity is 
\[
\Delta(\rho\Vert\sigma,\lambda)
=
\inf\Bigl\{
P(\rho,\widetilde\rho):
\widetilde\rho\in \mathcal{S}_{\leq}(\M),\
\widetilde\rho\leq 2^\lambda\sigma
\Bigr\}.
\]

% In
% particular, treating $\rho,\sigma$ as elements in $\M_*$, we have
% \[
% F(\rho,\sigma)^2=
% \sup_{\pi,\xi,\eta}
% \left|\langle \xi,\eta\rangle\right|^2,
% \]
% where the supremum is taken over all normal representations
% \(\pi:\M\to \mathcal{B}(\mathcal{H})\)  and all representatives
% \(\xi,\eta\in \mathcal{H}\) satisfying
% \[
% \rho(x)=\langle \xi,\pi(x)\xi\rangle,
% \qquad
% \sigma(x)=\langle \eta,\pi(x)\eta\rangle,
% \qquad x\in \M .\]
% For details, see \cite{Uhlmann1976}. Consequently, we shall use standard properties of the fidelity and the purified
% distance for states on von Neumann algebras, such as data processing inequality, monotonicity and other related properties, with the Stinespring representation of the channel.  Hence the Fuchs--van de Graaf inequality still holds in this setting.

For non-zero \(\rho,\sigma\in L^1(\M,\tau)_+\) and \(\alpha> 1\), we have
\[
Q_\alpha(\rho\Vert\sigma)
=
\tau\!\left[
\left(
\sigma^{\frac{1-\alpha}{2\alpha}}
\rho
\sigma^{\frac{1-\alpha}{2\alpha}}
\right)^\alpha
\right],\quad D_\alpha(\rho\Vert\sigma)=\frac{1}{\alpha-1}\log Q_\alpha(\rho\Vert\sigma).
\]
Besides, Araki relative entropy coincides with the Umegaki relative entropy:
\[
D_1(\rho\Vert\sigma)
=
\tau (\rho (\log \rho-\log \sigma)).
\]

When $\M$ and $\mathcal{N}$ are tracial, the Schrödinger picture channel $\Phi_*$ is indeed a normal, completely positive and trace-preserving map (CPTP) from $L^1(\M,\tau_\M)$ to $L^1(\NN,\tau_{\NN})$. For this case, we can also use the data-processing inequality of the mentioned quantities, e.g., see \cite{Jencova2018,HiaiJencova2024}.

For tensor powers, we write
\[
\M_n=\M^{\bar\otimes n}.
\]
If \(\rho,\sigma\in \M^+_*\), then
\[
\rho_n=\rho^{\otimes n},
\quad
\sigma_n=\sigma^{\otimes n}
\]
are regarded as elements of \((\M_n)^+_*\).

For $\rho\in \mathcal{S}(\M)$ and $\sigma \in \M_*^+$ on arbitrary von Neumann algebras,
the sandwiched R\'enyi divergence satisfies the following standard
properties for \(\alpha> 1\):
\begin{enumerate}
\item \noindent tensor-product additivity,
\[
D_\alpha(\rho_1\otimes\rho_2\Vert\sigma_1\otimes\sigma_2)
=
D_\alpha(\rho_1\Vert\sigma_1)
+
D_\alpha(\rho_2\Vert\sigma_2);
\]

\item monotonicity,
\[
1<\alpha\leq\beta
\quad\Longrightarrow\quad
D_\alpha(\rho\Vert\sigma)
\leq
D_\beta(\rho\Vert\sigma);
\]

\item the limiting relations
\[
\lim_{\alpha\downarrow1}
D_\alpha(\rho\Vert\sigma)
=
D(\rho\Vert\sigma),\quad
\lim_{\alpha\uparrow\infty}
D_\alpha(\rho\Vert\sigma)
=
D_{\max}(\rho\Vert\sigma);
\]

\item joint lower semicontinuity on
\(\M_*^+\times \M_*^+\) with respect to the predual norm.
\end{enumerate}
These properties are understood in the extended-valued sense whenever $\infty$ appears.

We finish this subsection with the following Legendre transform lemma related to the relative entropy, which will be repeatedly used throughout the paper. We include the proof in Appendix~\ref{app-ltrans}.
\begin{lemma}\label{l-trans}
Let \(I\) be a directed set. Let
\[
0\neq\rho_i,\sigma_i\in(\mathcal M_i)_*^+,
\qquad
0\neq\rho,\sigma\in\mathcal M_*^+.
\]
Assume that
\(
\rho_i(\1)\to \rho(\1),\,\sigma_i(\1)\to \sigma(\1)
\).
We also put
\[
\varphi_i(s)
=
sD_{1+s}(\rho_i\Vert\sigma_i),
\quad
\varphi(s)
=
sD_{1+s}(\rho\Vert\sigma),\quad s>0.
\] 
with $\varphi_i(0)=\log \rho_i(\1)$, $\varphi(0)=\log \rho(\1)$.

If \(
a<D_\infty(\rho\Vert\sigma)
\) and $\varphi_i(s)\to \varphi(s)$ pointwise, then
\[
J_i(a)=\inf_{s\geq0}
\{\varphi_i(s)-a s\}
\longrightarrow
\inf_{s\geq0}
\{\varphi(s)-as\}=J(a).
\]
Equivalently,
\[
E_i(a)=\sup_{s\geq0}
\{a s-\varphi_i(s)\}
\longrightarrow
\sup_{s\geq0}
\{as-\varphi(s)\}=E(a).
\]
\end{lemma}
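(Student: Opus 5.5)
The plan is to use convexity of \(\varphi_i\) and \(\varphi\) together with the standard fact that pointwise convergence of convex functions on an open interval is locally uniform. The key structural fact is that \(s\mapsto sD_{1+s}(\rho\Vert\sigma)=\log Q_{1+s}(\rho\Vert\sigma)\) is convex on \([0,\infty)\). This is the log-convexity of \(\alpha\mapsto Q_\alpha\) in the sandwiched setting, which I take as standard (e.g.\ via the Haagerup \(L^p\) formulation of \cite{Jencova2018} and Riesz--Thorin/Hölder interpolation). With \(\varphi(0)=\log\rho(\1)\), this gives a convex, lower semicontinuous function on \([0,\infty)\), possibly taking the value \(+\infty\) beyond some point. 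The slope \(\varphi'(s)\) is nondecreasing and tends to \(D_\infty(\rho\Vert\sigma)\) as \(s\to\infty\), since \(\varphi(s)/s=D_{1+s}\uparrow D_{\max}\) by monotonicity and the limiting relation.

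\textbf{Step 1: the infimum for \(J(a)\) is attained on a compact set.} Since \(a<D_\infty(\rho\Vert\sigma)\), pick \(a<b<D_\infty\) and \(s_0\) with \(D_{1+s_0}(\rho\Vert\sigma)>b\). Convexity then yields, for \(s\geq s_0\),
\[
\varphi(s)-as\geq \varphi(0)+\tfrac{s}{s_0}\bigl(\varphi(s_0)-\varphi(0)\bigr)-as,
\]
which grows linearly in \(s\) once \(s_0\) is chosen so large that \((\varphi(s_0)-\varphi(0))/s_0>b\). The same bound holds for \(\varphi_i\) for all \(i\) large, because \(\varphi_i(s_0)\to\varphi(s_0)\) and \(\varphi_i(0)\to\varphi(0)\) by the norm assumption. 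Hence, eventually in \(i\), both infima may be restricted to a fixed interval \([0,S]\).

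\textbf{Step 2: comparison on \([0,S]\).} The upper bound \(\limsup J_i(a)\leq J(a)\) is immediate. For any \(s\), \(J_i(a)\leq\varphi_i(s)-as\to\varphi(s)-as\); taking the infimum over \(s\) gives the claim. For the lower bound, let \(T=\sup\{s:\varphi(s)<\infty\}\). If \(S<T\), pointwise convergence of convex functions on the open interval \((0,T)\) is locally uniform, hence uniform on \([\delta,S]\). On \([0,\delta]\), convexity between \(0\) and \(\delta\) bounds \(\varphi_i\) below by the chord through \(\varphi_i(0)\) and \(\varphi_i(2\delta)\), extended on the left. This gives \(\inf_{[0,\delta]}\varphi_i\geq\varphi_i(0)-O(\delta)\) uniformly in \(i\), and so \(\liminf J_i(a)\geq J(a)-O(\delta)\).

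\textbf{Main obstacle.} I expect the hardest step to be the boundary case where \(\varphi\) becomes \(+\infty\) at some finite \(T\leq S\), or is finite only at isolated points. Here I would again use convexity. For \(s>T\), \(\varphi_i(s)\to\infty\), and convexity along the segment from a point \(t<T\) to \(s\) forces \(\varphi_i\) to be large on \([T',S]\) for \(T'\) close to \(T\). Meanwhile \(\varphi\) is lower semicontinuous at \(T\), so the contribution near \(T\) is handled by continuity of \(\varphi\) from the left on its domain. Combining the two bounds gives \(J_i(a)\to J(a)\). The statement for \(E_i\) follows because \(E_i=-J_i\) and \(E=-J\).
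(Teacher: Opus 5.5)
Your skeleton matches the paper's: restrict to a compact interval using $a<D_\infty(\rho\Vert\sigma)$, obtain $\limsup_i J_i(a)\le J(a)$ from pointwise convergence, and get the lower bound from convexity. The lower bound, however, has a genuine gap in the case you mention as $\varphi$ being finite only at isolated points.

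By convexity that case means $T=0$, i.e.\ $D_{1+s}(\rho\Vert\sigma)=+\infty$ for every $s>0$. It does occur (e.g.\ on $\ell^\infty(\mathbb{N})$ with $\rho_k\propto k^{-2}$, $\sigma_k=2^{-k}$), and it is admissible because then $D_\infty(\rho\Vert\sigma)=+\infty>a$. Here $J(a)=\varphi(0)$, so you must rule out $\varphi_i(s)-as$ dipping well below $\varphi_i(0)$ at small $s>0$.
\begin{itemize}
\item Your $[0,\delta]$ estimate needs $\varphi(2\delta)<\infty$, so it does not apply.
\item Your boundary argument only gives $\varphi_i\to+\infty$ uniformly on $[T',S]$ for $T'>0$. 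Lower semicontinuity of the limit $\varphi$ says nothing about $\varphi_i$ on $[0,T']$.
\end{itemize}
Convexity plus pointwise convergence cannot fill this in. On $\mathbb{C}^2$ take $\rho_i=(1-\frac1i,\frac1i)$ and $\sigma_i=(e^{i^3},\frac1i e^{-i^2})$. Then $Q_{1+s}(\rho_i\Vert\sigma_i)\ge\frac1i e^{si^2}\to\infty$ for each $s>0$, so $\varphi_i\to\varphi$ pointwise, and $\rho_i(\mathbf{1})=\rho(\mathbf{1})=1$. Yet at $s_i=i^{-5/2}$ both terms of $Q_{1+s_i}(\rho_i\Vert\sigma_i)$ tend to $0$, so $J_i(a)\to-\infty\neq0=J(a)$. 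The only hypothesis violated is $\sigma_i(\mathbf{1})\to\sigma(\mathbf{1})$, which your proof never uses.

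That hypothesis is the missing ingredient. Data processing under $\omega\mapsto\omega(\mathbf{1})$ gives $\varphi_i(s)\ge(1+s)\log\rho_i(\mathbf{1})-s\log\sigma_i(\mathbf{1})$. Hence $\varphi_i(s)-as\ge\varphi_i(0)-Cs$, with $C$ independent of $i$ for large $i$. Combined with the uniform divergence on $[t,S]$, this gives $\liminf_i J_i(a)\ge\varphi(0)-Ct$ for every $t>0$, which closes the case.

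Two smaller repairs are also needed.

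First, in the $[0,\delta]$ step, the chord through $(0,\varphi_i(0))$ and $(2\delta,\varphi_i(2\delta))$ lies \emph{above} $\varphi_i$ on $[0,2\delta]$. The lower bound comes from the secant through $\delta$ and $2\delta$ extended to the left. It bounds $\varphi_i$ on $[0,\delta]$ below by $\min\{\varphi_i(\delta),\,2\varphi_i(\delta)-\varphi_i(2\delta)\}$. Letting $\delta\to0$ then requires right-continuity of $\varphi$ at $0$. Convexity only gives $\varphi(0^+)\le\varphi(0)$; the reverse inequality again comes from the data-processing bound applied to $(\rho,\sigma)$.

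Second, near a finite $T>0$, the right device is the affine minorant $\varphi_i(x)\ge\varphi_i(v)+(x-v)\frac{\varphi_i(v)-\varphi_i(u)}{v-u}$ for $x\ge v>u$.
\begin{itemize}
\item With $u<v<T$ fixed, its coefficients converge, and it controls $[v,T+\varepsilon]$.
\item With $u<T<v$, its slope tends to $+\infty$, and it forces $\varphi_i\to+\infty$ uniformly on $[v,S]$.
\end{itemize}

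For comparison, the paper takes near-minimizers $s_i\in[0,S_0]$, passes to a subnet $s_i\to\tilde s$, and invokes $\liminf_i\varphi_i(s_i)\ge\varphi(\tilde s)$ as a consequence of convexity. At $\tilde s=0$ with $T=0$, that inequality also rests on the same data-processing bound rather than on convexity alone.
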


\subsection{Haagerup--Junge--Xu reduction}
\label{reduct}

In this part, we record the form of the Haagerup--Junge--Xu reduction method.

For information-theoretic questions involving finitely
many objects, it is reasonable to admit a normal faithful state, hence we can assume the underlying von Neumann algebra is $\sigma$-finite. Let $\mathcal M$ be a von Neumann algebra with a normal faithful
state $\varphi\in\mathcal M_*$.  Set
\[
    G=\bigcup_{m\geq 1}2^{-m}\mathbb Z,
\]
where $G$ is equipped with the discrete topology. Let $\lambda(G)$ be the left regular representation of $G$. Define the crossed product as
\begin{equation*}
    \mathcal R :=
    \mathcal M\rtimes_{\sigma^\varphi}G=\{\pi(\M),\lambda(G)\}^{\prime\prime}\subset \M\bar{\otimes}\mathcal{B}(\ell_2(G))
\end{equation*}
with the following embeddings:
\begin{align*}
\iota&:\M\to \mathcal{R},\quad \iota(x)=\sum_{g\in G}\sigma_{g^{-1}}^{\varphi}(x)\otimes |g\rangle\langle g|,\\
\lambda &: G\to \mathcal{R},\quad \lambda(g)(a\otimes b)=a\otimes \lambda(g)b,\quad a\in \mathcal{H},\quad b\in \ell_2(G).
\end{align*}
Here $\mathcal{H}$ is the Hilbert space in the standard form $(\M, \mathcal{H},J,\mathcal{P})$.  We identify
$\mathcal M$ with $\iota(\mathcal M)$ and view $\M\subset \mathcal{R}$ as a subalgebra. Then there is a normal faithful
conditional expectation
\(
    E:\mathcal R\longrightarrow\mathcal M
\)
determined on the algebraic crossed product by
\begin{equation}
\label{eq:canonical-expectation}
    E\bigl(\lambda(g)\iota(x)\bigr)
    =
    \begin{cases}
        x, & g=0,\\
        0, & g\neq 0.
    \end{cases}
\end{equation}
The functional
\(
    \widehat\varphi=\varphi\circ E
\) is then a faithful normal state on $\mathcal R$.

We shall use the following form of Haagerup--Junge--Xu reduction theorem
\cite[Theorem~2.1 and Lemma~2.7]{HaagerupJungeXu2010}.

\begin{theorem}[Haagerup--Junge--Xu reduction]
\label{thm:haagerup-reduction}
There exists an increasing sequence
$(\mathcal R_m)_{m\geq 1}$ of finite von Neumann subalgebras of
$\mathcal R$ and normal faithful conditional expectations
\[
    \Phi_m:\mathcal R\longrightarrow\mathcal R_m
\]
such that
\[
    \widehat\varphi\circ\Phi_m=\widehat\varphi,
    \qquad
    \sigma_t^{\widehat\varphi}\circ\Phi_m
    =
    \Phi_m\circ\sigma_t^{\widehat\varphi},
    \qquad t\in\mathbb R,
\]
and
\[
    \left(
        \bigcup_{m\geq 1}\mathcal R_m
    \right)''
    =
    \mathcal R.
\]
Moreover,
\[
    \Phi_m(x)\longrightarrow x
\]
in the $\sigma$-strong topology for every $x\in\mathcal R$.
Each $\mathcal R_m$ is $\sigma$-finite and finite, and therefore
admits a normal faithful finite trace.
\end{theorem}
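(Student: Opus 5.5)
The plan is to follow the construction of \cite{HaagerupJungeXu2010}: on $\mathcal R$ we build a sequence of perturbed faithful normal states $\widehat\varphi_m$ whose modular groups are \emph{periodic}. Each $\mathcal R_m$ is then the centralizer of $\widehat\varphi_m$, and each $\Phi_m$ is the average over one period of the modular group.

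\textbf{Step 1: the generators $a_m$.} Recall that for $g\in G$ the modular group of the dual state is implemented by the translation unitaries,
\[
\sigma^{\widehat\varphi}_g=\operatorname{Ad}\lambda(g).
\]
I will check this directly from the formula for $\iota$ and from $\widehat\varphi=\varphi\circ E$. Since $\widehat\varphi(\lambda(g))=0$ for $g\neq0$, each $\lambda(g)$ lies in the centralizer $\mathcal R_{\widehat\varphi}$. For each $m$, use Borel functional calculus on the unitary $\lambda(2^{-m})$ to pick a self-adjoint $a_m\in\lambda(G)''$ with
\[
e^{i2^{-m}a_m}=\lambda(2^{-m}),
\qquad
\operatorname{spec}(a_m)\subset[0,2\pi 2^{m}).
\]
Then $a_m$ is bounded and commutes with $\mathcal R_{\widehat\varphi}$-elements coming from $\lambda(G)$. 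Set
\[
\widehat\varphi_m=\widehat\varphi(e^{-a_m}\,\cdot\,)
\]
and renormalize it to a state. Because $a_m$ lies in the centralizer of $\widehat\varphi$, Connes' cocycle formula gives
\[
\sigma^{\widehat\varphi_m}_t=\operatorname{Ad}(e^{-ita_m})\circ\sigma_t^{\widehat\varphi}.
\]
At $t=2^{-m}$ this is $\operatorname{Ad}\lambda(2^{-m})^{*}\operatorname{Ad}\lambda(2^{-m})=\mathrm{id}$. So $\sigma^{\widehat\varphi_m}$ is $2^{-m}$-periodic. The branch choice of the logarithm has to be made consistently in $m$ (for instance via $\lambda(2^{-m-1})^2=\lambda(2^{-m})$) so that the periodicities nest.

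\textbf{Step 2: the subalgebras and expectations.} Define
\[
\mathcal R_m=\mathcal R_{\widehat\varphi_m},
\qquad
\Phi_m(x)=2^{m}\int_0^{2^{-m}}\sigma^{\widehat\varphi_m}_t(x)\,dt.
\]
By Takesaki's theorem this is a normal faithful conditional expectation onto the centralizer. The restriction of $\widehat\varphi_m$ to $\mathcal R_m$ is a faithful normal tracial state, so $\mathcal R_m$ is finite and $\sigma$-finite.

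The two compatibility relations follow because $a_m$ commutes with $\sigma^{\widehat\varphi}$ (it is fixed by $\operatorname{Ad}\lambda(g)$). Hence $\sigma^{\widehat\varphi_m}_t$ commutes with $\sigma^{\widehat\varphi}_s$, and averaging preserves $\widehat\varphi$, since $\widehat\varphi\circ\sigma^{\widehat\varphi_m}_t=\widehat\varphi$. Monotonicity $\mathcal R_m\subset\mathcal R_{m+1}$ should come from the nested periods: $\widehat\varphi_{m+1}$ differs from $\widehat\varphi_m$ by a function of $\lambda(2^{-m-1})$, which lies in $\mathcal R_{\widehat\varphi_m}$. I would verify this via the relative modular group.

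\textbf{Step 3: density and convergence.} For the main obstacle, compute $\Phi_m$ explicitly on generators $\lambda(g)\iota(x)$, using spectral subspaces of $\sigma^\varphi$ in $x$ and the spectral decomposition of $a_m$. The expected outcome is that $\Phi_m(y)\to y$ $\sigma$-strongly on the $^*$-algebraic crossed product. The mechanism is that $a_m$ has spectrum localizing on finer lattices, so the averaging kernel tends to a point mass. Then extend to all of $\mathcal R$ using Kaplansky density together with the uniform contractivity of $\Phi_m$ and $\widehat\varphi$-invariance, which gives $\sigma$-strong continuity estimates in $\|\cdot\|_{\widehat\varphi}$. Density $\bigl(\bigcup_m\mathcal R_m\bigr)''=\mathcal R$ then follows at once, because each $x$ is a $\sigma$-strong limit of $\Phi_m(x)\in\mathcal R_m$. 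I expect the spectral bookkeeping in this step, showing the averaged modular action converges, to be the hardest part. It is exactly \cite[Lemma~2.7]{HaagerupJungeXu2010}.
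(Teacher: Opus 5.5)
Your Steps 1 and 2 reproduce the Haagerup--Junge--Xu construction that the paper cites without giving its own proof: $b_m$ with spectrum in $[0,2\pi)$, $a_m=2^mb_m$, $\varphi_m=\widehat\varphi(e^{-a_m}\,\cdot\,)$, $\mathcal R_m$ its centralizer, and $\Phi_m$ the period average. Periodicity, the two compatibility relations, faithfulness and finiteness are argued correctly. The genuine gap is the monotonicity $\mathcal R_m\subset\mathcal R_{m+1}$. The reason you give is that $\varphi_{m+1}$ differs from $\varphi_m$ by a function of $\lambda(2^{-m-1})$ lying in $\mathcal R_{\varphi_m}$. That does not suffice. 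The nesting of periods is automatic and does not by itself relate two different fixed-point algebras. With $c=a_{m+1}-a_m$, the cocycle formula gives $\sigma^{\varphi_{m+1}}_t=\operatorname{Ad}(e^{-itc})\circ\sigma^{\varphi_m}_t$. On $\mathcal R_m$ this is $\operatorname{Ad}(e^{-itc})$, which is trivial only if $c\in\mathcal R_m'$; knowing $c\in\mathcal R_m$ is not enough. A general element of $\lambda(G)''$ is not in $\mathcal R_m'$. For instance, for $\mathcal M=M_2(\mathbb C)$ with a non-tracial $\varphi$, $\mathcal R_m$ contains nonzero elements $\iota(e_{12})q$ with $q$ a spectral projection in $\lambda(G)''$, and $\operatorname{Ad}\lambda(h)$ multiplies them by a nontrivial phase.

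The missing idea is the special form of $c$. With the principal branch $[0,2\pi)$ at every level, $2b_{m+1}-b_m\in\{0,2\pi\}$. Hence $c=2\pi 2^m p$ for a projection $p\in\lambda(G)''$, and $e^{i2^{-m-1}c}=e^{-i2^{-m-1}a_m}\lambda(2^{-m-1})=\1-2p$. Since $\sigma^{\varphi_m}_g=\operatorname{Ad}(e^{-iga_m}\lambda(g))$ for $g\in G$, every $x\in\mathcal R_m$ commutes with $\1-2p$, hence with $c$, and so $\sigma^{\varphi_{m+1}}_t(x)=x$. Uniqueness of $\widehat\varphi$-preserving expectations then also gives $\Phi_m\circ\Phi_{m+1}=\Phi_m$.

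Two smaller points. First, $\widehat\varphi(\lambda(g))=0$ is not why $\lambda(g)\in\mathcal R_{\widehat\varphi}$. The reason is $\sigma^{\widehat\varphi}_t(\lambda(g))=\lambda(g)$, i.e.\ $\widehat\varphi(\lambda(g)y)=\widehat\varphi(y\lambda(g))$, which uses $\varphi\circ\sigma^\varphi_g=\varphi$. Second, you defer Step~3 to the same source the paper cites, but your heuristic there is misleading. For $t\in[0,2^{-m}]$ the phase $ta_m$ sweeps all of $[0,2\pi)$, so $\operatorname{Ad}(e^{-ita_m})$ does not approach the identity. What $\Phi_m$ actually does is delete ``wrap-around'' spectral pieces whose $\widehat\varphi$-weight is of order $2^{-m}$ times the $\sigma^\varphi$-frequency. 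Once monotonicity is established, you can instead prove density of $\bigcup_m\mathcal R_m$ directly. Then $\Phi_m(x)\to x$ $\sigma$-strongly follows from martingale convergence in $L^2(\mathcal R,\widehat\varphi)$ together with $\|\Phi_m\|\le1$.
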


The maps $E$ and $\iota$ form a normal completely positive retract:
\[
    E\circ\iota=\operatorname{id}_{\mathcal M}.
\]
This observation allows us to lift information-theoretic quantities
from $\mathcal M$ to $\mathcal R$ and then recover them without loss.

% \begin{lemma}
% \label{lem:ucp-retract}
% Let $\mathcal A,\mathcal B$ be von Neumann algebras.
% Suppose that
% \[
% \iota:\mathcal A\to\mathcal B,
% \qquad
% E:\mathcal B\to\mathcal A
% \]
% are normal unital completely positive maps satisfying
% \[
% E\circ\iota=\operatorname{id}_{\mathcal A}.
% \]
% Then all quantities appearing in this paper which satisfy
% data processing in both directions are invariant under this retract.
% In particular,
% \[
% D
% (\rho\circ E\Vert\sigma\circ E)
% =
% D
% (\rho\Vert\sigma),
% \]
% and similarly for $D_{\max}$, purified distance, and the smoothing
% quantity $\Delta$.
% \end{lemma}

\begin{proposition}
\label{prop:crossed-product-retract}
For $\omega\in\mathcal M_*^+$, define its canonical lift by
\[
    \widehat\omega=\omega\circ E\in\mathcal R_*^+.
\]
Then, for every $\rho,\sigma\in\mathcal M_*^+$ and every
$\alpha>1$,
\begin{equation*}
    Q_\alpha
    (\widehat\rho\Vert\widehat\sigma)
    =
    Q_\alpha
    (\rho\Vert\sigma),\quad D_\alpha
    (\widehat\rho\Vert\widehat\sigma)
    =
    D_\alpha
    (\rho\Vert\sigma),\quad D_1
    (\widehat\rho\Vert\widehat\sigma)
    =
    D_1
    (\rho\Vert\sigma).
\end{equation*}
For $\rho,\sigma\in\mathcal S_{\leq}(\mathcal M)$,
\begin{align*}
    P(\widehat\rho,\widehat\sigma)
    =
    P(\rho,\sigma).
\end{align*}
For $\rho\in\mathcal S(\mathcal M)$, $\sigma\in \M_*^+$ and $\lambda\in \mathbb{R}$,
$$   \Delta
    (\widehat\rho\Vert\widehat\sigma,\lambda)
    =
    \Delta
    (\rho\Vert\sigma,\lambda).$$
\end{proposition}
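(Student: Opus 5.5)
The proof is a two-sided data-processing argument built on the normal UCP retract $E\circ\iota=\operatorname{id}_{\mathcal M}$. Both $E:\mathcal R\to\mathcal M$ and $\iota:\mathcal M\to\mathcal R$ are normal unital completely positive maps, so their preadjoints are quantum channels in the sense of Sect.~\ref{pre}. The lift is exactly the preadjoint of $E$: $\widehat\omega=\omega\circ E=E_*(\omega)$. Conversely, $\iota_*(\widehat\omega)=\omega\circ E\circ\iota=\omega$. Thus $E_*$ sends $(\rho,\sigma)$ to $(\widehat\rho,\widehat\sigma)$, and $\iota_*$ sends it back.

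For the divergences we apply the data-processing inequalities recorded in the preliminaries twice. Applying them to $E_*$ gives $Q_\alpha(\widehat\rho\Vert\widehat\sigma)\le Q_\alpha(\rho\Vert\sigma)$. Applying them to $\iota_*$ gives $Q_\alpha(\rho\Vert\sigma)=Q_\alpha(\iota_*\widehat\rho\Vert\iota_*\widehat\sigma)\le Q_\alpha(\widehat\rho\Vert\widehat\sigma)$, so equality holds. The identity for $D_\alpha$ follows directly, and $D_1$ is handled the same way using data processing of the Araki relative entropy. All statements are understood in the extended-valued sense, so the case $+\infty$ needs no separate treatment.

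For the purified distance the same sandwich applies. Take $\rho,\sigma\in\mathcal S_{\le}(\mathcal M)$. Their lifts satisfy $\widehat\rho(\1)=\rho(E(\1))=\rho(\1)$, so they are again subnormalized. Monotonicity of the fidelity under $E_*$ and $\iota_*$ gives $F(\widehat\rho,\widehat\sigma)\ge F(\rho,\sigma)$ and $F(\rho,\sigma)\ge F(\widehat\rho,\widehat\sigma)$. Hence $F$, and therefore $P$, are preserved.

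For the smoothing quantity we first check the hypotheses: $\widehat\rho\in\mathcal S(\mathcal R)$ because $E$ is unital, and $\widehat\sigma\in\mathcal R_*^+$. The preliminaries already show that $\Delta$ is monotone under any channel. Applying this to $E_*$ gives $\Delta(\widehat\rho\Vert\widehat\sigma,\lambda)\le\Delta(\rho\Vert\sigma,\lambda)$. Applying it to $\iota_*$ gives $\Delta(\rho\Vert\sigma,\lambda)\le\Delta(\widehat\rho\Vert\widehat\sigma,\lambda)$. The monotonicity argument only requires $\sigma\neq0$; if $\sigma=0$, both sides equal the value of an infimum over $\{\widetilde\rho=0\}$, which is $P(\rho,0)=1$ on both algebras.

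The only step needing care is verifying that the preliminaries' data-processing statements apply to $\iota$ and $E$ as maps between different algebras. Concretely, we must check that $\iota$ is a normal unital $^*$-homomorphism (hence UCP) and that $E$ is a normal UCP conditional expectation. Both facts are standard for crossed products by discrete groups, so no further obstacle is expected.
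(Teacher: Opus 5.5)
Your proposal is correct and is essentially the paper's own proof: a two-sided data-processing argument through the retract $E\circ\iota=\operatorname{id}_{\mathcal M}$, where $E_*$ gives one inequality and $\iota_*$ (using $\widehat\omega\circ\iota=\omega$) gives the reverse. The only cosmetic difference is that for $\Delta$ you cite the channel monotonicity from the preliminaries, whereas the paper unpacks it directly by lifting admissible candidates $\widetilde\rho\mapsto\widetilde\rho\circ E$ and restricting $\eta\mapsto\eta\circ\iota$; that direct version does not need $\sigma\neq 0$, so your separate treatment of $\sigma=0$ is harmless but unnecessary.
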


\begin{proof}
Applying data processing for the sandwiched R\'enyi divergence to
$E:\mathcal R\to\mathcal M$ gives
\[
    Q_\alpha
    (\widehat\rho\Vert\widehat\sigma)
    \leq
    Q_\alpha
    (\rho\Vert\sigma).
\]
Applying data processing to
$\iota:\mathcal M\to\mathcal R$ and using
$\widehat\rho\circ\iota=\rho$ and
$\widehat\sigma\circ\iota=\sigma$ gives the reverse inequality. The results for relative entropy and purified distance follow by applying the same two-sided argument.

% For \eqref{eq:positive-part-retract}, note that
% \[
% \begin{aligned}
%     \bigl\|
%         (\widehat\rho-c\widehat\sigma)_+
%     \bigr\|
%     &=
%     \sup_{0\leq X\leq 1_{\mathcal R}}
%     (\rho-c\sigma)(E(X))\\
%     &\leq
%     \sup_{0\leq x\leq 1_{\mathcal M}}
%     (\rho-c\sigma)(x).
% \end{aligned}
% \]
% The reverse inequality follows by taking $X=\iota(x)$.

To prove the invariance of the smoothing quantity, first let
$\widetilde\rho\in\mathcal S_{\leq}(\mathcal M)$ satisfy
$\widetilde\rho\leq 2^\lambda\sigma$.  Then
\[
    \widehat{\widetilde\rho}
    \leq
    2^\lambda\widehat\sigma,
\]
and the invariance of the purified distance yields
\[
    \Delta
    (\widehat\rho\Vert\widehat\sigma,\lambda)
    \leq
    \Delta
    (\rho\Vert\sigma,\lambda).
\]
Conversely, if
$\eta\in\mathcal S_{\leq}(\mathcal R)$ satisfies
$\eta\leq 2^\lambda\widehat\sigma$, then
\[
    \eta\circ\iota
    \leq
    2^\lambda\sigma.
\]
By monotonicity of purified distance under restriction,
\[
    P_{\mathcal M}
    (\rho,\eta\circ\iota)
    \leq
    P_{\mathcal R}
    (\widehat\rho,\eta).
\]
Taking the infimum over $\eta$ proves the reverse inequality.
\end{proof}

For the approximation step, we choose the reference
state in the crossed product determined by the second functional $\sigma$. More
precisely, assume that $\sigma$ is faithful, we take
\[
    \varphi=\frac{\sigma}{\sigma(\1)}.
\]
Then
\[
    \widehat\sigma=\sigma\circ E
    =
    \sigma(\1)\widehat\varphi,
\]
and hence
\begin{equation}
\label{eq:sigma-preserved-Phi-m}
    \widehat\sigma\circ\Phi_m
    =
    \widehat\sigma
    \qquad
    \forall m\geq 1.
\end{equation}

Put
\[
    \rho_m:=\widehat\rho|_{\mathcal R_m},
    \qquad
    \sigma_m:=\widehat\sigma|_{\mathcal R_m}.
\]
We shall frequently use the following martingale recovery property for the information theoretic quantities under increasing von Neumann subalgebras.
\begin{proposition}[Martingale recovery]
\label{prop:haagerup-information-approximation}
With the notation above, for every $\alpha \geq 1$,
\begin{equation*}
    D_\alpha
    (\rho_m\Vert\sigma_m)
    \nearrow
    D_\alpha
    (\widehat\rho\Vert\widehat\sigma)
    =
    D_\alpha
    (\rho\Vert\sigma).
\end{equation*}
and for $\alpha>1$,
\begin{equation*}
   Q_\alpha
    (\rho_m\Vert\sigma_m)
    \nearrow
    Q_\alpha
    (\widehat\rho\Vert\widehat\sigma)
    =
    Q_\alpha
    (\rho\Vert\sigma).
\end{equation*}

Besides, for every $\lambda\in\mathbb R$, we also have
\begin{align*}
    \Delta
    (\rho_m\Vert\sigma_m,\lambda)
    \nearrow\Delta_{\mathcal R}
    (\widehat\rho\Vert\widehat\sigma,\lambda)=
    \Delta
    (\rho\Vert\sigma,\lambda).
\end{align*}

\end{proposition}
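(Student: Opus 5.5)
The proof splits into three parts: monotonicity in $m$, identification of the limit for $D_\alpha$ and $Q_\alpha$, and identification of the limit for the smoothing quantity.

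\emph{Monotonicity.} Since $\mathcal R_m\subseteq\mathcal R_{m+1}\subseteq\mathcal R$, the restriction maps are preadjoints of the inclusion channels $\mathcal R_m\hookrightarrow\mathcal R_{m+1}\hookrightarrow\mathcal R$. These inclusions are normal UCP maps. Data processing for $Q_\alpha$, $D_\alpha$ ($\alpha\geq1$), the purified distance and $\Delta$, recorded in the preliminaries, then gives that all three sequences are nondecreasing in $m$ and bounded by their values for $(\widehat\rho,\widehat\sigma)$. The equalities with the values for $(\rho,\sigma)$ are Proposition~\ref{prop:crossed-product-retract}. It remains only to prove that the limits are attained.

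\emph{Limit for $D_\alpha$ and $Q_\alpha$.} Here I would use the conditional expectations $\Phi_m$ and the invariance \eqref{eq:sigma-preserved-Phi-m}. Set $\rho^{(m)}:=\rho_m\circ\Phi_m=\widehat\rho\circ\Phi_m\in\mathcal R_*^+$. Since $\widehat\sigma\circ\Phi_m=\widehat\sigma$, the inclusion $\mathcal R_m\subseteq\mathcal R$ and $\Phi_m$ form a retract pair. Arguing exactly as in Proposition~\ref{prop:crossed-product-retract}, this gives
\[
D_\alpha(\rho_m\Vert\sigma_m)=D_\alpha(\rho^{(m)}\Vert\widehat\sigma).
\]
Because $\Phi_m(x)\to x$ $\sigma$-strongly and $\|\Phi_m\|=1$, we get $\rho^{(m)}\to\widehat\rho$ in the predual norm. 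To see this, note that $\widehat\rho(\Phi_m(x)-x)\to0$ pointwise. Upgrading to norm convergence uses the standard-form representative: $\widehat\rho\circ\Phi_m$ converges weakly, is positive, and has constant total mass, and for positive normal functionals weak convergence with convergence of norms implies norm convergence. Joint lower semicontinuity (property 4) then yields
\[
\liminf_m D_\alpha(\rho^{(m)}\Vert\widehat\sigma)\geq D_\alpha(\widehat\rho\Vert\widehat\sigma).
\]
Combined with monotonicity, this gives the claimed limit. The statement for $Q_\alpha$ follows by applying the increasing map $t\mapsto 2^{(\alpha-1)t}$. For $\alpha=1$, lower semicontinuity of the Araki relative entropy gives the same conclusion.

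\emph{Limit for $\Delta$.} This is the step I expect to be the main obstacle, because $\Delta$ is an infimum and is not obviously lower semicontinuous. Let $\delta:=\lim_m\Delta(\rho_m\Vert\sigma_m,\lambda)$. I first transfer to $\mathcal R$: each optimizer $\eta_m\in\mathcal S_\leq(\mathcal R_m)$ with $\eta_m\leq2^\lambda\sigma_m$ lifts to $\eta_m\circ\Phi_m\leq2^\lambda\widehat\sigma$, with
\[
P(\widehat\rho\circ\Phi_m,\eta_m\circ\Phi_m)=P(\rho_m,\eta_m).
\]
The family $\{\eta\in\mathcal R_*^+:\eta\leq2^\lambda\widehat\sigma\}$ is $\sigma(\mathcal R_*,\mathcal R)$-compact. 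This holds because the family is dominated by a fixed normal functional, so it is uniformly absolutely continuous, and Akemann's criterion applies. Therefore a subnet of $\eta_m\circ\Phi_m$ converges weakly to some $\eta\leq2^\lambda\widehat\sigma$ with $\eta(\1)\leq1$. Meanwhile $\widehat\rho\circ\Phi_m\to\widehat\rho$ in norm.

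The fidelity $F(\cdot,\cdot)$ is jointly weakly upper semicontinuous. One way to see this is as an infimum of continuous functions via the Alberti variational formula $F=\inf_{y>0}\ldots$ involving $\omega(y)\,\eta(y^{-1})$. Alternatively, one can use $\sqrt{\cdot}$ concavity and norm continuity in the first argument. Either way this gives
\[
F(\widehat\rho,\eta)\geq\limsup F(\widehat\rho\circ\Phi_m,\eta_m\circ\Phi_m),
\]
hence $P(\widehat\rho,\eta)\leq\delta$. Including the subnormalization correction term is harmless, since the total mass converges along the subnet. This yields $\Delta(\widehat\rho\Vert\widehat\sigma,\lambda)\leq\delta$ and closes the proof. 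If exact minimizers are not available, I would use $\varepsilon$-optimizers and let $\varepsilon\to0$.
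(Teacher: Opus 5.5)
Your architecture is sound. The retract identity $D_\alpha(\rho_m\Vert\sigma_m)=D_\alpha(\widehat\rho\circ\Phi_m\Vert\widehat\sigma)$, which follows from $\widehat\sigma\circ\Phi_m=\widehat\sigma$, combined with monotonicity and lower semicontinuity, is a legitimate self-contained substitute for the paper's citations of Hiai--Mosonyi and Fawzi--Gao--Rahaman.

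\textbf{The faulty step.} Your justification of $\Vert\widehat\rho\circ\Phi_m-\widehat\rho\Vert_{\mathcal R_*}\to0$ is false as a general principle. Weak convergence of positive normal functionals together with convergence of their masses does not imply norm convergence. In $L^1[0,1]$, take $f_n(x)=1+\sin(2\pi nx)$. These are positive with $\int f_n=1$ and $f_n\to1$ weakly, yet $\Vert f_n-1\Vert_1=2/\pi$. The standard form does not rescue this either: the $\mathcal P$-representatives $\sqrt{f_n}$ converge weakly to the constant $2\sqrt2/\pi$, not to $1$. Since for $\alpha>1$ you use the norm lower semicontinuity (property~4), this step must be repaired.

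\textbf{How to repair it.} The fact itself is true: it is the noncommutative $L^1$ martingale convergence theorem that the paper invokes. You can also prove it directly. For $a\in\mathcal R_k$ and $m\geq k$, bimodularity and $\widehat\varphi\circ\Phi_m=\widehat\varphi$ give $\widehat\varphi(\Phi_m(x)a)=\widehat\varphi(xa)$. Hence the functionals $\widehat\varphi(\,\cdot\,a)$ with $a\in\bigcup_k\mathcal R_k$ are eventually fixed by $\omega\mapsto\omega\circ\Phi_m$. These functionals are norm dense in $\mathcal R_*$, by faithfulness of $\widehat\varphi$, the density theorem for $\bigcup_k\mathcal R_k$, and the bound $\Vert\widehat\varphi(\,\cdot\,a)-\widehat\varphi(\,\cdot\,b)\Vert\leq\Vert\xi_{\widehat\varphi}\Vert\,\Vert(a-b)\xi_{\widehat\varphi}\Vert$. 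The maps $\omega\mapsto\omega\circ\Phi_m$ are contractions, so an approximation argument finishes.

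\textbf{Your route for $\Delta$ versus the paper's.} Here you take a genuinely different route.

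The paper uses the triangle inequality for $P$ to obtain the quantitative bound $\Delta_{\mathcal R}(\widehat\rho\Vert\widehat\sigma,\lambda)\leq P(\widehat\rho,\widehat\rho\circ\Phi_m)+\Delta(\rho_m\Vert\sigma_m,\lambda)$. This requires norm martingale convergence and the triangle inequality for the purified distance on subnormalized functionals, and it yields an explicit error term.

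Your argument is soft compactness instead. The order interval $\{\eta:0\leq\eta\leq2^\lambda\widehat\sigma\}$ is $\sigma(\mathcal R_*,\mathcal R)$-compact: it is weak$^*$-closed and bounded in $\mathcal R^*$, and all its elements are normal, so you do not even need Akemann. Alberti's formula writes the square of the Uhlmann part of $F$ as $\inf_x\omega(x)\eta(x^{-1})$ over invertible $x\in\mathcal R_+$, which makes $F$ jointly weakly upper semicontinuous.

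This is correct. Along the Alberti route it needs only the weak convergence $\widehat\rho\circ\Phi_m\to\widehat\rho$, which is immediate from $\Phi_m(x)\to x$ $\sigma$-strongly, so this part of your proof does not depend on the faulty step. It also shows that the infimum defining $\Delta_{\mathcal R}$ is attained, but it gives no rate.
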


\begin{proof}
The convergence
for $D_\alpha$ and $Q_\alpha$ with $\alpha>1$ is the martingale convergence
theorem for the sandwiched R\'enyi divergence
\cite[Theorem~3.1]{HiaiMosonyi2023}. For $D_1$, one can refer to \cite[Proposition 2.2]{Gao2025}.

It remains to prove the approximation of the smoothing quantity.  Restriction from
$\mathcal R_{m+1}$ to $\mathcal R_m$ is a normal quantum channel.
Therefore
\[
    \Delta
    (\rho_m\Vert\sigma_m,\lambda)
    \leq
    \Delta
    (\rho_{m+1}\Vert\sigma_{m+1},\lambda)
    \leq
    \Delta
    (\widehat\rho\Vert\widehat\sigma,\lambda).
\]
The state-preserving conditional expectations $\Phi_m$ extend to
contractions on $\mathcal{R}_*$.
The noncommutative martingale convergence theorem therefore gives
\[\Vert\widehat\rho-
         \widehat\rho\circ\Phi_m\Vert_{\RR_*} \to 0
.\]
Put
\(
    \varepsilon_m=
    P_{\mathcal R}
    \bigl(
        \widehat\rho,
        \widehat\rho\circ\Phi_m
    \bigr)
\). By the predual-norm convergence and the continuity of
purified distance,
\[
    \varepsilon_m\longrightarrow 0^+.
\]
Let
$\omega_m\in\mathcal S_{\leq}(\mathcal R_m)$ satisfy
\[
    \omega_m\leq 2^\lambda\sigma_m.
\]
Then its lift
\(
    \omega_m^\uparrow:=\omega_m\circ\Phi_m
\)
satisfies, by \eqref{eq:sigma-preserved-Phi-m},
\[
    \omega_m^\uparrow
    \leq
    2^\lambda\sigma_m\circ\Phi_m
    =
    2^\lambda\widehat\sigma.
\]
Using the triangle inequality and
Proposition~\ref{prop:crossed-product-retract} for the retract
$\Phi_m:\mathcal R\to\mathcal R_m$, we obtain
\[
\begin{aligned}
    \Delta_{\mathcal R}
    (\widehat\rho\Vert\widehat\sigma,\lambda)
    &\leq
    P_{\mathcal R}
    (\widehat\rho,\omega_m^\uparrow)\\
    &\leq
    P_{\mathcal R}
    (
        \widehat\rho,
        \widehat\rho\circ\Phi_m
    )
    +
    P_{\mathcal R}
    (
        \widehat\rho\circ\Phi_m,
        \omega_m\circ\Phi_m
    )\\
    &=
    \varepsilon_m
    +
    P_{\mathcal R_m}(\rho_m,\omega_m).
\end{aligned}
\]
Taking the infimum over $\omega_m$ and then letting $m\to\infty$
proves the desired result.
\end{proof}

% \begin{remark}
% \label{rem:continuous-core-warning}
% The preceding argument uses the discrete crossed product
% \eqref{eq:haagerup-discrete-crossed-product} and the finite
% subalgebras $\mathcal R_m$.  It is not sufficient merely to embed
% $\mathcal M$ into its continuous core.  Indeed, Haagerup's $L^p(\mathcal M)$ is generally different
% from the ordinary tracial space $L^p(\mathcal C,\tau)$ of the
% continuous core $\mathcal C=\mathcal M\rtimes_{\sigma^\varphi}
% \mathbb R$; a nonzero element of $L^p(\mathcal M)$ need not have
% finite $L^p(\mathcal C,\tau)$-norm.  The finite-algebra
% approximation in Theorem~\ref{thm:haagerup-reduction} is precisely
% what permits one to use tracial arguments without making this
% identification.
% \end{remark}

To conclude, the reduction scheme used below is therefore
\[
    \mathcal M
    \xrightarrow{\ \omega\mapsto\widehat{\omega}\circ E\ }
    \mathcal R
    \xrightarrow{\ \text{restriction}\ }
    \mathcal R_m,
\]
followed by the limit $m\to\infty$ and the invariance of information theoretic quantities between $\mathcal R$ and $\mathcal M$.  Since each $\mathcal R_m$ is finite,
all results established for semifinite von Neumann
algebras apply to $\mathcal R_m$.

\section{Smoothing Exponents in von Neumann algebras}\label{exponent}
In this section, we establish the exponent for smoothing the max-relative entropy on von Neumann algebras. By the Haagerup--Junge--Xu reduction method, we first prove the results in the semifinite von Neumann algebra setting,  then we extend the results to general von Neumann algebras.
\subsection{A semifinite Datta--Renner type estimate}
The Datta--Renner type
estimate \cite[Theorem~5]{RegulaLamiDatta2026} for the purified distance gives an upper bound on the exponent in finite-dimensional cases. In this section we prove the corresponding fidelity estimate in
semifinite von Neumann algebra setting.

We first prove the following semifinite gentle measurement estimate.

\begin{proposition}
\label{lem:tracial-gentle}
Let \(\rho\in \mathcal{S}(\mathcal M,\tau)\). Let
\(G\in\mathcal M\) such that \(0\le G\le \mathbf{1}\).  Suppose that
\[
    \tau(\rho G^2)\ge 1-\varepsilon
\]
for some \(\varepsilon\in[0,1]\).  Then
\[
    F(\rho,G\rho G)
    \ge 1-\varepsilon
    \ge (1-\varepsilon)^2 .
\]
Moreover, if \(p=\tau(G\rho G)=\tau(\rho G^2)>0\), then
\[
    F\!\left(\rho,\frac{G\rho G}{p}\right)
    \ge \sqrt{1-\varepsilon}
    \ge 1-\varepsilon .
\]
\end{proposition}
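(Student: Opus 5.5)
We plan to work directly with the semifinite fidelity formula
\[
F(\rho,\sigma)=\tau\bigl(|\rho^{1/2}\sigma^{1/2}|\bigr)+\sqrt{(1-\tau(\rho))(1-\tau(\sigma))}
\]
recalled above. We discard the nonnegative second term and then bound the trace-norm term from below by a linear functional. For \(\sigma=G\rho G\), the key observation is that \(G\rho G=(G\rho^{1/2})(G\rho^{1/2})^*\), so \((G\rho G)^{1/2}=|\rho^{1/2}G|\), and by polar decomposition \(\rho^{1/2}G=u|\rho^{1/2}G|\) with a partial isometry \(u\in\M\). Hence
\[
\tau\bigl(|\rho^{1/2}(G\rho G)^{1/2}|\bigr)\ \ge\ \bigl|\tau\bigl(\rho^{1/2}(G\rho G)^{1/2}v\bigr)\bigr|
\]
for every contraction \(v\in\M\), using the duality \(\|x\|_1=\sup_{\|v\|\le1}|\tau(xv)|\) in \(L^1(\M,\tau)\). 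The goal is to pick \(v\) so that the right-hand side becomes \(\tau(\rho^{1/2}G\rho^{1/2})=\tau(\rho G)\).

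Concretely, we write \(|\rho^{1/2}G|=u^*\rho^{1/2}G\) and obtain
\[
\tau\bigl(|\rho^{1/2}(G\rho G)^{1/2}|\bigr)=\bigl\|\rho^{1/2}|\rho^{1/2}G|\bigr\|_1=\bigl\|\rho^{1/2}u^*\rho^{1/2}G\bigr\|_1 .
\]
An alternative that sidesteps the polar-decomposition bookkeeping is to use the variational (Uhlmann-type) characterization \(\tau(|a^*b|)\ge|\tau(a^*b)|\) applied to the factorizations \(\rho=(\rho^{1/2})(\rho^{1/2})^*\) and \(G\rho G=(G\rho^{1/2})(G\rho^{1/2})^*\). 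Root fidelity dominates the overlap of any pair of "purifications" \(X,Y\) with \(XX^*=\rho\) and \(YY^*=G\rho G\). Indeed, \(X=\rho^{1/2}w\) and \(Y=(G\rho G)^{1/2}w'\) for partial isometries \(w,w'\), so \(|\tau(X^*Y)|\le\|\rho^{1/2}(G\rho G)^{1/2}\|_1\). Taking \(X=\rho^{1/2}\) and \(Y=G\rho^{1/2}\) gives
\[
F(\rho,G\rho G)\ \ge\ \tau(\rho^{1/2}G\rho^{1/2})=\tau(\rho G).
\]
Since \(0\le G\le\1\) implies \(G\ge G^2\), we get \(\tau(\rho G)\ge\tau(\rho G^2)\ge 1-\varepsilon\). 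The bound \((1-\varepsilon)\ge(1-\varepsilon)^2\) is trivial on \([0,1]\).

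For the normalized version, \(F\) is homogeneous of degree \(1/2\) in the second argument on the \(\tau(|\cdot|)\) part. When the second argument \(G\rho G/p\) has trace one, the correction term vanishes because \(\tau(\rho)=1\). Therefore
\[
F\!\left(\rho,\tfrac{G\rho G}{p}\right)=p^{-1/2}\,\tau\bigl(|\rho^{1/2}(G\rho G)^{1/2}|\bigr)\ge p^{-1/2}\tau(\rho G)\ge p^{-1/2}p=\sqrt p .
\]
Here we again used \(\tau(\rho G)\ge\tau(\rho G^2)=p\). Finally, \(p\ge1-\varepsilon\) gives \(\sqrt p\ge\sqrt{1-\varepsilon}\ge1-\varepsilon\).

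The main obstacle is justifying, in the \(\tau\)-measurable setting, the inequality \(\|\rho^{1/2}(G\rho G)^{1/2}\|_1\ge|\tau(\rho^{1/2}G\rho^{1/2})|\) via the partial isometry linking \(G\rho^{1/2}\) to \((G\rho G)^{1/2}\). We would handle this with polar decomposition in \(L^0(\M,\tau)\): \(G\rho^{1/2}=(G\rho G)^{1/2}w\) with \(w\in\M\) a partial isometry, and \(\rho^{1/2}\in L^2\) so that all products lie in \(L^1\). The required inequality is then tracial Hölder, \(|\tau(xw)|\le\|x\|_1\|w\|_\infty\). The identity \(\tau(G\rho G)=\tau(\rho G^2)\) is trace cyclicity for \(\rho\in L^1\) and \(G\in\M\).
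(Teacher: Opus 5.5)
Your proof is correct. Its skeleton is the same as the paper's: reduce to the trace-norm term of the fidelity, compare it with \(\tau(\rho G)\), use \(G\ge G^2\), and finish the normalized case by homogeneity. The difference is in the key step.

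The paper computes the fidelity exactly. It swaps the factors using \(\tau(|ab|)=\tau(|ba|)\) for positive \(\tau\)-measurable \(a,b\). It then observes that \(\rho^{1/2}G\rho G\rho^{1/2}=(\rho^{1/2}G\rho^{1/2})^2\) with \(\rho^{1/2}G\rho^{1/2}\ge0\), so the square root is explicit. This gives the equality \(F(\rho,G\rho G)=\tau(\rho^{1/2}G\rho^{1/2})=\tau(\rho G)\).

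You instead derive only the lower bound \(F(\rho,G\rho G)\ge\tau(\rho G)\). You get it from trace-norm duality, inserting the partial isometry \(w\) from the polar decomposition \(G\rho^{1/2}=(G\rho G)^{1/2}w\); this is an Uhlmann-type overlap bound.

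What each route buys: yours uses only tracial H\"older and polar decomposition in \(L^0(\M,\tau)\), and never has to identify \(|(G\rho G)^{1/2}\rho^{1/2}|\) explicitly. The paper's route is shorter and shows that your inequality is actually an equality. Since the statement only needs the lower bound, either argument suffices.
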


\begin{proof}
Since \(G\ge0\) and $\rho$ is a state, we have
\[
    F(\rho,G\rho G)=\tau\left(\Big|\rho^{1/2}(G\rho G)^{1/2}\Big|\right).\]
Applying $\tau(|ab|)=\tau(|ba|)$ for $a,b$ positive and $\tau$-measurable, we get
\[
     F(\rho,G\rho G)=\tau\left(\Big|(G\rho G)^{1/2}\rho^{1/2}\Big|\right)
    =\tau\!\left((\rho^{1/2}G\rho G\rho^{1/2})^{1/2}\right)
    =
    \tau(\rho^{1/2}G\rho^{1/2})
    =
    \tau(\rho G).
\]
Since \(0\le G\le \mathbf{1}\), we have \(G^2\le G\), hence
\[
    \tau(\rho G)\ge \tau(\rho G^2)\ge 1-\varepsilon .
\]
This proves the first estimate.  For the normalized state, homogeneity
of fidelity in the second argument gives
\[
    F\!\left(\rho,\frac{G\rho G}{p}\right)
    =
    \frac{F(\rho,G\rho G)}{\sqrt p}
    \ge
    \frac{p}{\sqrt p}
    =
    \sqrt p
    \ge
    \sqrt{1-\varepsilon}
    \ge
    1-\varepsilon .
\]
\end{proof}

We next introduce the Kubo-Ando geometric mean in the bounded setting, which will be applied to elements of $L^1(\mathcal{M},\tau)_+\cap \mathcal{M}$.
Recall that for bounded positive invertible elements \(X,Y\in
\mathcal{B}(\mathcal{H})\), the Kubo--Ando geometric mean is
\[
    X\#Y
    :=
    X^{1/2}\left(X^{-1/2}YX^{-1/2}\right)^{1/2}X^{1/2}.
\]
Equivalently, if \(0\le b\le c\in L^1(\mathcal{M},\tau)_+\cap \mathcal{M}\) and \(c\) is invertible, then
\begin{align}\label{inver}
    G=(c^{-1})\# b
    =
    c^{-1/2}\left(c^{1/2}bc^{1/2}\right)^{1/2}c^{-1/2}
\end{align}
is the unique positive solution of
\(
    GcG=b
\).
Moreover, by monotonicity of the geometric mean,
\begin{align}\label{contract}
    0\le G\le (c^{-1})\# c = \mathbf{1} .
\end{align}

\begin{lemma}
\label{lem:bounded-tracial-DR}
Let \(a,b,r\in L^1(\mathcal M,\tau)_+\cap\mathcal M\) satisfy
\[
    \tau(a)=1, \quad a\le b+r ,\quad \tau(r)=\varepsilon<1.
\]
Also assume that $b+r$ is invertible. Then there
exists
\(
    \widetilde a
\)
such that
\[
    \widetilde a\in \mathcal{S}_{\leq}(\mathcal M,\tau),\quad \widetilde a\le b,
\]
and
\[
    F(a,\widetilde a)
    \ge (1-\varepsilon)^2 .
\]
\end{lemma}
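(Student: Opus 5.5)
Set $c=b+r$ and let $G$ be the geometric-mean operator from \eqref{inver}, $G=(c^{-1})\#b$. By construction $GcG=b$, and \eqref{contract} gives $0\le G\le\1$. The candidate is $\widetilde a:=GaG$. The main points to verify are subnormalization, the domination $\widetilde a\le b$, and the fidelity bound.

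\emph{Domination and subnormalization.} Since $a\le c$ and conjugation by $G$ preserves order, we get $\widetilde a=GaG\le GcG=b$. All operators involved are bounded and lie in $L^1(\M,\tau)$, because $G\in\M$ and $a\in L^1\cap\M$; hence $\widetilde a\in L^1(\M,\tau)_+$. Moreover $\tau(\widetilde a)=\tau(aG^2)\le\tau(a)=1$ because $G^2\le\1$. So $\widetilde a\in\mathcal S_{\le}(\M,\tau)$.

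\emph{Fidelity.} By Proposition~\ref{lem:tracial-gentle} (first estimate), it suffices to show $\tau(aG^2)\ge 1-\varepsilon$; then $F(a,GaG)\ge 1-\varepsilon\ge(1-\varepsilon)^2$. Directly, $\tau(aG^2)=\tau(a)-\tau(a(\1-G^2))$, and we need to bound $\tau(a(\1-G^2))$ by $\varepsilon$. I would use $a\le c$ together with $\1-G^2\ge0$:
\[
\tau\bigl(a(\1-G^2)\bigr)=\tau\bigl((\1-G^2)^{1/2}a(\1-G^2)^{1/2}\bigr)\le\tau\bigl((\1-G^2)^{1/2}c(\1-G^2)^{1/2}\bigr)=\tau(c)-\tau(GcG)=\tau(c)-\tau(b)=\tau(r)=\varepsilon .
\]
Each step uses cyclicity of $\tau$ for products with a trace-class factor $c\in L^1\cap\M$ and bounded factors, and this is legitimate in the semifinite setting. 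This gives $\tau(aG^2)\ge1-\varepsilon$, and Proposition~\ref{lem:tracial-gentle} then yields $F(a,\widetilde a)\ge1-\varepsilon\ge(1-\varepsilon)^2$.

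\emph{Expected obstacle.} The only delicate point is the functional-calculus bookkeeping. We need that $G$ is a bounded element of $\M$, which uses invertibility of $c$ in $\M$, so $c^{-1/2}\in\M$. We also need to justify the trace manipulations ($\tau(xy)=\tau(yx)$ for $x\in L^1$, $y\in\M$, and monotonicity of $\tau$ under conjugation) in $L^1(\M,\tau)$. Both are standard. If the invertibility hypothesis is meant only in the bounded-below sense, this is exactly what makes $c^{-1}\in\M$; I would note that explicitly.
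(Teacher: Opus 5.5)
Your proposal is correct and follows the paper's proof essentially step for step. Both use the same geometric-mean contraction $G=(c^{-1})\#b$ with $GcG=b$, the same candidate $\widetilde a=GaG$, the same bound $1-\tau(aG^2)\le\tau(c)-\tau(GcG)=\tau(r)=\varepsilon$ obtained via traciality, and the same final appeal to Proposition~\ref{lem:tracial-gentle}; your symmetrization $\tau\bigl(a(\1-G^2)\bigr)=\tau\bigl((\1-G^2)^{1/2}a(\1-G^2)^{1/2}\bigr)$ only spells out the monotonicity step that the paper states directly.
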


\begin{proof}
Put $c=b+r$. Since $c$ is invertible, put
\[
    G=(c^{-1})\# b
    =
    c^{-1/2}\left(c^{1/2}bc^{1/2}\right)^{1/2}c^{-1/2}.
\]
By \eqref{inver} and \eqref{contract}, we get
\[
    0\le G\le \mathbf{1},
    \qquad
    GcG=b.
\]
Set that
\( \widetilde a=GaG\).
Since \(a\le c\), we get
\[
    \widetilde a=GaG\le GcG=b.
\]
In particular \(\tau(\widetilde a)\le\tau(a)\le1\).

It remains to estimate the fidelity.  Since \(\mathbf{1}-G^2\ge0\), the
assumption \(a\le c\) gives
\[
\begin{aligned}
    1-\tau(aG^2)=
    \tau\bigl(a(\mathbf{1}-G^2)\bigr)                                     \le
    \tau\bigl(c(\mathbf{1}-G^2)\bigr)=
    \tau(c)-\tau(cG^2).
\end{aligned}
\]
By traciality, we get
\begin{align*}  1-\tau(aG^2)\leq
    \tau(c)-\tau(GcG) =
    \tau(b+r)-\tau(b)  =
    \tau(r)=\varepsilon .
\end{align*}
Hence we obtain
\[
    \tau(aG^2)\ge1-\varepsilon .
\]
Applying Proposition~\ref{lem:tracial-gentle}, we obtain
\[
    F(a,\widetilde a)
    =
    F(a,GaG)
    \ge
    (1-\varepsilon)^2 .
\]
This proves the claim.
\end{proof}
The following density fact is standard in the theory of \(\tau\)-measurable operators and noncommutative \(L^p\)-spaces; see, for example, Nelson~\cite{Nelson1974} or Fack--Kosaki~\cite{FackKosaki1986}.

\begin{proposition}
Let \((\mathcal M,\tau)\) be a semifinite von Neumann algebra.
Then \(L^1(\mathcal M,\tau)\cap\mathcal M\) is dense in
\(L^1(\mathcal M,\tau)\). Moreover,
\(L^1(\mathcal M,\tau)_+\cap\mathcal M\) is dense in
\(L^1(\mathcal M,\tau)_+\).
\end{proposition}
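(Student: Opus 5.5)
We prove the two density claims by truncating spectrally and controlling the error in trace norm, using normality of $\tau$.

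\textbf{Positive part.} Let $x\in L^1(\mathcal M,\tau)_+$ have spectral resolution $x=\int_{[0,\infty)}\lambda\,de_\lambda$. For $n\in\mathbb N$ put
\[
x_n:=x\,e_{[0,n]}(x)=\int_{[0,n]}\lambda\,de_\lambda .
\]
By functional calculus, $0\le x_n\le x$ and $\|x_n\|_\infty\le n$, so $x_n\in\mathcal M_+$. Since $\tau(x_n)\le\tau(x)<\infty$, also $x_n\in L^1(\mathcal M,\tau)_+$.

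For the approximation, note that
\[
x-x_n=x\,e_{(n,\infty)}(x)\ge 0 .
\]
Hence $\|x-x_n\|_1=\tau(x)-\tau(x_n)$. The operators $x_n$ increase to $x$ in the measure topology, and equivalently $x^{1/2}e_{[0,n]}(x)x^{1/2}\nearrow x$. Normality of the trace on $L^0(\mathcal M,\tau)_+$ then gives $\tau(x_n)\nearrow\tau(x)$. If one prefers to avoid normality on measurable operators, the same conclusion follows from dominated convergence for the distribution function: $\tau(x)=\int_0^\infty\lambda\, d\tau(e_\lambda)$ and $\tau(x_n)=\int_0^n\lambda\, d\tau(e_\lambda)$. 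Therefore $\|x-x_n\|_1\to0$, which proves density of $L^1(\mathcal M,\tau)_+\cap\mathcal M$ in $L^1(\mathcal M,\tau)_+$.

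\textbf{General case.} Any $y\in L^1(\mathcal M,\tau)$ decomposes as $y=h+ik$ with $h=(y+y^*)/2$ and $k=(y-y^*)/(2i)$ self-adjoint, both in $L^1$. Further, $h=h_+-h_-$ with $h_\pm=h\,e_{(0,\infty)}(\pm h)$. These are products of a measurable operator with a projection, so $h_\pm\in L^1(\mathcal M,\tau)_+$ and $\|h_\pm\|_1\le\|h\|_1$. The same decomposition applies to $k$.

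We approximate each of the four positive pieces by the previous step. The triangle inequality for $\|\cdot\|_1$ then yields approximants in $L^1\cap\mathcal M$, which is a linear space. Hence $L^1(\mathcal M,\tau)\cap\mathcal M$ is dense in $L^1(\mathcal M,\tau)$.

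\textbf{Main obstacle.} The only delicate point is justifying $\tau(x\,e_{(n,\infty)}(x))\to0$ for an unbounded $\tau$-measurable $x$. We handle it through the spectral-measure identity $\tau(f(x))=\int f\,d\mu_x$, where $\mu_x=\tau\circ e_{(\cdot)}(x)$, which is standard for $\tau$-measurable operators (Fack--Kosaki). With this identity, the claim is monotone or dominated convergence of $\int_{(n,\infty)}\lambda\,d\mu_x(\lambda)$ to $0$, since $\int\lambda\,d\mu_x=\tau(x)<\infty$.
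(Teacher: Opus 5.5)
Your proof is correct. The positive case is exactly the paper's argument: the same truncation $x\mathbf{1}_{[0,n]}(x)$. You also justify $\tau\bigl(x\mathbf{1}_{(n,\infty)}(x)\bigr)\to 0$ through the spectral measure $\tau\circ e_{(\cdot)}(x)$, a step the paper simply asserts.

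For the general case you take a different route. The paper does not split $x$ into four positive pieces. It uses the polar decomposition $x=u|x|$ and the single approximant $x_n=u|x|\mathbf{1}_{[0,n]}(|x|)\in L^1(\mathcal M,\tau)\cap\mathcal M$. Since $|x-x_n|=|x|\mathbf{1}_{(n,\infty)}(|x|)$, this gives the exact error $\|x-x_n\|_1=\tau\bigl(|x|\mathbf{1}_{(n,\infty)}(|x|)\bigr)$ in one step.

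The polar route is shorter and produces an explicit approximant with a closed-form error. Your Cartesian--Jordan reduction costs four approximations and a triangle inequality. In exchange, it only ever works with positive operators, so you never need to identify the modulus of $x-x_n$.

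One small slip: the negative part should be $h_-=(-h)\,e_{(0,\infty)}(-h)$, not $h\,e_{(0,\infty)}(-h)$. The latter is $\le 0$ and equals $-h_-$. With the corrected definition, $h=h_+-h_-$ and $h_\pm\ge 0$ hold as you use them.
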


\begin{proof}
Let \(x\in L^1(\mathcal M,\tau)\) and write \(x=u|x|\) for its polar
decomposition.  Put
\[
        x_n=u|x|\mathbf{1}_{[0,n]}(|x|).
\]
Then \(x_n\in L^1(\mathcal M,\tau)\cap\mathcal M\), and
\[
        \|x-x_n\|_1
        =
        \tau\bigl(|x|\mathbf{1}_{(n,\infty)}(|x|)\bigr)
        \longrightarrow0.
\]
This proves the density of \(L^1(\mathcal M,\tau)\cap\mathcal M\).
If \(x\ge0\), then \(x_n=x\mathbf{1}_{[0,n]}(x)\) is positive, belongs to
\(L^1(\mathcal M,\tau)_+\cap\mathcal M\), and the same computation gives
\[
        \|x-x_n\|_1
        =
        \tau\bigl(x\mathbf{1}_{(n,\infty)}(x)\bigr)
        \longrightarrow0.
\]
\end{proof}

\begin{proposition}\label{trunc}
Let \((\mathcal M,\tau)\) be a semifinite von Neumann algebra. Let
\(0\le b\le c\) with \(b,c\in L^1(\mathcal M,\tau)_+\). For \(0<\delta<N<\infty\), put
\(
        e_{\delta,N}=\mathbf{1}_{[\delta,N]}(c)
\) and define
\[
        b_{\delta,N}=e_{\delta,N}be_{\delta,N},
        \qquad
        c_{\delta,N}=e_{\delta,N}ce_{\delta,N}.
\]
Then
\[
        0\le b_{\delta,N}\le c_{\delta,N},\quad
        b_{\delta,N},c_{\delta,N}
        \in L^1(\mathcal M,\tau)_+\cap\mathcal M.
\]

Moreover, 
\[
        b_{\delta,N}\to b,
        \qquad
        c_{\delta,N}\to c
\]
in \(L^1(\mathcal M,\tau)\) as \(\delta\downarrow0\) and \(N\uparrow\infty\). Finally, inside the corner \(e_{\delta,N}\mathcal M e_{\delta,N}\), one has
\[
        \delta e_{\delta,N}\le c_{\delta,N}\le N e_{\delta,N}.
\]
In particular, \(c_{\delta,N}\) is bounded and invertible in
\(e_{\delta,N}\mathcal M e_{\delta,N}\).

\end{proposition}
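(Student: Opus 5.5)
The plan is to use the spectral projection $e=e_{\delta,N}$ of $c$, which commutes with $c$, and to check the four assertions in turn: positivity and order, boundedness with integrability, the corner estimate, and the $L^1$ convergence.

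\emph{Order, boundedness and the corner.} Compressions preserve order, so $0\le b\le c$ gives $0\le ebe\le ece$. Since $e$ commutes with $c$, we have $c_{\delta,N}=ce=c\mathbf 1_{[\delta,N]}(c)$. By the Borel functional calculus, $\delta e\le c\,\mathbf 1_{[\delta,N]}(c)\le Ne$, which is exactly the corner estimate. It shows that $c_{\delta,N}$ is bounded and invertible in $e\mathcal M e$, with inverse $(c^{-1}\mathbf 1_{[\delta,N]}(c))$, whose norm is at most $\delta^{-1}$. Also $\tau(c_{\delta,N})\le\tau(c)<\infty$, so $c_{\delta,N}\in L^1\cap\mathcal M$. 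For $b_{\delta,N}$, the inequality $0\le b_{\delta,N}\le c_{\delta,N}\le N\mathbf 1$ gives boundedness, and $\tau(b_{\delta,N})\le\tau(c_{\delta,N})<\infty$ gives integrability.

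\emph{Convergence of $c_{\delta,N}$.} Write $c-c_{\delta,N}=c\,\mathbf 1_{[0,\delta)\cup(N,\infty)}(c)$. Its trace is
\[
\tau\bigl(c\mathbf 1_{(0,\delta)}(c)\bigr)+\tau\bigl(c\mathbf 1_{(N,\infty)}(c)\bigr),
\]
and both terms tend to $0$ by normality of $\tau$ (monotone convergence on the spectral measure of $c$), because $\tau(c)<\infty$. The point $0$ of the spectrum contributes nothing, since $c\mathbf 1_{\{0\}}(c)=0$.

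\emph{Convergence of $b_{\delta,N}$.} This is the main step, because $e$ need not commute with $b$. Let $f=\mathbf 1-e$. Then
\[
b-ebe=fbe+ebf+fbf,
\]
so $\|b-ebe\|_1\le 2\|fbe\|_1+\|fbf\|_1$. For the diagonal term, $\|fbf\|_1=\tau(fbf)\le\tau(fcf)=\tau(c\mathbf 1_{[0,\delta)\cup(N,\infty)}(c))\to0$. For the off-diagonal term, use Hölder's inequality in the form $\|fb^{1/2}b^{1/2}e\|_1\le\|fb^{1/2}\|_2\|b^{1/2}e\|_2$. This gives
\[
\|fbe\|_1\le\tau(fbf)^{1/2}\tau(ebe)^{1/2}\le\tau(fcf)^{1/2}\tau(c)^{1/2}\to0.
\]
Hence $b_{\delta,N}\to b$ in $L^1(\mathcal M,\tau)$ as $\delta\downarrow0$ and $N\uparrow\infty$.

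The only subtlety is that $b,c$ are unbounded $\tau$-measurable operators. The products above are therefore taken in $L^0(\mathcal M,\tau)$, where the order relation $fbf\le fcf$ and the Hölder inequality are standard.
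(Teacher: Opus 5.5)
Your proof is correct and takes essentially the same route as the paper. Both arguments use the functional calculus for $c_{\delta,N}=c\,\mathbf{1}_{[\delta,N]}(c)$, the block decomposition $b-ebe=fbe+ebf+fbf$ with $f=\mathbf{1}-e$, the $L^2$ Cauchy--Schwarz bound $\|fbe\|_1\le\tau(fbf)^{1/2}\tau(ebe)^{1/2}$, and the domination $\tau(fbf)\le\tau(fcf)=\tau\bigl(c\,\mathbf{1}_{(0,\delta)\cup(N,\infty)}(c)\bigr)\to 0$.
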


\begin{proof}
The inequality \(0\le b_{\delta,N}\le c_{\delta,N}\) follows from \(0\le b\le c\) by compression with the projection \(e_{\delta,N}\).

Since \(e_{\delta,N}\) is a spectral projection of \(c\), it commutes with \(c\). Thus
\[
        c_{\delta,N}
        =
        c\mathbf{1}_{[\delta,N]}(c).
\]
Hence \(
        0\le c_{\delta,N}\le N e_{\delta,N}\)
 and \(c_{\delta,N}\in\mathcal M\). Moreover,
\[
        \tau(c_{\delta,N})
        =
        \tau(c\mathbf{1}_{[\delta,N]}(c))
        \le \tau(c)<\infty,
\]
and therefore \(c_{\delta,N}\in L^1(\mathcal M,\tau)_+\cap\mathcal M\). Since \(0\le b_{\delta,N}\le c_{\delta,N}\), we also have
\(b_{\delta,N}\in L^1(\mathcal M,\tau)_+\cap\mathcal M\).

The estimates
\[
        \delta e_{\delta,N}\le c_{\delta,N}\le N e_{\delta,N}
\]
hold by the functional calculus of \(c\). Therefore \(c_{\delta,N}\) is invertible in the corner \(e_{\delta,N}\mathcal M e_{\delta,N}\).

We next prove the \(L^1\)-convergence. For \(c\), since \(e_{\delta,N}=\mathbf{1}_{[\delta,N]}(c)\), we have
\[
        c-c_{\delta,N}
        =
        c\mathbf{1}_{(0,\delta)\cup(N,\infty)}(c).
\]
Thus
\[
        \|c-c_{\delta,N}\|_1
        =
        \tau\!\left(c\mathbf{1}_{(0,\delta)\cup(N,\infty)}(c)\right).
\]
Since \(c\in L^1(\mathcal M,\tau)_+\), the right-hand side tends to \(0\) as \(\delta\downarrow0\) and \(N\uparrow\infty\).

It remains to show that \(b_{\delta,N}\to b\) in \(L^1\). Write \(e=e_{\delta,N}\). For a positive \(L^1\)-operator \(x\) and a
projection \(e\), we shall use the elementary block estimate
\[
        \|ex(\mathbf{1}-e)\|_1
        \le
        \tau(exe)^{1/2}
        \tau((\mathbf{1}-e)x(\mathbf{1}-e))^{1/2}.
\]
Indeed, this follows by writing
\[
        ex(\mathbf{1}-e)
        =
        (ex^{1/2})(x^{1/2}(\mathbf{1}-e))
\]
and applying the noncommutative Cauchy--Schwarz inequality in \(L^2(\mathcal M,\tau)\).

Applying this estimate to \(x=b\), we obtain
\[
\begin{aligned}
        \|b-ebe\|_1
        &\le
        \|(\mathbf{1}-e)b(\mathbf{1}-e)\|_1
        +\|eb(\mathbf{1}-e)\|_1
        +\|(\mathbf{1}-e)be\|_1                                      \\
        &\le
        \tau((\mathbf{1}-e)b(\mathbf{1}-e))
        +
        2\,\tau(ebe)^{1/2}
           \tau((\mathbf{1}-e)b(\mathbf{1}-e))^{1/2}.
\end{aligned}
\]
Since \(0\le b\le c\), we have
\[
        \tau((\mathbf{1}-e)b(\mathbf{1}-e))
        \le
        \tau((\mathbf{1}-e)c(\mathbf{1}-e)).
\]
Because \(e\) commutes with \(c\),
\[
        \tau((\mathbf{1}-e)c(\mathbf{1}-e))
        =
        \tau(c(\mathbf{1}-e))
        =
        \tau\!\left(c\mathbf{1}_{(0,\delta)\cup(N,\infty)}(c)\right)
        \longrightarrow 0 .
\]
Also \(\tau(ebe)\le\tau(b)\le\tau(c)<\infty\). Therefore
\[
        \|b-e_{\delta,N}be_{\delta,N}\|_1\to0.
\]
This proves the desired \(L^1\)-convergence of \(b_{\delta,N}\) to \(b\).
\end{proof}

We now pass from bounded cases to arbitrary \(L^1(\mathcal{M},\tau)_+\).
The following lemma is the structural difference between
the semifinite setting and matrix algebra, and plays a key role in the proof of the main
theorem.

\begin{lemma}
\label{lem:L1-ando-approx}
Let \(0\le b\le c\) with \(b,c\in L^1(\mathcal M,\tau)_+\).  Then
there exists a net of positive contractions \(G_i\in\mathcal M\)
such that
\[
    0\le G_i\le\mathbf{1},
\]
and, in \(L^1(\mathcal M,\tau)\),
\[
    G_i c G_i\longrightarrow b .
\]

Moreover, if \(a\in L^1(\mathcal M,\tau)_+\) satisfies \(a\le c\),
then the family
\[
    G_i aG_i
\]
is a uniformly integrable family in \(L^1(\mathcal M,\tau)\), hence relatively weakly compact in
\(L^1(\M,\tau)\) and every weak cluster point \(\widetilde a\) satisfies
\[
    0\le\widetilde a\le b,\qquad \tau(\widetilde a)\leq \tau(a).
\]
\end{lemma}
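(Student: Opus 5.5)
The plan is to build the net from the truncations of Proposition~\ref{trunc}, indexed by pairs \(i=(\delta,N)\) with \(\delta\downarrow0\), \(N\uparrow\infty\). For each \(i\) put \(e=e_{\delta,N}\), \(b_i=b_{\delta,N}\), \(c_i=c_{\delta,N}\). By Proposition~\ref{trunc} these are bounded and lie in \(L^1\), with \(0\le b_i\le c_i\), and \(c_i\) is invertible in the corner \(e\mathcal Me\). Inside that corner (whose unit is \(e\)) we apply the Kubo--Ando construction \eqref{inver} and define
\[
G_i=(c_i^{-1})\# b_i=c_i^{-1/2}\bigl(c_i^{1/2}b_ic_i^{1/2}\bigr)^{1/2}c_i^{-1/2},
\]
where \(c_i^{-1}\) is the inverse in \(e\mathcal Me\), and we regard \(G_i\) as an element of \(\mathcal M\) supported under \(e\). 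Then \eqref{contract} in the corner gives \(0\le G_i\le e\le\mathbf 1\). The relation \(G_ic_iG_i=b_i\) also holds. Because \(G_i=eG_ie\) and \(e\) commutes with \(c\), we have \(G_icG_i=G_iecG_i=G_ic_iG_i=b_i\). Proposition~\ref{trunc} then gives \(G_icG_i=b_i\to b\) in \(L^1\), which is the first assertion.

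For the second assertion, the estimate \(a\le c\) gives \(0\le G_iaG_i\le G_icG_i=b_i\). The family \(\{b_i\}\) converges in \(L^1\) norm, so it is uniformly integrable; more precisely, it is relatively norm compact up to a tail, hence equi-integrable. Domination \(0\le x\le y\) transfers uniform integrability from \(\{b_i\}\) to \(\{G_iaG_i\}\). One convenient criterion is that \(\sup_i\tau(x_i)<\infty\) and that \(\sup_i\tau(fx_if)\to0\) whenever \(f\) runs over a decreasing net of projections with \(f\to0\) (or \(\tau\)-small projections). For positive elements this criterion is monotone under domination, since \(\tau(fG_iaG_if)\le\tau(fb_if)\). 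The noncommutative Dunford--Pettis / Akemann criterion (cf.\ Takesaki, Thm.~III.5.4, for preduals) then shows that \(\{G_iaG_i\}\) is relatively weakly compact in \(L^1(\mathcal M,\tau)\cong\mathcal M_*\). Alternatively, for a bounded set of positive normal functionals \(\phi_i\) dominated by a norm-convergent net \(\psi_i\), Akemann's condition for \(\{\psi_i\}\) implies it for \(\{\phi_i\}\), and this is what I would cite.

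For a weak cluster point \(\widetilde a\), we test against \(x\in\mathcal M_+\). Along the subnet, \(\tau(G_iaG_ix)\le\tau(b_ix)\), and \(\tau(b_ix)\to\tau(bx)\) by norm convergence, so \(\tau(\widetilde ax)\le\tau(bx)\); positivity of \(\widetilde a\) follows in the same way. This gives \(0\le\widetilde a\le b\). Taking \(x=\mathbf 1\) gives \(\tau(G_iaG_i)=\tau(a^{1/2}G_i^2a^{1/2})\le\tau(a)\), so \(\tau(\widetilde a)\le\tau(a)\).

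The main obstacle is the uniform-integrability step: making precise a noncommutative equi-integrability criterion that is monotone under positive domination and that applies to a norm-convergent net, which is not a sequence. I would handle this by passing to cofinal sequences \(\delta_k=1/k\), \(N_k=k\). This is legitimate because the truncations are monotone in \((\delta,N)\), so the statement can be proved with a sequence. Then a convergent sequence in \(\mathcal M_*\) is relatively norm compact, hence satisfies Akemann's condition: for every decreasing sequence of projections \(f_n\to0\), \(\sup_k\psi_k(f_n)\to0\). The domination \(\phi_k\le\psi_k\) transfers this condition to \(\{\phi_k\}\).
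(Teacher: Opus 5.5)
Your proof is correct and follows essentially the same route as the paper. Both arguments use the spectral truncation $e_{\delta,N}=\mathbf{1}_{[\delta,N]}(c)$, the Kubo--Ando mean $(c_{\delta,N}^{-1})\# b_{\delta,N}$ built in the corner, the domination $G_iaG_iG_i\le b_{\delta,N}$ (more precisely $G_iaG_i\le b_{\delta,N}$), Akemann's weak-compactness criterion, and then pass the order and trace bounds to weak cluster points. Your reduction to the cofinal sequence $(\delta,N)=(1/k,k)$ makes the uniform-integrability step more rigorous than the paper's direct appeal to an $L^1$-convergent net. An even simpler fix is available: since $e$ commutes with $c$, one has $b_{\delta,N}=ebe\le ece\le c$, so the whole family is dominated by the single element $c$.
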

\begin{proof}
We follow the order-preserving spectral truncations from
Proposition~\ref{trunc}.  For
\(0<\delta<N<\infty\), put
\[
    e_{\delta,N}=\mathbf{1}_{[\delta,N]}(c),
\]
and define
\[
    b_{\delta,N}=e_{\delta,N}be_{\delta,N},
    \qquad
    c_{\delta,N}=e_{\delta,N}ce_{\delta,N}.
\]
Then
\[
    0\le b_{\delta,N}\le c_{\delta,N},
\]
and
\[
    b_{\delta,N}\to b,
    \qquad
    c_{\delta,N}\to c
\]
in \(L^1(\M,\tau)\) as \(\delta\downarrow0\) and \(N\uparrow\infty\).
Moreover, inside the finite-trace corner
\(e_{\delta,N}Me_{\delta,N}\), one has
\[
    \delta e_{\delta,N}
    \le
    c_{\delta,N}
    \le
    N e_{\delta,N}.
\]
Thus \(c_{\delta,N}\) is bounded and invertible in
\(e_{\delta,N}\M e_{\delta,N}\).

Applying the bounded geometric-mean construction from
Lemma~\ref{lem:bounded-tracial-DR}, there exists a positive contraction
\[
    G_{\delta,N}\in e_{\delta,N}\M e_{\delta,N}
\]
such that
\[
    G_{\delta,N}c_{\delta,N}G_{\delta,N}=b_{\delta,N}.
\]
Since \(G_{\delta,N}=e_{\delta,N}G_{\delta,N}e_{\delta,N}\), we also have
\[
    G_{\delta,N}cG_{\delta,N}
    =
    G_{\delta,N}c_{\delta,N}G_{\delta,N}
    =
    b_{\delta,N}.
\]
Consequently,
\[
    G_{\delta,N}cG_{\delta,N}\longrightarrow b
\]
in \(L^1(\M,\tau)\).  This proves the first assertion, after directing the
pairs \((\delta,N)\) by \(\delta\downarrow0\) and \(N\uparrow\infty\).

Now assume that \(0\le a\le c\), and set
\[
    a_{\delta,N}=G_{\delta,N}aG_{\delta,N}.
\]
Then
\[
    0\le a_{\delta,N}
    \le
    G_{\delta,N}cG_{\delta,N}
    =
    b_{\delta,N}.
\]

Since \(b_{\delta,N}\to b\) in \(L^1(\M,\tau)\), the family
\((a_{\delta,N})_{\delta,N}\) is uniformly integrable in the
noncommutative \(L^1\)-sense.  Indeed, \(L^1\)-convergent families are
uniformly integrable, and domination by such a family preserves uniform
integrability for positive elements.  Hence, by the noncommutative
Dunford--Pettis criterion for von Neumann algebra preduals, e.g., see \cite[Theorem II.2]{Akemann1967} and \cite{HRS2003}, the set
\[
    \{a_{\delta,N}:\ 0<\delta<N<\infty\}
\]
is relatively weakly compact in \(L^1(\M,\tau)\simeq \M_*\).

Therefore, after passing to a subnet, we may choose
\( a_{i}\rightarrow \widetilde a
\) weakly in $L^1(\M,\tau)$. The weak convergence  \( b_{i}\rightarrow  b
\) holds trivially by its $L^1$-convergence.  Since
$$ 0\leq a_i\leq b_i,$$
we have
$$ a_i\in L^1(\M,\tau)_+,\qquad b_i-a_i\in L^1(\M,\tau)_+.$$
Thus the weak convergence of $a_i$ gives
\[
    b_i-a_i\overset{w}{\longrightarrow} b-\widetilde a.
\]
Since $L^1(\M,\tau)_+$ is a norm closed convex cone, it is weakly closed. This gives
\[
    0\le \widetilde a\le b.
\]

Finally, for each \( a_{\delta,N}\), \(0\le G_i\le \mathbf{1}\) implies
\[
    \tau( a_{\delta,N})
    =
    \tau( G_{\delta,N} a G_{\delta,N})
    \le
    \tau(a).
\]
The weak convergence in
\(L^1\)-sense then gives
\[
    \tau(\widetilde a)\le \tau(a).
\]
This completes the proof.

\end{proof}

The following theorem is the main result in this subsection.
\begin{theorem}
\label{prop:tracial-DR-first-estimate}
Let \(a \in \mathcal{S}(\M,\tau)\) and \(b,r\in L^1(\mathcal M,\tau)_+\) satisfy
\( a\le b+r\). Also set
\[
    \varepsilon=\tau(r)<1 .
\]
Then there exists \(\widetilde a\in L^1(\mathcal M,\tau)_+
\)
such that
\[
    \widetilde a\le b,
    \quad
    \widetilde a\in \mathcal{S}_\leq(\M,\tau),
\]
and
\[
    F(a,\widetilde a)
    \ge
    (1-\varepsilon)^2 .
\]
Consequently,
\[
    P(a,\widetilde a)
    \le
    \sqrt{1-(1-\varepsilon)^4}.
\]
\end{theorem}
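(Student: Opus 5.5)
Set \(c=b+r\in L^1(\M,\tau)_+\), so that \(0\le b\le c\) and \(a\le c\). Apply Lemma~\ref{lem:L1-ando-approx} to the pair \(b\le c\). This gives the truncations \(e_{\delta,N}=\mathbf 1_{[\delta,N]}(c)\) and positive contractions \(G=G_{\delta,N}\in e_{\delta,N}\M e_{\delta,N}\) with \(GcG=b_{\delta,N}\to b\) in \(L^1\). Put \(a_{\delta,N}=GaG\). The lemma already shows that \(a_{\delta,N}\le b_{\delta,N}\), that the family is relatively weakly compact, and that every weak cluster point \(\widetilde a\) satisfies \(0\le\widetilde a\le b\) and \(\tau(\widetilde a)\le\tau(a)=1\). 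Such a cluster point is my candidate \(\widetilde a\). What remains is the fidelity bound.

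For each fixed \((\delta,N)\) I will repeat the argument of Lemma~\ref{lem:bounded-tracial-DR}. Since \(\mathbf 1-G^2\ge0\) and \(a\le c\),
\[
1-\tau(aG^2)\le \tau(c)-\tau(GcG)=\tau(c)-\tau(b_{\delta,N})\longrightarrow \tau(c)-\tau(b)=\tau(r)=\varepsilon .
\]
The computation in Proposition~\ref{lem:tracial-gentle} then gives
\[
F(a,a_{\delta,N})=\tau(aG)\ge\tau(aG^2)\ge 1-\varepsilon_{\delta,N},
\qquad \varepsilon_{\delta,N}:=\tau(c)-\tau(b_{\delta,N})\to\varepsilon .
\]
If needed, I will avoid invoking the bounded lemma literally and just use these traciality identities, which hold because \(G\) is bounded and \(a,c\in L^1\).

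The main obstacle is passing this bound to the weak limit. Fidelity is not weakly continuous in general, but the identity above is linear in the second argument: \(F(a,GaG)=\tau(aG)\). I would therefore use the concavity and weak upper semicontinuity of \(\omega\mapsto F(a,\omega)\) on \(\M_*^+\). This holds because \(F(a,\omega)=\tau|a^{1/2}\omega^{1/2}|\) is an infimum of weakly continuous affine functionals, via the variational formula \(F(a,\omega)=\inf_{\omega'}\langle\xi_a,\Delta^{-1}_{\omega',\omega}\xi_a\rangle\) (or Alberti's formula \(F^2=\inf_{y>0}\tau(ay)\tau(\omega y^{-1})\)). However, upper semicontinuity points the wrong way for a lower bound.

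So I will instead rely on the explicit structure. Pass to a subnet along which \(G_i\to G_\infty\) weak\(^*\) in the unit ball of \(\M_+\) (by Banach--Alaoglu). Then \(\tau(aG_i)\to\tau(aG_\infty)\ge 1-\varepsilon\), since \(a\in L^1\). If that fails, I will use the alternative route of replacing \(\widetilde a\) by a convex combination of the \(a_i\), using Mazur's lemma to get norm convergence. Norm limits preserve fidelity lower bounds by continuity of \(F\) in \(L^1\), concavity of \(F(a,\cdot)\) keeps \(F(a,\sum t_ja_j)\ge\sum t_j\tau(aG_j)\ge 1-\varepsilon-o(1)\), and the constraints \(\le b_i\) pass to norm limits. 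This Mazur/concavity route is the one I expect to work cleanly. It yields \(F(a,\widetilde a)\ge1-\varepsilon\ge(1-\varepsilon)^2\), and \(P(a,\widetilde a)\le\sqrt{1-(1-\varepsilon)^4}\) follows from the definition of \(P\).
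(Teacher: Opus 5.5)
Your final route (concavity of $F(a,\cdot)$ plus its $L^1$-continuity, carried to the weak cluster point via Mazur's lemma) is the paper's argument, which phrases the same fact as the superlevel sets $\{x: F(a,x)\ge(1-\varepsilon)^2-\eta\}$ being convex and norm-closed, hence weakly closed; you even get the slightly stronger bound $F(a,\widetilde a)\ge 1-\varepsilon$. One correction to your commentary: weak upper semicontinuity does not point the wrong way --- it gives exactly $F(a,\widetilde a)\ge\limsup_i F(a,a_i)\ge 1-\varepsilon$, so that observation already finishes the proof, whereas the Banach--Alaoglu detour would not, since $G_iaG_i$ need not converge weakly to $G_\infty aG_\infty$.
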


\begin{proof}
Set \( c=b+r \). Then \(a\le c\) and \(b\le c\).  By
Lemma~\ref{lem:L1-ando-approx}, choose positive contractions
\(G_i\in\mathcal M\) such that
\[
    G_i cG_i\to b
\]
in \(L^1\), and every \(L^1\)-cluster point of \(G_i aG_i\)
is dominated by \(b\).

For each approximant, the same calculation as in the bounded case
gives
\[
\begin{aligned}
    1-\tau(G_iaG_i)
    &=
    \tau\bigl(a(\mathbf{1}-G_i^2)\bigr)                       \\
    &\le
    \tau\bigl(c(\mathbf{1}-G_i^2)\bigr)                       \\
    &=
    \tau(c)-\tau(G_i cG_i).
\end{aligned}
\]
Passing to the limit along the approximating net gives
\[
    \limsup_i \bigl(\mathbf{1}-\tau(aG_i^2)\bigr)
    \le
    \tau(c)-\tau(b)
    =
    \tau(r)
    =
    \varepsilon .
\]
Thus, after passing to a
subnet if necessary, let
\(\widetilde a\) be a weak cluster point of \(G_i aG_i\) 
\[
    \widetilde a=\lim_k G_{i_k}aG_{i_k}.
\]
By Lemma~\ref{lem:L1-ando-approx}, we get
\[
    0\le\widetilde a\le b,
    \qquad
    \tau(\widetilde a)\le\tau(a)=1.
\]

Notice that
\[
    \tau(aG_{i_k}^2)\ge 1-\varepsilon-o(1).
\]
Hence Proposition~\ref{lem:tracial-gentle} yields
\[
    F(a,G_{i_k} aG_{i_k})
    >
    (1-\varepsilon-o(1))^2.
\]
Now fix $a$ with trace $1$, then $F(a,\cdot)$ is concave and $L^1$-continuous with the second variable. Then for arbitrary $\eta>0$, the set
$$ V_\eta=\{x\in L^1(\M,\tau)_+ : F(a,x)\geq(1-\varepsilon)^2-\eta\}$$
is convex by the concavity of $F$ and norm-closed with the $L^1$-continuity. Hence $V_\eta$ is weakly closed. Passing to the weak
limit point, we obtain
\[
    F(a,\widetilde a)
    \ge
    (1-\varepsilon)^2 .
\]

Finally, since \(a\) is normalized,
\[
    P(a,\widetilde a)
    =
    \sqrt{1-F(a,\widetilde a)^2}
    \le
    \sqrt{1-(1-\varepsilon)^4}.
\]
\end{proof}

Applying Theorem~\ref{prop:tracial-DR-first-estimate} to the
specific excess-mass decomposition, we obtain the following corollary which will be used to estimate
the purified distance.

\begin{corollary}
\label{cor:tracial-smoothing-estimate}
Let \(\rho \in \mathcal{S}(\M,\tau)\) and \(\sigma\in L^1(\mathcal M,\tau)_+\). Also let
\(\gamma>0\).  Set
\[
    q=(\rho-\gamma \sigma)_+,
    \qquad
    \delta=\tau(q).
\]
Then $0\leq \delta \leq 1$.

Moreover, there exists
\(
    \widetilde \rho\in \mathcal{S}_\leq(\mathcal M,\tau)
\)
such that
\(
    \widetilde \rho\le \gamma \sigma,
    \,
    \tau(\widetilde \rho)\le1
\), 
and
\[
    F(\rho,\widetilde \rho)
    \ge
    (1-\delta)^2 .
\]
 In particular,
\[
    P(\rho,\widetilde \rho)
    \le
    \sqrt{1-(1-\delta)^4}
    \le
    2\sqrt{\delta}.
\]
\end{corollary}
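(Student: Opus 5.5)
The plan is to apply Theorem~\ref{prop:tracial-DR-first-estimate} with the decomposition
\[
    a=\rho,\qquad b=\rho-(\rho-\gamma\sigma)_+,\qquad r=q=(\rho-\gamma\sigma)_+ .
\]
Then $a=b+r$ trivially, so $a\le b+r$ holds. The work lies in checking that $b$ and $r$ lie in $L^1(\mathcal M,\tau)_+$, that $b\le\gamma\sigma$, and that $\delta=\tau(r)$ sits in the required range.

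First I bound $\delta$. The element $x=\rho-\gamma\sigma$ is self-adjoint in $L^1(\mathcal M,\tau)$. Its positive part is $q=x_+=x\,\mathbf 1_{(0,\infty)}(x)$, which lies in $L^1(\mathcal M,\tau)_+$, so $\delta\ge0$. With $e=\mathbf 1_{(0,\infty)}(x)$ we get
\[
    \delta=\tau(ex)=\tau(e\rho e)-\gamma\tau(e\sigma e)\le\tau(e\rho e)\le\tau(\rho)=1 .
\]
This gives $0\le\delta\le1$.

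Next I need $0\le b\le\gamma\sigma$. Since $x\le x_+$, we have $b=\rho-x_+\le\rho-x=\gamma\sigma$. Positivity of $b$, i.e.\ $(\rho-\gamma\sigma)_+\le\rho$, is the only delicate point, and it is where I expect the main obstacle. In general $x_+\le\rho$ fails for noncommuting $\rho,\sigma$, so $b$ need not be positive. To handle this I would not take $b=\rho-q$ but instead $b=\gamma\sigma$. Then $a=\rho\le\gamma\sigma+(\rho-\gamma\sigma)_+=b+r$, because $x\le x_+$. Both $b=\gamma\sigma$ and $r=q$ lie in $L^1(\mathcal M,\tau)_+$, and $\widetilde\rho\le b$ is exactly the required constraint $\widetilde\rho\le\gamma\sigma$. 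This choice removes the obstacle entirely.

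Theorem~\ref{prop:tracial-DR-first-estimate} needs $\varepsilon=\delta<1$. In that case it yields $\widetilde\rho\in\mathcal S_\le(\mathcal M,\tau)$ with $\widetilde\rho\le\gamma\sigma$ and $F(\rho,\widetilde\rho)\ge(1-\delta)^2$. In the edge case $\delta=1$, take $\widetilde\rho=0$: the fidelity bound reads $F\ge0$, which is trivial, and $P\le1\le2\sqrt\delta$.

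Finally, the estimate $P(\rho,\widetilde\rho)\le\sqrt{1-(1-\delta)^4}$ follows from $P=\sqrt{1-F^2}$ and the fidelity bound. For the last inequality, set $t=1-\delta\in[0,1]$. Then
\[
    1-t^4=(1-t)(1+t)(1+t^2)\le4(1-t)=4\delta ,
\]
so $\sqrt{1-t^4}\le2\sqrt{\delta}$.
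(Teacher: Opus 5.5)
Your final argument is correct and is essentially the paper's proof: the paper also applies Theorem~\ref{prop:tracial-DR-first-estimate} with $a=\rho$, $b=\gamma\sigma$, $r=(\rho-\gamma\sigma)_+$ (via $\rho-\gamma\sigma\le(\rho-\gamma\sigma)_+$), sets $\widetilde\rho=0$ when $\delta=1$, and bounds $\delta\le1$ through $\tau(x_+)=\sup_{0\le T\le\mathbf{1}}\tau(Tx)\le\tau(T\rho)\le1$, which is the same as your spectral-projection computation. You should simply delete the abandoned first choice $b=\rho-q$, which, as you note yourself, need not be positive.
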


\begin{proof}
We use the variational formula
\[
    \tau\bigl((\rho-\gamma\sigma)_+\bigr)
    =
    \sup_{0\le T\le \mathbf{1}}
    \tau\bigl(T(\rho-\gamma\sigma)\bigr),
\]
where the supremum is taken over positive contractions \(T\in\mathcal M\). Since \(\sigma\ge0\), we have
\[
    \tau\bigl(T(\rho-\gamma\sigma)\bigr)
    \le
    \tau(T\rho)
    \le
    \tau(\rho)
    =
    1 .
\]
This proves $0\leq \delta\leq 1$.

Since
\[
    \rho-\gamma \sigma
    =
    (\rho-\gamma \sigma)_+-(\rho-\gamma \sigma)_-,
\]
we have
\[
    \rho
    \le
    \gamma \sigma+(\rho-\gamma \sigma)_+
    =
    \gamma \sigma+q.
\]
If $\delta=1$, set $\widetilde{\rho}=0$. Otherwise apply Theorem~\ref{prop:tracial-DR-first-estimate} with
\[
    a=\rho,\qquad b=\gamma \sigma,\qquad r=q.
\]
This gives \(\widetilde \rho\le\gamma \sigma\),
\(\tau(\widetilde \rho)\le1\), and
\[
    F(\rho,\widetilde \rho)
    \ge
    (1-\delta)^2.
\]
The purified-distance estimate follows from
\[
    P(\rho,\widetilde \rho)
    =
    \sqrt{1-F(\rho,\widetilde \rho)^2}
    \le
    \sqrt{1-(1-\delta)^4}.
\]
Finally, for \(0\le\delta\le1\),
\[
    1-(1-\delta)^4
    =
    \delta(4-6\delta+4\delta^2-\delta^3)
    \le
    4\delta,
\]
which yields
\[
    P(\rho,\widetilde \rho)
    \le 2\sqrt{\delta}.
\]
This completes the proof.
\end{proof}

\subsection{Semifinite Mosonyi--Ogawa formula}
\label{finite-recover}
The purpose of this subsection is to prove the semifinite Mosonyi--Ogawa formula. The following arguments are motivated by the finite-rank approximations in \cite{Mosonyi2022}. We first record the following consequence of Proposition~\ref{prop:haagerup-information-approximation}.

\begin{proposition}
\label{thm:finite-trace-recoverability}
Let \(\rho,\sigma\in L^1(\M,\tau)_+\) with \(\rho,\sigma\neq0\).  Then,
\begin{align*}
    Q_\alpha(\rho\Vert \sigma)
    &=
    \lim_{\operatorname{Proj}_\tau(\M)\ni e\uparrow \1}
    Q_\alpha(e\rho e\Vert e\sigma e),\quad \alpha >1
    \\
    D_\alpha(\rho\Vert \sigma)
    &=
    \lim_{\operatorname{Proj}_\tau(\M)\ni e\uparrow \1}
    D_\alpha(e\rho e\Vert e\sigma e),\quad \alpha\geq 1.
\end{align*}
Here the limits are net limits over the directed set
\(\operatorname{Proj}_\tau(\M)\).
\end{proposition}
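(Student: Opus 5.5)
The plan is to read the statement as a martingale-recovery result for the increasing net of corners $e\M e$, $e\in\operatorname{Proj}_\tau(\M)$, in the spirit of Proposition~\ref{prop:haagerup-information-approximation}. The map $x\mapsto e x e$ on $L^1(\M,\tau)$ is the predual of the normal completely positive map $\Psi_e:\M\to e\M e\subset\M$, $\Psi_e(y)=eye$. It is not unital, but it is a contraction. On the corner $e\M e$ with identity $e$ it is a unital compression, equivalently a conditional-expectation-like restriction onto the subalgebra $e\M e\oplus\mathbb C(\1-e)$.

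The cleanest route is to use the normal unital completely positive map
\[
\mathcal E_e(y)=eye+\tau_0\bigl((\1-e)y(\1-e)\bigr)(\1-e)
\]
or simply the pinching $y\mapsto eye+(\1-e)y(\1-e)$. The pinching is a unital trace-preserving conditional expectation onto $e\M e\oplus(\1-e)\M(\1-e)$. Sandwiched quantities are additive over direct sums in the form $Q_\alpha(\rho_1\oplus\rho_2\Vert\sigma_1\oplus\sigma_2)=Q_\alpha(\rho_1\Vert\sigma_1)+Q_\alpha(\rho_2\Vert\sigma_2)$. Data processing therefore gives
\[
Q_\alpha(e\rho e\Vert e\sigma e)\le Q_\alpha(\rho\Vert\sigma).
\]
The same holds for $e\le f$ by pinching $f\rho f$ with $e$ inside $f\M f$, so the net is monotone increasing. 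An analogous argument, using the relative entropy additivity $D(\rho_1\oplus\rho_2\Vert\cdot)=D(\rho_1\Vert\sigma_1)+D(\rho_2\Vert\sigma_2)$ and nonnegativity corrections for subnormalized pieces, handles $D_1$. For $D_\alpha$ with $\alpha>1$ one converts through $\log Q_\alpha$ and the masses $\tau(e\rho e)\to\tau(\rho)$. If $D_\alpha$ is defined with normalization, the factor $\frac{1}{\alpha-1}\log$ together with the mass convergence takes care of this.

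For the reverse inequality $\liminf_e Q_\alpha(e\rho e\Vert e\sigma e)\ge Q_\alpha(\rho\Vert\sigma)$, I would use the joint lower semicontinuity (property 4) of the sandwiched divergence in predual norm. Since $e\uparrow\1$ strongly and $\rho,\sigma\in L^1$, we have $\|e\rho e-\rho\|_1\to0$ and $\|e\sigma e-\sigma\|_1\to0$. This follows from the same block estimate as in Proposition~\ref{trunc}: $\|\rho-e\rho e\|_1\le\tau((\1-e)\rho)+2\tau(\rho)^{1/2}\tau((\1-e)\rho)^{1/2}$, together with normality of $\tau(\cdot\,\rho)$. Lower semicontinuity gives $\liminf Q_\alpha(e\rho e\Vert e\sigma e)\ge Q_\alpha(\rho\Vert\sigma)$, and combined with monotonicity this yields the net limit (in the extended sense). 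The case $\alpha=1$ is analogous, using lower semicontinuity of the Araki relative entropy, which I would cite as standard.

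The main obstacle is the monotonicity step. Compression by $e$ is not unital, so I have to verify data processing for non-unital positive compressions, or equivalently the direct-sum additivity on the pinched algebra. The latter needs $e\sigma e\ne0$ (or a convention when it vanishes) and care with supports. I would first handle it with the pinching conditional expectation, as above, since that reduces everything to the data-processing inequalities and the direct-sum decomposition already assumed in the paper. If that proves awkward, I would fall back on the Haagerup-style argument of Proposition~\ref{prop:haagerup-information-approximation}, viewing $\bigl(e\M e\oplus\mathbb C(\1-e)\bigr)_e$ as an increasing family of subalgebras with dense union and invoking the martingale theorem of \cite{HiaiMosonyi2023} directly.
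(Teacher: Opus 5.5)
Your argument is correct in substance, but it follows a different route from the paper. The paper's proof is a one-line reduction. It notes $(\bigcup_e e\M e)''=\M$ and treats the statement as a special case of the martingale recovery Proposition~\ref{prop:haagerup-information-approximation}, i.e.\ of the Hiai--Mosonyi martingale theorem and its $D_1$ counterpart.

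You instead argue directly, in three steps:
\begin{enumerate}
\item the pinching $y\mapsto eye+(\1-e)y(\1-e)$, together with block additivity, gives $Q_\alpha(e\rho e\Vert e\sigma e)\le Q_\alpha(\rho\Vert\sigma)$ for every $e$;
\item the block estimate from Proposition~\ref{trunc} gives $e\rho e\to\rho$ and $e\sigma e\to\sigma$ in $L^1$;
\item joint norm lower semicontinuity supplies the matching $\liminf$.
\end{enumerate}
This uses only data processing, lower semicontinuity and an elementary $L^1$ estimate. It also deals explicitly with a point the paper's reduction glosses over: $e\M e$ is not a unital subalgebra, and $e\rho e$ is a mass-losing compression rather than a restriction. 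So the martingale theorem, stated for unital subalgebras, does not apply verbatim. Your fallback with $e\M e\oplus\mathbb C(\1-e)$ is the correct unital repair. It does produce an extra scalar term $\tau((\1-e)\rho)^\alpha\tau((\1-e)\sigma)^{1-\alpha}$, whose vanishing is most easily seen through your lower semicontinuity argument anyway. The paper's version buys brevity; yours is self-contained.

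One step needs tightening. For $\alpha=1$, the ``analogous'' per-$e$ monotonicity is false. The Araki relative entropy of unnormalized functionals can be negative, so the complementary block cannot simply be dropped. For instance, on $\mathbb C^2$ with $\rho=(\tfrac12,\tfrac12)$, $\sigma=(\tfrac14,1)$ and $e=(1,0)$, one has $D(\rho\Vert\sigma)=0<\tfrac12=D(e\rho e\Vert e\sigma e)$.

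What you need is only an asymptotic bound. Assume $D(\rho\Vert\sigma)<\infty$ (otherwise lower semicontinuity alone suffices) and write $e^\perp=\1-e$.
\begin{itemize}
\item The pinching gives $D(e\rho e\Vert e\sigma e)\le D(\rho\Vert\sigma)-D(e^\perp\rho e^\perp\Vert e^\perp\sigma e^\perp)$.
\item Data processing to $\mathbb C$ gives $D(e^\perp\rho e^\perp\Vert e^\perp\sigma e^\perp)\ge a_e\log(a_e/b_e)\ge a_e\log a_e$ once $b_e\le 1$, where $a_e=\tau(e^\perp\rho)\to0$ and $b_e=\tau(e^\perp\sigma)\to0$.
\end{itemize}
Hence $\limsup_e D(e\rho e\Vert e\sigma e)\le D(\rho\Vert\sigma)$, and lower semicontinuity of the Araki relative entropy finishes the case $\alpha=1$.

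For $\alpha>1$ no such correction is needed, since $Q_\alpha\ge 0$. In the paper's unnormalized convention $D_\alpha=\frac{1}{\alpha-1}\log Q_\alpha$, so your remark about mass normalization is unnecessary there.
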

\begin{proof}
If $e_i\to \1$, then 
 $$ \left(\bigcup_{i=1}^{\infty} e_i\M e_i\right)^{\prime \prime} =\M.$$ Hence the desired result is just a special case of \ Proposition~\ref{prop:haagerup-information-approximation}.
\end{proof}

The following proposition gives a one-shot estimate.
\begin{proposition}\label{prop:type-II-one-shot-renyi-bound} Let \(h\in \mathcal{S}(\M,\tau) \) and \(k\in L^1(\mathcal M,\tau)_+\). Also let \(\gamma>0\) and \(s\ge0\). Then \begin{align} \label{eq:type-II-projection-upper} \tau\!\left(h\{h>\gamma k\}\right) &\le \gamma^{-s} Q_{1+s}(h\Vert k), \\ \label{eq:type-II-positive-part-upper} \tau\!\left((h-\gamma k)_+\right) &\le \gamma^{-s} Q_{1+s}(h\Vert k). \end{align} Here the case \(s=0\) is understood as the trivial estimate.
\end{proposition}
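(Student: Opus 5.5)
The plan is to first reduce \eqref{eq:type-II-positive-part-upper} to \eqref{eq:type-II-projection-upper}, and then prove \eqref{eq:type-II-projection-upper} by a finite-trace approximation combined with a classical Markov-type argument.

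\emph{Reduction.} Put $P=\{h>\gamma k\}$, the support projection of $(h-\gamma k)_+$. Then
\[
\tau\bigl((h-\gamma k)_+\bigr)=\tau\bigl(P(h-\gamma k)\bigr)\le \tau(Ph)=\tau\bigl(h\{h>\gamma k\}\bigr),
\]
because $\tau(Pk)\ge0$. So the second inequality follows from the first. For $s=0$ both right-hand sides equal $Q_1(h\Vert k)=\tau(h)=1$, and the left-hand sides are at most $\tau(h)=1$, so the claim is trivial. From now on take $s>0$.

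\emph{Bounded case.} Assume first that $h,k\in L^1(\M,\tau)_+\cap\M$ and that $k$ is invertible in a finite-trace corner containing the supports. I would use the variational (Hölder) representation of the sandwiched quantity. For $\alpha=1+s$ and any $0\le T\le\1$ commuting appropriately, Hölder's inequality gives
\[
\tau(Th)=\tau\bigl(T k^{\frac{s}{2(1+s)}}\,k^{-\frac{s}{2(1+s)}}hk^{-\frac{s}{2(1+s)}}\,k^{\frac{s}{2(1+s)}}\bigr)\le \bigl\|k^{\frac{s}{2(1+s)}}Tk^{\frac{s}{2(1+s)}}\bigr\|_{\frac{1+s}{s}}\,Q_{1+s}(h\Vert k)^{\frac1{1+s}}.
\]
Taking $T=P$, the norm factor is bounded by $\tau(PkP)^{s/(1+s)}$ (Araki--Lieb--Thirring for the projection sandwich), and $\tau(PkP)\le\gamma^{-1}\tau(PhP)$ since $P(h-\gamma k)P\ge0$. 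Writing $x=\tau(Ph)$, this yields
\[
x\le \gamma^{-\frac{s}{1+s}}\,x^{\frac{s}{1+s}}\,Q_{1+s}(h\Vert k)^{\frac1{1+s}},
\]
hence $x^{1/(1+s)}\le\gamma^{-s/(1+s)}Q^{1/(1+s)}$, i.e. $x\le\gamma^{-s}Q_{1+s}(h\Vert k)$. This is exactly the finite-dimensional argument, and it survives verbatim since it uses only tracial Hölder and Araki--Lieb--Thirring in $L^p(\M,\tau)$.

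\emph{General case.} Compress by $e\in\operatorname{Proj}_\tau(\M)$ with $e\uparrow\1$, and truncate spectrally as in Proposition~\ref{trunc}, adding $\epsilon e$ to $k$ for invertibility. Data processing makes the compressed $Q_{1+s}$ at most $Q_{1+s}(h\Vert k+\epsilon)$, which tends to $Q_{1+s}(h\Vert k)$ by monotone convergence, or directly by Proposition~\ref{thm:finite-trace-recoverability}. It is cleaner to pass to the limit in the positive-part form $\sup_{0\le T\le\1}\tau(T(h-\gamma k))$, which is $L^1$-continuous in $(h,k)$. I therefore expect to prove \eqref{eq:type-II-positive-part-upper} in the limit rather than \eqref{eq:type-II-projection-upper}, since spectral projections are not continuous.

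\emph{Main obstacle.} The obstacle is recovering \eqref{eq:type-II-projection-upper} for unbounded $h,k$, where $\{h>\gamma k\}$ must be defined via $(h-\gamma k)_+$ in $L^0$. Here I would run the Hölder argument directly in $L^0$ with $k^{-s/(2(1+s))}$ interpreted on $s(k)$. If $P\not\le s(k)$ with $\tau(Ph(1-s(k)))>0$, then $Q=\infty$ and the bound is trivial. Otherwise one needs the generalized Hölder inequality for $\tau$-measurable operators, which is standard (Fack--Kosaki).
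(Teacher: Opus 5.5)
Your proof is correct in substance. It shares the Markov step with the paper: with $E=\{h>\gamma k\}$, $p=\tau(hE)$, $q=\tau(kE)$, one has $p\ge\gamma q$ and $\tau((h-\gamma k)_+)=p-\gamma q\le p$. Where you differ is in how you obtain the key inequality $p^{1+s}q^{-s}\le Q_{1+s}(h\Vert k)$.

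Your Hölder--Araki--Lieb--Thirring estimate $\tau(Ph)\le\tau(PkP)^{s/(1+s)}Q_{1+s}(h\Vert k)^{1/(1+s)}$ is exactly that inequality. The paper instead gets it in one line from data processing of $Q_{1+s}$ under the binary measurement $x\mapsto(\tau(Ex),\tau((\1-E)x))$, dropping the nonnegative second classical term. The paper's route needs nothing about how $Q_{1+s}$ is represented. It works verbatim for unbounded $h,k\in L^1(\M,\tau)_+$ and non-invertible $k$, so no truncation or approximation is needed. Yours is self-contained, with no appeal to data processing, but it must confront the definition of $Q_{1+s}$ for such $k$.

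Two comments on that point. First, the bounded-case-plus-approximation detour can be dropped entirely. As you note, it only yields \eqref{eq:type-II-positive-part-upper}, while the direct argument gives both inequalities.

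Second, the direct argument should not be phrased with $k^{-s/(2(1+s))}$ as a $\tau$-measurable operator, because it need not be one. For example, in $\mathcal B(\ell^2)$ with $\tau=\operatorname{Tr}$ one has $L^0=\mathcal B(\ell^2)$, and $k^{-\beta}$ is unbounded for any infinite-rank trace-class $k$. So Fack--Kosaki Hölder cannot be applied to the factors $k^{\beta}Pk^{\beta}$ and $k^{-\beta}hk^{-\beta}$ as you wrote them.

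Use instead what finiteness of the tracial sandwiched quantity means. If $Q_{1+s}(h\Vert k)<\infty$, then $h=k^{\beta}yk^{\beta}$ with $\beta=\frac{s}{2(1+s)}$, $y\in L^{1+s}(\M,\tau)_+$ and $\tau(y^{1+s})=Q_{1+s}(h\Vert k)$. Since $k^{\beta}\in L^{2(1+s)/s}$, the identity $\tau(Ph)=\tau(k^{\beta}Pk^{\beta}y)$ and Hölder with exponents $\frac{1+s}{s}$ and $1+s$ are then legitimate. Your bound $\|k^{\beta}Pk^{\beta}\|_{(1+s)/s}\le\tau(PkP)^{s/(1+s)}$ and the step $\tau(PkP)\le\gamma^{-1}\tau(Ph)$ then finish the proof unchanged. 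The case $Q_{1+s}=\infty$, which includes $s(h)\not\le s(k)$, is trivial.
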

\begin{proof} The case \(s=0\) holds trivially. Let \(s>0\) and set \[ \alpha=1+s, \qquad E=\{h>\gamma k\}. \] Put \[ p=\tau(hE), \qquad q=\tau(kE). \] Since \(E(h-\gamma k)E\ge0\), we have \[ p-\gamma q = \tau(E(h-\gamma k)E) \ge0. \] Hence \(p\ge\gamma q\), and therefore \[ p\le \gamma^{-s}p^{1+s}q^{-s}, \] with the usual convention when \(q=0\). Now apply the normal two-outcome measurement \[ x\longmapsto \bigl(\tau(Ex),\tau((\1-E)x)\bigr) \] to the pair of normal positive functionals \(\omega_h,\omega_k\). The data-processing inequality for the sandwiched R\'enyi divergence gives \[ p^{1+s}q^{-s} \le p^{1+s}q^{-s} +(1-p)^{1+s}\bigl(\tau(k)-q\bigr)^{-s} \le Q_{1+s}(h\Vert k). \] This proves \eqref{eq:type-II-projection-upper}. Finally, \[ \tau((h-\gamma k)_+) = \tau(E(h-\gamma k)E) = p-\gamma q \le p, \] which gives \eqref{eq:type-II-positive-part-upper}.
\end{proof}

We are now going to prove the semifinite Mosonyi--Ogawa formula. For \(a\in\mathbb R\), define the Legendre transform
\[
    J(a)=
    \inf_{s\ge0}\{\psi(s)-as\}
    =
    \inf_{s\ge0}s\bigl(D_{1+s}(\rho \Vert \sigma)-a\bigr).
\]

\begin{theorem}
\label{thm:finite-tracial-positive-part-hoeffding}
Let \(\rho \in \mathcal{S}(\M,\tau)\) and \( \sigma \in L^1(\M,\tau)_+\). Then for every \(a\in\mathbb R,a\neq D_\infty(\rho\Vert \sigma)\)
and \(t>0\),
\begin{align}\label{pos-part}
    \lim_{n\to\infty}
    \frac1n
    \log
    \tau_n
    \left(
        \rho_n-t 2^{na}\sigma_n
    \right)_+
    =
     \inf_{s\ge0}s\bigl(D_{1+s}(\rho \Vert \sigma)-a\bigr).
\end{align}
Moreover,
\begin{align}\label{pro-part}
    \lim_{n\to\infty}
    \frac1n
    \log
    \tau_n
    \left[
        \rho_n
        \{\rho_n>t 2^{na}\sigma_n\}
    \right]
    =
     \inf_{s\ge0}s\bigl(D_{1+s}(\rho \Vert \sigma)-a\bigr).
\end{align}
\end{theorem}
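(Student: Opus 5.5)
The plan is to prove an upper bound and a lower bound for both limits in the statement, and to tie the two quantities together through a sandwich. Write $E_n=\{\rho_n>t2^{na}\sigma_n\}$. Following the last step of Proposition~\ref{prop:type-II-one-shot-renyi-bound}, $\tau_n((\rho_n-t2^{na}\sigma_n)_+)=\tau_n(E_n(\rho_n-t2^{na}\sigma_n)E_n)\le\tau_n(\rho_nE_n)$. So the positive-part quantity is at most the projection quantity. For the reverse comparison, substitute $a$ by $a-\eta$: since $(\rho_n-t2^{n(a-\eta)}\sigma_n)_+\ge E_n(\rho_n-t2^{n(a-\eta)}\sigma_n)E_n$ and $\tau(E_n\sigma_nE_n)\le t^{-1}2^{-na}\tau(\rho_nE_n)$, we get $\tau_n((\rho_n-t2^{n(a-\eta)}\sigma_n)_+)\ge(1-2^{-n\eta})\tau_n(\rho_nE_n)$. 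Because $J$ is concave and finite, hence continuous, on $\{a<D_\infty\}$, it then suffices to prove (i) the upper bound $\limsup\le J(a)$ for the projection term and (ii) the lower bound $\liminf\ge J(a)$ for the positive-part term.

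\textbf{Upper bound.} Apply Proposition~\ref{prop:type-II-one-shot-renyi-bound} with $h=\rho_n$, $k=\sigma_n$ and $\gamma=t2^{na}$. Tensor additivity of $D_{1+s}$ gives
\[\tfrac1n\log\tau_n(\rho_nE_n)\le -sa-\tfrac{s}{n}\log t+sD_{1+s}(\rho\Vert\sigma)\]
for every $s\ge0$. Letting $n\to\infty$ and then taking the infimum over $s$ yields $J(a)$. If $a>D_\infty(\rho\Vert\sigma)$, then $J(a)=-\infty$; in that case, for $n$ large, $\rho_n\le 2^{nD_\infty}\sigma_n<t2^{na}\sigma_n$, so both quantities vanish and the equality holds with $\log0=-\infty$.

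\textbf{Lower bound.} This is the main obstacle. I would reduce it to a finite-dimensional commutative situation and use the classical Hoeffding/Mosonyi--Ogawa bound, in the spirit of \cite{Mosonyi2022}. Fix a finite-trace projection $e$ and a finite-dimensional abelian (or matrix) subalgebra $\mathcal A\subset e\M e$ generated by spectral projections of $e\rho e$ and $e\sigma e$. The map $x\mapsto\tau(x\,\cdot)$ restricted to $\mathcal A^{\otimes n}$, combined with compression by $e^{\otimes n}$, is a positive unital normal map. By the variational formula $\tau((X)_+)=\sup_{0\le T\le1}\tau(TX)$, the positive part can only decrease under compression/conditional expectation onto $(e\M e)^{\bar\otimes n}$ and then onto a matrix subalgebra. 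Hence
\[\tau_n\bigl((\rho_n-t2^{na}\sigma_n)_+\bigr)\ge\operatorname{Tr}\bigl((\rho_e^{\otimes n}-t2^{na}\sigma_e^{\otimes n})_+\bigr),\]
where $\rho_e,\sigma_e$ are the reductions to a finite-dimensional algebra. For finite-dimensional states, the known Mosonyi--Ogawa result (after pinching) gives $\liminf\ge\inf_{s}s(D_{1+s}(\rho_e\Vert\sigma_e)-a)$, noting that $\rho_e$ is subnormalized, which is handled through $\varphi(0)=\log\rho_e(\1)$. Finally, let $e\uparrow\1$ and refine $\mathcal A$, using Proposition~\ref{thm:finite-trace-recoverability} (or the martingale recovery with matrix subalgebras) to get $\varphi_e(s)\to\varphi(s)$ pointwise and $\rho_e(\1)\to1$. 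Then Lemma~\ref{l-trans} transfers the Legendre transforms, $J_e(a)\to J(a)$, provided $a<D_\infty(\rho\Vert\sigma)$. The delicate points are twofold. First, the finite-dimensional approximants must be chosen so that the sandwiched divergences converge; here the reduction to a finite von Neumann subalgebra via $e\M e$, and then via matrix approximation, is needed, and if $e\M e$ is of type~II I would instead approximate by a hyperfinite/finite-dimensional filtration and invoke martingale recovery. Second, we need $a<D_\infty(\rho_e\Vert\sigma_e)$ eventually, which follows from lower semicontinuity together with $D_\infty=\lim_\alpha D_\alpha$.
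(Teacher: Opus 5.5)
Your upper bound is correct, and so is the comparison of the two quantities. The reverse comparison is actually unnecessary. Write $R_n=\tau_n(\rho_n-t2^{na}\sigma_n)_+$ and $P_n=\tau_n[\rho_n\{\rho_n>t2^{na}\sigma_n\}]$. Then $R_n\le P_n$, together with $\limsup_n\frac1n\log P_n\le J(a)$ and $\liminf_n\frac1n\log R_n\ge J(a)$, already gives both limits. Note also that $X_+\ge EXE$ holds only after applying $\tau$, not as an operator inequality.

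\textbf{The gap is in the lower bound.} Your plan needs $D_{1+s}(\rho_{\mathcal A}\Vert\sigma_{\mathcal A})\to D_{1+s}(\rho\Vert\sigma)$ along finite-dimensional subalgebras $\mathcal A$.
\begin{itemize}
\item Martingale convergence gives this only if $e\M e$ has an increasing, weakly generating family of finite-dimensional subalgebras, i.e.\ is finite-dimensional or injective.
\item The statement covers arbitrary semifinite algebras. It is later applied to the finite algebras $\mathcal R_m$ of the Haagerup--Junge--Xu reduction, which are infinite-dimensional and, when $\M$ is not injective, cannot all be injective.
\item For a non-injective algebra such as $L(\mathbb F_2)$, no increasing family of finite-dimensional subalgebras has weakly dense union. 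Martingale recovery then only yields the divergence of the restrictions to some hyperfinite subalgebra, which need not equal $D_{1+s}(\rho\Vert\sigma)$. Your proposal offers no other way to recover it.
\item Your alternative choice does not help. The algebra generated by spectral projections of the non-commuting operators $e\rho e$ and $e\sigma e$ is in general infinite-dimensional, and it may itself be non-injective (two free semicircular elements generate $L(\mathbb F_2)$).
\item Restricting to abelian finite-dimensional subalgebras only produces measured R\'enyi divergences, which are in general strictly smaller than the sandwiched ones at the single-copy level.
\end{itemize}

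\textbf{The missing idea} is to make the problem commutative on tensor powers rather than finite-dimensional; this works in any von Neumann algebra. The paper proceeds as follows.
\begin{itemize}
\item It compresses to $e_{\delta,N}=\mathbf{1}_{[\delta,N]}(\sigma)$. This is a finite-trace but not finite-dimensional corner in which $\widetilde\sigma=e_{\delta,N}\sigma e_{\delta,N}$ is bounded and invertible.
\item It replaces $\widetilde\sigma$ by a finite-spectrum $k_\varepsilon$ with $2^{-\varepsilon}k_\varepsilon\le\widetilde\sigma\le2^{\varepsilon}k_\varepsilon$.
\item It pinches $\widetilde\rho^{\,\otimes m}$, where $\widetilde\rho=e_{\delta,N}\rho e_{\delta,N}$, with respect to the at most $(m+1)^K$ spectral projections of $k_\varepsilon^{\otimes m}$. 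The pinching inequality $x\le N_m\mathcal E_m(x)$ holds in any von Neumann algebra and yields $\psi(s)-(1+s)\frac{\log N_m}{m}\le\psi_m(s)\le\psi(s)$.
\item The pinched state commutes with $k_\varepsilon^{\otimes m}$, so the pair lies in an abelian algebra $L^\infty(\Omega_m,\mu_m)$ where Cram\'er's theorem applies.
\item Blocking $n=qm+r$ and applying Lemma~\ref{l-trans} successively in $m$, $\varepsilon$ and $(\delta,N)$ (the last via Proposition~\ref{thm:finite-trace-recoverability}) gives $\liminf_n\frac1n\log R_n\ge J(a)$.
\end{itemize}
At no point does this require approximating $\M$ by finite-dimensional algebras.
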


\begin{proof}
When $a>D_\infty(\rho \Vert \sigma)$, the asserted exponents are $-\infty$. Besides, if $s(\rho)\nleq s(\sigma)$, the asserted exponents are trivially zero. So we only consider $a<D_\infty(\rho \Vert \sigma)$ and $s(\rho)\leq s(\sigma)$.

Then we divide the proof into four steps.

\smallskip
\noindent
\emph{Step 1: the universal upper bound.}
  For $\gamma>0$, applying Proposition~\ref{prop:type-II-one-shot-renyi-bound}, we get
$$ \tau_n(\rho_n-\gamma \sigma_n)_+\leq \gamma^{-s}Q_{1+s}(\rho_n\Vert \sigma_n).$$

By additivity,
\[
    Q_{1+s}(\rho_n\Vert \sigma_n)=Q_{1+s}(\rho\Vert \sigma)^n
    =
    2^{n\psi(s)}.
\]
Taking \(\gamma=t 2^{na}\), we obtain
\[
    \tau_n(\rho_n-t 2^{na}\sigma_n)_+
    \le
    t^{-s}2^{n(\psi(s)-as)}.
\]
Hence
\[
    \limsup_{n\to\infty}
    \frac1n
    \log
    \tau_n(\rho_n-t 2^{na}\sigma_n)_+
    \le
    \inf_{s\ge0}\{\psi(s)-as\}
    =
    J(a).
\]

\smallskip

\noindent
\emph{Step 2: Finite-spectrum result.}
For the ``$\geq$'' side, when  $J(a)=-\infty$ the conclusion is trivial. Hence we can set $J(a)<\infty$ from now on. We first consider a finite algebra $\mathcal{N}$. Assume first that, for some finite $K$,
\[
    k=\sum_{j=1}^K \lambda_j e_j,\quad \sum_{j=1}^K e_j=\mathbf{1},
\]
where \(\lambda_j>0\) are mutually different, the projections \(e_j\) are mutually orthogonal. Then we can define
\[\mathcal{E}_k(\cdot)=\sum_{j=1}^{K} e_j(\cdot)e_j.\]

For $m\in \mathbb{N}$, we define
\[\mathcal E_m=\mathcal{E}_{k^{\otimes m}}\]
as the completely positive  trace-preserving
map onto the commutant of \(k^{\otimes m}\).  It is indeed
the spectral pinching associated with \(k^{\otimes m}\). Then every eigenvalue of \(k^{\otimes m}\) is of the form
\[
    \lambda_1^{n_1}\cdots \lambda_K^{n_K},
    \qquad
    n_j\in\mathbb N_0,\quad \sum_{j=1}^K n_j=m .
\]
Therefore \(k^{\otimes m}\) has at most
\[
    N_m
    \le
    \binom{m+K-1}{K-1}
    \le
    (m+1)^K
\]
different eigenvalues. The standard pinching inequality therefore gives
\[
    x\le N_m\mathcal E_m(x).
\]

Now set
\[
    \widehat\rho_m=\mathcal E_m(\rho^{\otimes m}),
    \qquad
    k_m=k^{\otimes m}.
\]
Then \(\widehat\rho_m\) commutes with \(k_m\).  Define
\[
    \psi_m(s)
    =
    \frac1m
    \log Q_{1+s}(\widehat \rho_m\Vert k_m)= \frac1m
    \log Q_{1+s}(\mathcal{E}_m( \rho_m)\Vert\mathcal{E}_m (k_m)).
\]
By data processing,
\[
    \psi_m(s)\le \psi(s).
\]
On the other hand, since
\[
    \rho^{\otimes m}\le N_m\widehat \rho_m,
\]
the monotonicity and homogeneity of \(Q_{1+s}\) give
\[
    Q_{1+s}(\rho_m\Vert k_m)
    \le
    N_m^{1+s}
    Q_{1+s}(\widehat \rho_m\Vert k_m).
\]
Equivalently,
\[
    \psi(s)
    \le
    \psi_m(s)
    +
    (1+s)\frac{\log N_m}{m}.
\]
We conclude that
\begin{align}\label{double-es1}
   \psi(s)-(1+s)
   \frac{\log N_m}{m} \leq \psi_m(s)\leq \psi(s).
\end{align}
Let
\[
    J_m(a)=\inf_{s\ge0}\{\psi_m(s)-as\}.
\]
Since
\[
    \frac{\log N_m}{m}
    \le
    K\frac{\log(m+1)}{m}
    \longrightarrow0,
\]
combining this with \eqref{double-es1}, we get $$\psi_m(s) \to \psi(s),\quad s\geq 0.$$
For $a<D_\infty(\rho\Vert \sigma)$, Lemma~\ref{l-trans} can be applied, hence one obtains
$$J_m(a)\to J(a).$$

\smallskip

\noindent
\emph{Step 3: Reduction to commutative case.}
Now we fix \(m\).  Since \(\widehat \rho_m\) and \(k_m\) commute, they generate a
commutative finite von Neumann subalgebra of $\mathcal{N}^{\overline{\otimes}m}$, which we denote by $\mathcal{A}_m$. Then there exists a normal \(*\)-isomorphism (see \cite{TakesakiI}) \[ \mathcal A_m\simeq L^\infty(\Omega_m,\mu_m)\] where $\mu_m $ is a probability measure. Thus they may be regarded as the classical densities.  The Cram\'er's theorem, e.g., \cite[Section~2.2]{DemboZeitouni}, implies that with the given pair
\( (\widehat \rho_m,k_m) \), for every \(b\in\mathbb R\) and every \(c>0\),
\[
    \lim_{q\to\infty}
    \frac1{qm}
    \log
    \tau_m^{\otimes q}
    \left(
        \widehat \rho_m^{\otimes q}
        -
        c 2^{qmb}k_m^{\otimes q}
    \right)_+
    =
    J_m(b).
\]

Now write \(n=qm+r\), where \(0\le r<m\).  Consider the positive
trace-preserving map
\[
    \Phi_{m,q,r}
    =
    \mathcal E_m^{\otimes q}
    \otimes
    \tau^{\otimes r}
    :
    L^1(\mathcal N^{\bar\otimes n})
    \to
    L^1((\mathcal N^{\bar\otimes m})^{\bar\otimes q}).
\]
It satisfies
\[
    \Phi_{m,q,r}(\rho_n)=\widehat \rho_m^{\otimes q},
    \qquad
    \Phi_{m,q,r}(k_n)=\tau(k)^r k_m^{\otimes q}.
\]
By the variational formula
\[
    \tau(x_+)=\sup_{0\le T\le\1}\tau(Tx),
\]
and the definition of $\Phi_{m,q,r}$, we get
\[
\begin{aligned}
    \tau_n(\rho_n-t 2^{na}k_n)_+
    &\ge
    \tau_m^{\otimes q}
    \left(
        \rho_m^{\otimes q}-\tau(k)^r t 2^{(qm+r)a}k_m^{\otimes q}
    \right)_+ .
\end{aligned}
\]
The extra factor \(\tau(k)^r\cdot t 2^{ra}\) is constant with respect to \(q\), hence does not
affect the exponential rate.  Letting $q\to \infty$, simultaneously with $n \to \infty$, we get
\[
    \liminf_{n\to\infty}
    \frac1n
    \log
    \tau_n(\rho_n-t 2^{na}k_n)_+
    \ge
    J_m(a).
\]
Then letting \(m\to\infty\), we get
\[
    \liminf_{n\to\infty}
    \frac1n
    \log
    \tau_n(\rho_n-t 2^{na}k_n)_+
    \ge
    J(a).
\]
Together with Step 1, this proves the positive-part formula when
\(k\) has finite-spectrum
decomposition.

\smallskip

\noindent
\emph{Step 4: Semifinite recovery.}
For a semifinite von Neumann algebra $\M$, let $e_{\delta,N}=\1_{[\delta,N]}(\sigma)$ for $ 0<\delta<N<\infty$.

Note that
$$ \tau_n
    \left(
        \rho_n-t 2^{na}\sigma_n^{\otimes n}
    \right)_+\geq \tau_n\left( e_{\delta,N}^{\otimes n}\rho_ne_{\delta,N}^{\otimes n}-t 2^{na}e_{\delta,N}^{\otimes n}\sigma_ne_{\delta,N}^{\otimes n}\right)_+.$$
Hence we can first reduce the question to the finite algebra $\mathcal{N}=e_{\delta,N}\M e_{\delta,N}$.

On such a finite algebra, $ \widetilde{\sigma}=e_{\delta,N}\sigma e_{\delta,N}$ is bounded and invertible. Hence we can find a finite-spectrum $k_\varepsilon\in L^1(\mathcal{N},\tau|_{\mathcal{N}})_+$ such that
$$ 2^{-\varepsilon}k_\varepsilon \leq \widetilde{\sigma}\leq 2^{\varepsilon}k_\varepsilon ,\quad \tau(k_\varepsilon)=\tau(\widetilde{\sigma}). $$
Denoting $\widetilde{\rho}=e_{\delta,N}\rho e_{\delta,N}$ and $\widetilde{\rho}_n=e_{\delta,N}^{\otimes n}\rho^{\otimes n} e_{\delta,N}^{\otimes n}$, we obtain
\[
    \tau_n
    \left(
        \widetilde{\rho}_n-t 2^{n(a+\varepsilon)}k_{\varepsilon}^{\otimes n}
    \right)_+
    \le
    \tau_n
    \left(
        \widetilde{\rho}_n-t 2^{na}\widetilde{\sigma}_n
    \right)_+
    \le
    \tau_n
    \left(
        \widetilde{\rho}_n-t 2^{n(a-\varepsilon)}k_{\varepsilon}^{\otimes n}
    \right)_+ .
\]
We put
$$  J_\varepsilon(a)=
    \inf_{s\ge0}s\bigl(D_{1+s}(\widetilde{\rho} \Vert k_\varepsilon)-a\bigr).$$
Applying the finite-spectrum decomposition conclusion to \(k_\varepsilon\), we obtain
\begin{align}\label{sand-est}
    J_\varepsilon(a+\varepsilon)\leq
    \liminf_{n\to\infty}
    \frac1n
    \log
    \tau_n(\widetilde{\rho}_n-t 2^{na}\widetilde{\sigma}_n)_+
    \le
    J_\varepsilon(a-\varepsilon).
\end{align}
The function $a\mapsto J(a)$ is the infimum of a family of affine functions and is therefore concave. This gives the continuity at $a$ whenever $J(a)<\infty$.

Then, the choice of $k_\varepsilon$ implies that
\[
    \left|
        D_\alpha(\widetilde{\rho}\Vert \widetilde{\sigma})
        -
        D_\alpha(\widetilde{\rho}\Vert k_\varepsilon)
    \right|
    \le
    \varepsilon,
    \qquad \alpha>1.
\]
Consequently, applying Lemma~\ref{l-trans} again, we get
\[
    J_\varepsilon(a)\to J(a)
\]
as \(\varepsilon\downarrow0\).  This gives
$$\liminf_{n\to \infty}\left(
        \rho_n-t 2^{na}\sigma_n^{\otimes n}
    \right)_+\geq \inf_{s\geq 0} s\Big(D_{1+s}(e_{\delta,N}\rho e_{\delta,N}\Vert e_{\delta,N}\sigma e_{\delta,N})-a\Big). $$ 
Finally, let
$$ \varphi_{\delta,N}(s)=sD_{1+s}(e_{\delta,N}\rho e_{\delta,N}\Vert e_{\delta,N}\sigma e_{\delta,N}),\quad \varphi(s)=sD_{1+s}(\rho \Vert \sigma).$$
By Proposition~\ref{thm:finite-trace-recoverability},
$$ \varphi_{\delta,N}(s)\to \varphi (s),\quad s\geq 0.$$
Then we can pass the limit to the semifinite case by Lemma~\ref{l-trans}. This gives the ``$\geq$'' in~\eqref{pos-part}. With \emph{Step 1}, we prove~\eqref{pos-part}.
\smallskip
\noindent
\emph{Conclusion.}
\[
    R_n(t)
    =
    \tau_n(\rho_n-t 2^{na}\sigma_n)_+ \leq \tau_n\left[
        \rho_n\{\rho_n>t 2^{na}\sigma_n\}
    \right]= P_n(t).
\]
Conversely, we have
\[
    t 2^{na}\tau_n(\sigma_n\{\rho_n>t 2^{na}\sigma_n\})
    \le
    \tau_n [\rho_n\{\rho_n>t 2^{na}\sigma_n\}]
    =
    P_n(t).
\]
Therefore
\[
\begin{aligned}
    R_n(t/2)
    &=
    \tau_n
    \left(
        \rho_n-\frac t2 2^{na}\sigma_n
    \right)_+                                      \\
    &\ge
    \tau_n
    \left[
        \left(
            \rho_n-\frac t2 2^{na}\sigma_n
        \right)
        \{\rho_n>t 2^{na}\sigma_n\}
    \right]                                      \\
    &\ge
    \frac12 P_n(t).
\end{aligned}
\]
Hence
\[
    R_n(t)\le P_n(t)\le 2R_n(t/2).
\]
Since the ``positive part'' formula \eqref{pos-part} holds for both \(t\) and \(t/2\),  \eqref{pro-part} follows.
\end{proof}
\begin{remark}\label{philo}
 A related result on $\sigma$-finite von Neumann algebras was recently established by Junge and Laracuente in~\cite{Junge2025}. Their result is actually an analogue of ~\eqref{pos-part}. However, here we also need the ``projection part'' formula~\eqref{pro-part}. We remark that, for~\eqref{pro-part}, the absence of trace brings ambiguity on its formulation for general von Neumann algebras.
\end{remark}

\subsection{Exact exponents of the smoothing quantity}
Combining the semifinite Datta--Renner type estimate
with the semifinite Mosonyi--Ogawa formula, we now get the following theorem:
\begin{theorem}
\label{thm:semifinite-tracial-smoothing-exponent}
Let \((\M,\tau)\) be a semifinite von Neumann algebra. Let
\(\rho \in \mathcal{S}(\M,\tau)\) and \(\sigma\in L^1(\M,\tau)_+\). Then, for every \(r\in \mathbb R\) with $r\neq D_\infty(\rho \Vert \sigma)$,
\[
    \lim_{n\to\infty}
    -\frac1n
    \log
    \Delta(\rho_n\Vert\sigma_n,nr)
    =
    \frac12
    \sup_{s\ge0}
    s\bigl(r-D_{1+s}(\rho\Vert\sigma)\bigr).
\]
\end{theorem}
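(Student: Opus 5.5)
The plan is to prove the two inequalities separately. The achievability bound (the exponent is at least the right-hand side) will follow from Corollary~\ref{cor:tracial-smoothing-estimate} together with the positive-part formula \eqref{pos-part}. The converse bound will follow from a two-outcome measurement argument together with the projection-part formula \eqref{pro-part}. Throughout I write \(J(a)=\inf_{s\ge0}s\bigl(D_{1+s}(\rho\Vert\sigma)-a\bigr)\), so the claimed exponent is \(-\tfrac12 J(r)\).

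First I dispose of degenerate cases. If \(r>D_\infty(\rho\Vert\sigma)\), then additivity of \(D_\infty\) gives \(\rho_n\le 2^{nD_\infty}\sigma_n\le 2^{nr}\sigma_n\). Hence \(\widetilde\rho=\rho_n\) is admissible, \(\Delta(\rho_n\Vert\sigma_n,nr)=0\), and both sides equal \(+\infty\), since \(J(r)=-\infty\). If \(s(\rho)\not\le s(\sigma)\), then \(J(r)=0\). The lower bound \(\liminf_n -\frac1n\log\Delta\ge0\) is trivial since \(\Delta\le1\). So the converse will still be needed, and it is handled by the same argument below. I therefore concentrate on the case \(r<D_\infty(\rho\Vert\sigma)\).

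\emph{Achievability.} I apply Corollary~\ref{cor:tracial-smoothing-estimate} to \((\M_n,\tau_n)\) with \(\gamma=2^{nr}\). This yields \(\widetilde\rho_n\in\mathcal S_\le(\M_n,\tau_n)\) with \(\widetilde\rho_n\le 2^{nr}\sigma_n\) and
\[
P(\rho_n,\widetilde\rho_n)\le 2\sqrt{\delta_n},\qquad \delta_n=\tau_n(\rho_n-2^{nr}\sigma_n)_+ .
\]
By \eqref{pos-part} with \(a=r\) and \(t=1\), we have \(\frac1n\log\delta_n\to J(r)\). Hence
\[
\liminf_n -\tfrac1n\log\Delta(\rho_n\Vert\sigma_n,nr)\ge -\tfrac12J(r)=\tfrac12\sup_{s\ge0}s\bigl(r-D_{1+s}(\rho\Vert\sigma)\bigr).
\]
Actually, only Step 1 of Theorem~\ref{thm:finite-tracial-positive-part-hoeffding} is needed here, namely the universal upper bound on \(\delta_n\).

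\emph{Converse.} Fix \(a\in(r,D_\infty(\rho\Vert\sigma))\), and let \(E_n=\{\rho_n>2^{na}\sigma_n\}\). Let \(\widetilde\rho\in\mathcal S_\le(\M_n)\) be any candidate with \(\widetilde\rho\le 2^{nr}\sigma_n\). Put
\[
p_n=\tau_n(\rho_nE_n),\qquad q=\tau_n(\widetilde\rho E_n).
\]
Then
\[
q\le 2^{nr}\tau_n(\sigma_nE_n)\le 2^{-n(a-r)}p_n ,
\]
where the last step uses \(E_n(\rho_n-2^{na}\sigma_n)E_n\ge0\). Data processing of the generalized fidelity under the two-outcome measurement \(x\mapsto(\tau_n(E_nx),\tau_n((\1-E_n)x))\) gives
\[
F(\rho_n,\widetilde\rho)\le\sqrt{p_nq}+\sqrt{(1-p_n)(1-q)}+\text{(subnormalization term)}\le p_n2^{-n(a-r)/2}+\sqrt{1-p_n}.
\]
Some care is needed with the correction term \(\sqrt{(1-\tau\rho)(1-\tau\widetilde\rho)}\), which vanishes because \(\rho_n\) is normalized. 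Using \(\sqrt{1-p}\le1-p/2\), this gives \(1-F\ge p_n/4\) for large \(n\), so \(P(\rho_n,\widetilde\rho)^2\ge 1-F\ge p_n/4\). Taking the infimum over \(\widetilde\rho\) and applying \eqref{pro-part} with \(t=1\) yields
\[
\limsup_n -\tfrac1n\log\Delta(\rho_n\Vert\sigma_n,nr)\le-\tfrac12J(a).
\]
Finally I let \(a\downarrow r\). The function \(J\) is concave and finite on \((-\infty,D_\infty)\), hence continuous at \(r\). This gives the matching bound \(-\tfrac12J(r)\).

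The main obstacle is this converse step. One needs the projection-part formula \eqref{pro-part} rather than the positive-part formula, and one must carefully justify the fidelity bound via the classical two-point reduction for subnormalized states in the semifinite setting. The continuity-in-\(a\) argument also requires \(a\ne D_\infty\), which is why the case \(r=D_\infty(\rho\Vert\sigma)\) is excluded from the statement.
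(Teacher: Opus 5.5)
Your proof is correct. The achievability half is the same as the paper's: Corollary~\ref{cor:tracial-smoothing-estimate} with $\gamma=2^{nr}$, plus the Rényi upper bound on $\tau_n(\rho_n-2^{nr}\sigma_n)_+$. You are right that only Step~1 of Theorem~\ref{thm:finite-tracial-positive-part-hoeffding} is needed there. The converse takes a slightly different route.

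The paper measures with the projection $\{\rho_n>9\cdot 2^{nr}\sigma_n\}$, a fixed constant factor above the admissible level $2^{nr}$. This gives $q_n\le p_n/9$ for every $n$, hence the non-asymptotic bound $\Delta(\rho_n\Vert\sigma_n,nr)\ge\frac{\sqrt2}{3}\sqrt{p_n}$. The exponent is then read off directly from \eqref{pro-part} at $a=r$, $t=9$, with no limiting procedure.

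You instead separate the threshold exponentially, using $2^{na}$ with $a>r$. This makes the cross term $\sqrt{p_nq}$ exponentially negligible, so you only need \eqref{pro-part} at $t=1$. The price is the extra limit $a\downarrow r$, which requires continuity of $J$ at $r$. That holds, but finiteness of $J$ on $(-\infty,D_\infty(\rho\Vert\sigma))$ deserves a line. Use $D_{1+s}\uparrow D_\infty$ to restrict to $s\in[0,S_0]$, and a lower bound such as $D_{1+s}(\rho\Vert\sigma)\ge D(\rho\Vert\sigma)\ge-\log\tau(\sigma)$ on that range. Monotonicity of $J$ in $a$ then gives the limit from the correct side. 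The paper's constant-factor trick avoids this step and is a little cleaner. Your version works with the single threshold family $t=1$.

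One minor slip: after the binary measurement, the second fidelity term is $\sqrt{(1-p_n)(\tau_n(\widetilde\rho)-q)}$, not $\sqrt{(1-p_n)(1-q)}$. Since yours is an upper bound for it, your estimate $1-F\ge p_n/4$ is unaffected.
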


\begin{proof}
Corollary~\ref{cor:tracial-smoothing-estimate} gives the achievability bound by applying it to
\(\rho^{\otimes n}\), \(\sigma^{\otimes n}\), and \(\gamma_n=2^{nr}\).
The exponent is then obtained from the positive-part formula in Theorem~\ref{thm:finite-tracial-positive-part-hoeffding}.

For the converse, choose an arbitrary  \(\omega_n \in \mathcal{S}_\leq(\M_n,\tau_n)\) such that \(\omega_n\leq2^{nr}\sigma^{\ot n}\). Define
\begin{equation*}
    Q_n=\{\rho^{\ot n}>9\cdot2^{nr}\sigma^{\ot n}\}
\end{equation*}
and
\begin{equation*}
    p_n:=\tau_n(\rho^{\ot n}Q_n),
    \qquad
    q_n:=\tau_n(\omega_n Q_n).
\end{equation*}
Since $Q_n(\rho^{\ot n}-9\cdot2^{nr}\sigma^{\ot n})Q_n\geq0$ and $\omega_n\leq2^{nr}\sigma^{\ot n}$,
\begin{equation}
\label{eq:pq}
    p_n\geq9\cdot2^{nr}\tau_n(\sigma^{\ot n}Q_n)\geq9q_n.
\end{equation}
Apply the binary measurement $\{Q_n,\1_{\M_n}-Q_n\}$.  Data processing inequality then gives
\begin{align*}
    F(\rho^{\ot n},\omega_n)
    &\leq
    F\bigl((p_n,1-p_n),(q_n,\tau_n(\omega_n)-q_n)\bigr) \\
    &\leq \sqrt{p_nq_n}+\sqrt{(1-p_n)(\tau_n(\omega_n)-q_n)} \\
    &\leq \frac{p_n}{3}+\sqrt{1-p_n},
\end{align*}
where \eqref{eq:pq} and $\tau_n(\omega_n)\leq1$ have been used.  Hence
\begin{align*}
    P(\rho^{\ot n},\omega_n)
    &\geq
    \sqrt{1-\left(\frac{p_n}{3}+\sqrt{1-p_n}\right)^2} \\
    &=
    \sqrt{p_n\left(1-\frac{p_n}{9}-\frac{2}{3}\sqrt{1-p_n}\right)} \\
    &\geq
    \sqrt{p_n\left(\frac13-\frac{p_n}{9}\right)}
    \geq \frac{\sqrt2}{3}\sqrt{p_n}.
\end{align*}
Since $\omega_n$ is arbitrary,
\begin{equation}
\label{eq:delta-pn}
    \Deltaq(\rho^{\ot n}\Vert\sigma^{\ot n},nr)
    \geq
    \frac{\sqrt2}{3}\sqrt{p_n}.
\end{equation}
By Theorem~\ref{thm:finite-tracial-positive-part-hoeffding} with $a=r$ and $t=9$,
\begin{equation}
\label{eq:pn-exp}
    \lim_{n\to\infty}\frac1n\log p_n
    =
    \infs s\bigl(D_{1+s}(\rho\Vert\sigma)-r\bigr)
    =-E(r).
\end{equation}
Combining \eqref{eq:delta-pn} and \eqref{eq:pn-exp} yields
\begin{equation*}
\label{eq:conv-final}
    \limsup_{n\to\infty}
    -\frac1n\log
    \Deltaq(\rho^{\ot n}\Vert\sigma^{\ot n},nr)
    \leq
    \frac12E(r).
\end{equation*}
Together with the achievability bound, we complete the proof.
\end{proof}

Since the formula has been established on tracial cases, we extend it to the following result through Haagerup--Junge--Xu reduction.
\begin{theorem}
\label{thm:general-smoothing-exponent}
Let \(\M\) be a von Neumann algebra. Let
\(\rho \in \mathcal{S}(\M)\) and \(\sigma\in \M_*^+\). Then, for every \(r\in \mathbb R\) with $r\neq D_\infty(\rho \Vert \sigma)$,
\[
    \lim_{n\to\infty}
    -\frac1n
    \log
    \Delta(\rho_n\Vert\sigma_n,nr)
    =
    \frac12
    \sup_{s\ge0}
    s\bigl(r-D_{1+s}(\rho\Vert\sigma)\bigr).
\]
Here the term with \(s=0\) is understood as \(0\), and terms with
\(D_{1+s}(\rho\Vert\sigma)=+\infty\) contribute \(-\infty\) inside the
supremum.
\end{theorem}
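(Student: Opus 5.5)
The plan is to transfer Theorem~\ref{thm:semifinite-tracial-smoothing-exponent} from the finite algebras $\mathcal R_m$ of the Haagerup--Junge--Xu reduction. We would prove the two inequalities separately, using only monotonicity in $m$ at the level of $n$ copies, so that limits in $n$ and $m$ never have to be interchanged.

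\emph{Preliminary reductions.} The degenerate cases come first.
\begin{itemize}
\item If $r>D_\infty(\rho\Vert\sigma)$, then $\rho_n\le 2^{nr}\sigma_n$. Hence $\Delta(\rho_n\Vert\sigma_n,nr)=0$, and both sides equal $+\infty$.
\item We then assume $r<D_\infty(\rho\Vert\sigma)$.
\item If $s(\rho)\not\le s(\sigma)$, the right-hand side is $0$. For the left-hand side we use Uhlmann monotonicity under the binary measurement $\{s(\sigma)^{\otimes n},\1-s(\sigma)^{\otimes n}\}$. This shows that $\Delta(\rho_n\Vert\sigma_n,nr)$ is bounded below by a positive constant, since $\rho_n(s(\sigma)^{\otimes n})=\rho(s(\sigma))^n\to 0$.
\item Otherwise we compress to the corner $s(\sigma)\M s(\sigma)$, which changes none of the quantities. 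So we may assume $\sigma$ is faithful and take $\varphi=\sigma/\sigma(\1)$.
\end{itemize}

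\emph{Reduction step.} Apply the reduction to $\M_n=\M^{\bar\otimes n}$ with the faithful state $\varphi^{\otimes n}$. Proposition~\ref{prop:crossed-product-retract} and Proposition~\ref{prop:haagerup-information-approximation} then give
\[
\Delta\bigl((\rho_n)_m\Vert(\sigma_n)_m,nr\bigr)\nearrow\Delta(\rho_n\Vert\sigma_n,nr)\quad (m\to\infty).
\]
This is not directly usable, because $(\rho_n)_m$ is not a tensor power on a finite algebra. We would therefore use instead the $n$-fold tensor product of the single-copy reduction. Set $\mathcal R_m^{\bar\otimes n}\subset\mathcal R^{\bar\otimes n}$, with the conditional expectation $E^{\otimes n}$ and the embedding $\iota^{\otimes n}$. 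These still form a normal UCP retract, so Proposition~\ref{prop:crossed-product-retract} gives invariance of $\Delta$ for $\widehat\rho^{\otimes n},\widehat\sigma^{\otimes n}$. Restriction to $\mathcal R_m^{\bar\otimes n}$ is a channel, so data processing gives
\[
\Delta(\rho_m^{\otimes n}\Vert\sigma_m^{\otimes n},nr)\le\Delta(\rho_n\Vert\sigma_n,nr).
\]
Each $\mathcal R_m$ is finite, so Theorem~\ref{thm:semifinite-tracial-smoothing-exponent} applies. Here $\rho_m=\widehat\rho|_{\mathcal R_m}$ is a state and $\sigma_m=\widehat\sigma|_{\mathcal R_m}$. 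This yields
\[
\limsup_n -\tfrac1n\log\Delta(\rho_n\Vert\sigma_n,nr)\le \tfrac12 E_m(r),
\]
where $E_m(r)=\sup_{s\ge0}s(r-D_{1+s}(\rho_m\Vert\sigma_m))$, for all $m$ with $r\ne D_\infty(\rho_m\Vert\sigma_m)$. Since $D_\infty(\rho_m\Vert\sigma_m)\nearrow D_\infty(\rho\Vert\sigma)>r$, eventually $r<D_\infty(\rho_m\Vert\sigma_m)$. Because $\varphi_m(s)=sD_{1+s}(\rho_m\Vert\sigma_m)\to\varphi(s)$ pointwise by martingale recovery, Lemma~\ref{l-trans} gives $E_m(r)\to E(r)$. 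This proves the upper bound on the exponent.

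\emph{Achievability.} The reverse inequality needs an upper bound on $\Delta$, and restriction to subalgebras goes the wrong way. We would therefore use the lifting argument from the proof of Proposition~\ref{prop:haagerup-information-approximation}, with the tensor conditional expectation $\Phi_m^{\otimes n}:\mathcal R^{\bar\otimes n}\to\mathcal R_m^{\bar\otimes n}$, which preserves $\widehat\sigma^{\otimes n}$. This gives
\[
\Delta(\rho_n\Vert\sigma_n,nr)\le P\bigl(\widehat\rho^{\otimes n},(\widehat\rho\circ\Phi_m)^{\otimes n}\bigr)+\Delta(\rho_m^{\otimes n}\Vert\sigma_m^{\otimes n},nr).
\]
The first term is the main obstacle: it does not decay exponentially, since the fidelity of $n$ copies decays like $F^n$.

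To avoid this term we would use the universal one-shot bound instead. The Mosonyi--Ogawa upper bound (Step~1) only needs $Q_{1+s}$, and $Q_{1+s}$ is defined on general algebras. So what is needed is a Datta--Renner estimate directly on $\mathcal R^{\bar\otimes n}$, or on $\M_n$, in the form
\[
\Delta(\rho_n\Vert\sigma_n,nr)\lesssim\bigl(2^{-nrs}Q_{1+s}(\rho_n\Vert\sigma_n)\bigr)^{1/2}.
\]
We would try to obtain this at fixed $n$ by applying Corollary~\ref{cor:tracial-smoothing-estimate} together with Proposition~\ref{prop:type-II-one-shot-renyi-bound} on the finite algebra $\mathcal R_{m'}$ of the reduction for $\M_n$. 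This gives
\[
\Delta\bigl((\rho_n)_{m'}\Vert(\sigma_n)_{m'},nr\bigr)\le 2\bigl(2^{-nrs}Q_{1+s}((\rho_n)_{m'}\Vert(\sigma_n)_{m'})\bigr)^{1/2}\le 2\bigl(2^{-nrs}Q_{1+s}(\rho\Vert\sigma)^n\bigr)^{1/2},
\]
where the last inequality uses data processing and additivity. Letting $m'\to\infty$ and using the monotone convergence of $\Delta$ from Proposition~\ref{prop:haagerup-information-approximation}, we get this bound for $\Delta(\rho_n\Vert\sigma_n,nr)$ for every $s\ge0$. Optimizing over $s$ yields
\[
\liminf_n -\tfrac1n\log\Delta(\rho_n\Vert\sigma_n,nr)\ge\tfrac12E(r),
\]
which completes the proof.
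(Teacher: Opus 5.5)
Your proof is correct. The converse half is the paper's argument: you apply data processing down to the tensor powers $\mathcal R_m^{\bar\otimes n}$ of the one-copy reduction, use Theorem~\ref{thm:semifinite-tracial-smoothing-exponent} on the finite algebra $\mathcal R_m$, and then let $m\to\infty$ via martingale recovery and Lemma~\ref{l-trans}. Your check that $D_\infty(\rho_m\Vert\sigma_m)\nearrow D_\infty(\rho\Vert\sigma)$, so that eventually $r<D_\infty(\rho_m\Vert\sigma_m)$, fills in a point the paper leaves implicit.

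For achievability, the term $P\bigl(\widehat\rho^{\otimes n},(\widehat\rho\circ\Phi_m)^{\otimes n}\bigr)$ that you call the main obstacle is in fact harmless, and the paper handles it directly. For each fixed $n$ it tends to $0$ as $m\to\infty$. The R\'enyi bound $\Delta(\rho_m^{\otimes n}\Vert\sigma_m^{\otimes n},nr)\le 2\cdot 2^{-\frac{ns}{2}(r-D_{1+s}(\rho\Vert\sigma))}$ is uniform in $m$. Choosing $m=m(n)$ so that the error is below $2^{-n^2}$ makes it invisible at exponential scale, so its failure to decay in $n$ at fixed $m$ never matters.

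Your workaround is nonetheless valid and somewhat cleaner. You run the reduction on $\M_n$ itself with the faithful state $\varphi^{\otimes n}$, combine Corollary~\ref{cor:tracial-smoothing-estimate} with Proposition~\ref{prop:type-II-one-shot-renyi-bound} on each finite $\mathcal R_{m'}$, and pass to the limit by the monotone recovery of $\Delta$. This gives the one-shot bound $\Delta(\rho_n\Vert\sigma_n,nr)\le 2\bigl(2^{-nrs}Q_{1+s}(\rho\Vert\sigma)^n\bigr)^{1/2}$ on an arbitrary von Neumann algebra, with no additive error and no diagonal choice of $m(n)$. It also uses Proposition~\ref{prop:haagerup-information-approximation} exactly as stated. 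The paper's $\varepsilon_m^{(n)}$ step, by contrast, needs the analogue of that lifting argument for $\Phi_m^{\otimes n}$ on $\mathcal R^{\bar\otimes n}$, which is not literally what the proposition asserts. The paper's route, in exchange, uses a single approximating scheme (tensor powers of the one-copy reduction) for both directions.
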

% \begin{proof}
% Follow the notations in Section~\ref{reduct}. Let
% \[
% \M\subseteq \mathcal{R}=\M \rtimes_\alpha G
% \]
% be the its lifting. Also put
% \[
% \widehat\rho=\rho\circ E,\qquad
% \widehat\sigma=\sigma\circ E.
% \]
% By Proposition~\ref{prop:haagerup-information-approximation}, for every \(\alpha\geq 1\),
% \[
%  D_\alpha^{\mathcal R_m}
%     (\rho_m\Vert\sigma_m)
%     \nearrow
%     D_\alpha^{\mathcal R}
%     (\widehat\rho\Vert\widehat\sigma)
%     =
%     D_\alpha^{\mathcal M}
%     (\rho\Vert\sigma).
% \]
% and, for every fixed \(n\) and \(\lambda\in\mathbb R\), there exists $\varepsilon_{m}^{(n)}\to 0$ such that
% \begin{align*}
% \Delta_{\mathcal{R}_m^{\bar\otimes n}}
% \left(
% \widehat\rho_m^{\otimes n}\Vert
% \widehat\sigma_m^{\otimes n},\lambda
% \right)
% \leq
% \Delta_{\M^{\bar\otimes n}}
% \left(
% \rho^{\otimes n}\Vert
% \sigma^{\otimes n},\lambda
% \right)\leq
% \Delta_{\mathcal{R}_m^{\bar\otimes n}}
% \left(
% \widehat\rho_m^{\otimes n}\Vert
% \widehat\sigma_m^{\otimes n},\lambda
% \right)+\varepsilon_{m}^{(n)}.
% \end{align*}

% Then, applying Lemma~\ref{l-trans} with
% \(
% sD_{1+s}(\rho_m\Vert\sigma_m)
%    \to
% sD_{1+s}(\rho\Vert\sigma)
% \)
% gives the convergence of the corresponding Legendre transforms. For the exponent side, choose \(m=m(n)\) such that
% \(\varepsilon_{m(n),n}\) is exponentially negligible. Thus the approximation implies the convergence. This completes the proof.
% \end{proof}
\begin{proof}
After checking the trivial cases, we may assume that
\(s(\rho)\leq s(\sigma)\) and
\(r<D_\infty(\rho\Vert\sigma)\). Following the notation of Haagerup-Junge-Xu reduction in Section~\ref{reduct}, let \(\rho_m,\sigma_m\in \mathcal{R}_m\) be the approximants of $\rho,\sigma$.

Proposition~\ref{prop:haagerup-information-approximation} gives
\[
\limsup_{n\to\infty}
-\frac1n\log
\Delta(\rho^{\otimes n}\Vert\sigma^{\otimes n},nr)
\leq
\limsup_{n\to\infty}
-\frac1n\log
\Delta(\rho_m^{\otimes n}\Vert\sigma_m^{\otimes n},nr).
\]
and then applying Theorem~\ref{thm:semifinite-tracial-smoothing-exponent} gives
\[
\limsup_{n\to\infty}
-\frac1n\log
\Delta(\rho^{\otimes n}\Vert\sigma^{\otimes n},nr)
\leq
\frac12\sup_{s\geq0}
s\bigl(r-D_{1+s}(\rho_m\Vert\sigma_m)\bigr).
\]
Letting \(m\to\infty\), by martingale
recovery Proposition~\ref{prop:haagerup-information-approximation} and Lemma~\ref{l-trans}, one obtains the required upper bound.

For the lower bound, the one-shot estimate Proposition~\ref{prop:type-II-one-shot-renyi-bound}, together with the data processing inequality, gives
\[
\Delta(\rho_m^{\otimes n}\Vert\sigma_m^{\otimes n},nr)
\leq
2\,2^{-\frac{ns}{2}
(r-D_{1+s}(\rho_m\Vert\sigma_m))}\leq 2\,2^{-\frac{ns}{2}
(r-D_{1+s}(\rho\Vert\sigma))}
\]
uniformly in \(m\) for every \(s\geq0\). By Proposition~\ref{prop:haagerup-information-approximation} again, for each $n$, there exists $\varepsilon_m^{(n)}\to 0$ such that
$$ \Delta(\rho^{\otimes n}\Vert\sigma^{\otimes n},nr)\leq \Delta(\rho_m^{\otimes n}\Vert\sigma_m^{\otimes n},nr)+\varepsilon_m^{(n)}.$$
Choosing a sequence \(m=m(n)\) such that $\varepsilon_{m(n)}^{(n)}\leq 2^{-n^2}$. Then it is negligible on the exponential scale , which yields
\[
\liminf_{n\to\infty}
-\frac1n\log
\Delta(\rho^{\otimes n}\Vert\sigma^{\otimes n},nr)
\geq
\frac12\sup_{s\geq0}
s\bigl(r-D_{1+s}(\rho\Vert\sigma)\bigr).
\]
This completes the proof.
\end{proof}

% \begin{proof}
% After passing to the support corner, we may assume that \(\sigma\) is
% faithful. Let
% \[
%     \widehat\rho=\rho\circ E,\qquad
%     \widehat\sigma=\sigma\circ E,
% \]
% and let \(\rho_m,\sigma_m\) be their restrictions to \(\mathcal R_m\).
% Set
% \[
%     E_m(r)
%     :=
%     \frac12\sup_{s\geq0}
%     s\bigl(r-D_{1+s}(\rho_m\Vert\sigma_m)\bigr).
% \]
% By data processing and Theorem~5,
% \[
%     \limsup_{n\to\infty}
%     -\frac1n
%     \log\Delta(\rho^{\otimes n}\Vert\sigma^{\otimes n},nr)
%     \leq E_m(r),
% \]
% where a harmless perturbation of \(r\) may be used if
% \(r=D_\infty(\rho_m\Vert\sigma_m)\). Proposition~2 and Lemma~1 then give
% \[
%     E_m(r)\longrightarrow
%     \frac12\sup_{s\geq0}
%     s\bigl(r-D_{1+s}(\rho\Vert\sigma)\bigr).
% \]

% Conversely, the one-shot estimate used in the proof of Theorem~5,
% together with
% \[
%     D_{1+s}(\rho_m\Vert\sigma_m)
%     \leq D_{1+s}(\rho\Vert\sigma),
% \]
% gives the reverse bound after choosing, for each \(n\), an index \(m=m(n)\)
% in Proposition~2 so that the approximation error is exponentially
% negligible. This standard diagonal argument completes the proof.
% \end{proof}

\section{Quantum information decoupling in von Neumann algebras}\label{decouple}
We regard \(\M\) as
the reference system and \(\mathcal H\) as the controlled quantum system which is a finite-dimensional Hilbert space. A bipartite state is represented by a normal positive state
\[
    \rho_{\M\mathcal H}
    \in
   \mathcal{S}(\M\bar\otimes \mathcal{B}(\mathcal H)).
\]
Its marginal on the reference algebra is denoted by
\[
    \rho_\M(x)
    =
    \rho_{\M\mathcal H}(x\otimes \1_\mathcal{H}),\quad x\in \M.
\]
For \(\alpha\ge 1\), we define
\[
    I_\alpha(\M:\mathcal H)_\rho
    =
    \inf_{\sigma_{\mathcal H}\in \mathcal{S}(\mathcal H)}
    D_\alpha
    \left(
        \rho_{\M\mathcal H}
        \,\middle\|\,
        \rho_\M\otimes\sigma_{\mathcal H}
    \right).
\]

The purpose of this section is to study catalytic decoupling from a general von Neumann algebraic reference system. In contrast to the finite-dimensional
case, the reference algebra \(\M\) may be an arbitrary von Neumann algebra.

We use two equivalent formulations:

\noindent{\bf Random-unitary formulation.} In the random-unitary formulation,
one adjoins a finite-dimensional catalyst \(\mathcal K\), prepares it in
a state \(\eta_{\mathcal K}\in \mathcal{S}(\mathcal K)\), and applies a random-unitary
channel. In the Heisenberg picture, the random-unitary channel acting on $\mathcal{H}\otimes\mathcal{K}$ is defined as
\[
    \Phi(X)
    =
    \frac1m
    \sum_{i=1}^m
    U_i^* X U_i,
    \qquad
    U_i\in \mathcal{U}(\mathcal H\otimes\mathcal K).
\]
The resulting output state is
$$
  (\operatorname{id}_{\M}\otimes \Phi)_{*} (\rho_{\M \mathcal{H}}\otimes \eta_\mathcal{K}).
$$
The target is a product state
with marginal \(\rho_\M\) on the reference algebra. The cost is \(\log m\).
Thus we define \(P^{\mathrm{dec-u}}_{\M:\mathcal H}(\rho_{\M\mathcal H},r)\)
as the infimum of
\[
    P\left(
        (\operatorname{id}_\M\otimes\Phi)_{*}
        (\rho_{\M\mathcal H}\otimes\eta_{\mathcal K}),
        \rho_\M\otimes\omega_{\mathcal H\mathcal K}
    \right)
\]
over all finite-dimensional catalysts \(\mathcal K\), all catalyst states
\(\eta_{\mathcal K}\in \mathcal{S}(\mathcal K)\), all output states
\(\omega_{\mathcal H\mathcal K}\in \mathcal{S}(\mathcal H\otimes\mathcal K)\), and
all random-unitary channels \(\Phi\) of the above form satisfying
\[
    \log m\le r.
\]

\noindent{\bf Subsystem-removal formulation.} In the subsystem-removal formulation, one adjoins a finite-dimensional
catalyst \(\mathcal K\), applies a unitary
\[
    U:\mathcal H\otimes\mathcal K
    \longrightarrow
    \mathcal H'\otimes\mathcal L,
\]
and discards the finite-dimensional subsystem \(\mathcal L\). The cost
is \(\log\dim\mathcal L\).
% We then define
% \begin{align*}
% \Phi_{U,\mathcal{L}}\;:\; \B(\mathcal{H}^\prime)&\to \B(\mathcal{H}\otimes \mathcal{K}),\\
% x\;\; &\mapsto U^* (x\otimes \1_\mathcal{L}) U.
% \end{align*}
We define
\(P^{\mathrm{dec}}_{M:\mathcal H}(\rho_{M\mathcal H},r)\) as the infimum of
\[ P\big(\Phi_*^{U,\mathcal{L}}( \rho_{\M\mathcal H}\otimes\eta_{\mathcal K}),\rho_\M \otimes \omega_{\mathcal{H}^\prime}\big)\]
over all finite-dimensional catalysts \(\mathcal K\), all catalyst states
\(\eta_{\mathcal K}\in \mathcal{S}(\mathcal K)\), all finite-dimensional output
systems \(\mathcal H'\), all discarded systems \(\mathcal L\) with
\[
    \log\dim\mathcal L\le r,
\]
all states \(\omega_{\mathcal H'}\in \mathcal{S}(\mathcal H')\), and all unitaries
\(U:\mathcal H\otimes\mathcal K\to\mathcal H'\otimes\mathcal L\).
 For $x\in \M $ and $y\in \B(\mathcal{H}^\prime)$, the output state is characterized by
\[
   \Phi_*^{U,\mathcal{L}}(\rho_{\M\mathcal H}\otimes\eta_{\mathcal K})(x\otimes y)=
            (\rho_{\M\mathcal H}\otimes\eta_{\mathcal K})(x\otimes U^* (y\otimes \1_\mathcal{L})U).
\]

The standard equivalence between these two formulations is implemented
entirely on the finite-dimensional side. Hence the same argument as in
the matrix-algebraic setting gives the normalization relation
\begin{align}\label{normalize}
    P^{\mathrm{dec}}_{\M:\mathcal H}
    (
        \rho_{\M\mathcal H},r
    )
    =
    P^{\mathrm{dec-u}}_{\M:\mathcal H}
    (
        \rho_{\M\mathcal H},2r
    ).
\end{align}
See \cite[Proposition~6]{LiYao2024} for the proof.

For tensor products, we consider
\[
    \rho_{\M\mathcal H}^{\otimes n}
    \in
    \left(\left(
        \M
        \bar\otimes
        \mathcal{B}(\mathcal H)\right)^{\bar\otimes n}
    \right)_*^+.
\]
The reliability function is defined by
\[
    E^{\mathrm{dec-u}}_{\M:\mathcal H}(\rho_{\M\mathcal H},2r)
    =
    E^{\mathrm{dec}}_{\M:\mathcal H}(\rho_{\M\mathcal H},r)
    =
    \limsup_{n\to\infty}
    \frac{-1}n
    \log
    P^{\mathrm{dec}}_{\M^{\bar\otimes n}:\mathcal H^{\otimes n}}
    \left(
        \rho_{\M\mathcal H}^{\otimes n},
        nr
    \right).
\]

The main task of this section is to identify this exponent. Our conclusion is stated as follows:
\begin{theorem}
\label{thm:semifinite-decoupling-reliability}
Let \(\M\) be a von Neumann
algebra and \(\mathcal H\) be a finite-dimensional Hilbert space.  Let
\[
    \rho_{\M \mathcal H}
    \in
    \mathcal{S}(
        \M \bar\otimes \mathcal{B}(\mathcal H)
    )
\]
be a normal state. Then, for every \(r\ge0\),
\[
    E^{\mathrm{dec}}_{\M:\mathcal H}
    (
        \rho_{\M\mathcal H},r
    )
    \ge
    \sup_{0\le s\le1}
    s\left(
        r-\frac12 I_{1+s}(\M:\mathcal H)_\rho
    \right),
\]
and
\[
    E^{\mathrm{dec}}_{\M:\mathcal H}
    (
        \rho_{\M\mathcal H},r
    )
    \le
    \sup_{s\ge0}
    s\left(
        r-\frac12 I_{1+s}(\M:\mathcal H)_\rho
    \right),\quad r\neq \frac{1}{2}I_{\infty}(\M :\mathcal{H})_\rho.
\]

Consequently, if
\[
    r\le
    R^\sharp
    =
    \frac12
    \left.
    \frac{d}{ds}
    \left(
        s I_{1+s}(\M:\mathcal H)_\rho
    \right)
    \right|_{s=1},
\]
then the two bounds coincide and
\[
    E^{\mathrm{dec}}_{\M:\mathcal H}
    (
        \rho_{\M\mathcal H},r
    )
    =
    \max_{0\le s\le1}
    s\left(
        r-\frac12 I_{1+s}(\M:\mathcal H)_\rho
    \right).
\]
\end{theorem}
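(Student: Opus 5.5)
We will follow the finite-dimensional strategy of Li and Yao, replacing each matrix-specific step with the operator-algebraic tools developed above. By the normalization relation \eqref{normalize}, it suffices to work with the random-unitary formulation at cost $2r$. Throughout, $\mathcal H$ and the catalyst stay finite dimensional, so every construction acts only on the finite-dimensional tensor factor.

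\emph{Achievability (lower bound).} The key one-shot ingredient is a convex-split estimate with a general reference algebra. We first prove an operator layer-cake formula on arbitrary von Neumann algebras, of the form
\[
\omega=\int_0^\infty \bigl(\text{part of }\omega\text{ above level }\lambda\bigr)\,d\lambda .
\]
In the semifinite case this is written through the spectral projections $\{\rho>\lambda\sigma\}$; in general it is stated for normal functionals via the positive-part decomposition. From it we derive the following bound. For $\rho_{\M\mathcal H}\le 2^{\lambda}\rho_\M\otimes\sigma_{\mathcal H}$ with smoothing,
\[
P\bigl(\text{convex-split state},\ \rho_\M\otimes\omega\bigr)\lesssim \Delta(\rho_{\M\mathcal H}\Vert\rho_\M\otimes\sigma_{\mathcal H},\lambda)+\sqrt{2^{\lambda}/m}.
\]
A cleaner route avoids smoothing: bound the convex-split error directly by $2^{-\frac{s}{2}(\log m - D_{1+s})}$ for $0\le s\le1$. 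This requires a Rényi-2 type collision estimate (the restriction $s\le1$ comes from here), proved first for finite algebras $\mathcal R_m$ via traces and then transferred. The random-unitary implementation of convex split on $\mathcal H\otimes\mathcal K$ is unchanged from \cite{LiYao2024}, since it lives on the finite-dimensional side.

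Applying the estimate to $\rho^{\otimes n}$ with $\sigma_{\mathcal H}^{\otimes n}$ replaced by a universal symmetric state (which costs only polynomial factors, because $\mathcal H$ is finite dimensional) and using additivity of $D_{1+s}$ gives the bound $\sup_{0\le s\le1} s(r-\tfrac12 I_{1+s})$. To pass from finite algebras to general $\M$, we use Haagerup--Junge--Xu reduction. Conditional expectations $E$ and $\Phi_m$ act only on the $\M$ leg, so they commute with the decoupling operations. Data processing then shows the one-shot error on $\M$ is bounded by the error on $\mathcal R_m$ plus an $\varepsilon_m^{(n)}$ term, and we choose $m(n)$ with $\varepsilon_{m(n)}^{(n)}\le 2^{-n^2}$, exactly as in Theorem~\ref{thm:general-smoothing-exponent}.

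\emph{Converse (upper bound).} Take any protocol with error $\epsilon$. Removing $\mathcal L$ of dimension $2^{nr}$ with a uniform catalyst gives the operator inequality
\[
\rho_{\M\mathcal H}\otimes\eta_{\mathcal K}\le |\mathcal L|^2\,\rho_\M\otimes\widetilde\omega_{\mathcal H\mathcal K},
\]
valid on the purified output close to the product state. This yields $\Delta(\rho^{\otimes n}\Vert\rho_\M^{\otimes n}\otimes\sigma_n,2nr)\le\epsilon$ for some $\sigma_n$ on $\mathcal H^{\otimes n}$. We then apply Theorem~\ref{thm:general-smoothing-exponent}, together with the universal symmetric state to remove the optimization over $\sigma_n$ at polynomial cost. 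This gives $\tfrac12\sup_{s\ge0}s(2r-I_{1+s})$, which equals $\sup_s s(r-\tfrac12 I_{1+s})$. The excluded rate corresponds to $r\neq D_\infty$ in that theorem. Equality below $R^\sharp$ follows because the concave Legendre supremum is attained at some $s\le1$ exactly when $r\le R^\sharp$.

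\emph{Main obstacle.} The hardest step is the convex-split estimate for a non-tracial $\M$. It needs the layer-cake formula and the Rényi collision bound without traces on $\M$, and the minimization over $\sigma_{\mathcal H}$ must commute with the reduction limit. We would first prove everything on $\mathcal R_m$, then use martingale recovery of $I_{1+s}$ (Proposition~\ref{prop:haagerup-information-approximation} plus compactness of $\mathcal S(\mathcal H)$) to pass to the limit.
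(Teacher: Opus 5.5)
Your architecture matches the paper's: convex split plus $P^2\le D$ for achievability, $P^{\mathrm{dec}}(\rho,k)\ge\delta(\rho,2k)$ plus a smoothing exponent for the converse, and concavity of $s\mapsto s(r-\frac12 I_{1+s})$ for the critical rate. However, the two steps that carry the weight are not actually supplied.

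\textbf{Converse.} After you symmetrize and use $\sigma_n\le g_n\omega^u_n$, what remains is $\Delta(\rho_{\M\mathcal H}^{\otimes n}\Vert\rho_\M^{\otimes n}\otimes\omega^u_n,2nr+\log g_n)$. Here $\rho_\M^{\otimes n}\otimes\omega^u_n$ is not a tensor power. Theorem~\ref{thm:general-smoothing-exponent} is an i.i.d.\ statement (its proof runs through blocking and Cram\'er's theorem for $(\rho^{\otimes n},\sigma^{\otimes n})$), so it says nothing about this quantity. The universal state removes the optimization over $\sigma_n$ at the level of R\'enyi quantities, but not at the level of the large-deviation exponent.

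This is why the paper needs the separate Theorem~\ref{general-smooth-max-info}, whose converse half (Lemma~\ref{prop:semifinite-smooth-max-info}) is where the work is done:
\begin{enumerate}
\item pinching with respect to $k^{\otimes n}\otimes\sigma_{\mathcal H^n}$ for finite-spectrum $k$ on a finite algebra (the Li--Yao argument for non-product second arguments);
\item a sandwich $2^{-2\varepsilon}\pi_\varepsilon\le k\le 2^{2\varepsilon}\pi_\varepsilon$ by finite-spectrum $\pi_\varepsilon$, with errors uniform in $\sigma_{\mathcal H}$;
\item the corner channel $x\mapsto(e_{\delta,N}xe_{\delta,N},\tau((\1-e_{\delta,N})x))$;
\item only then the reduction with compactness of $\mathcal S(\mathcal H)$, which is the part you do mention.
\end{enumerate}
A cosmetic point: your operator inequality should be stated for an Uhlmann-chosen state close to $\rho_{\M\mathcal H}\otimes\eta_{\mathcal K}$. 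The catalyst is arbitrary; the maximally mixed state lives on $\mathcal L$.

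\textbf{Achievability.} The heart of the matter is the dimension-free bound $D(\Omega_m\Vert\Theta_m)\lesssim 2^{-s(\log m-D_{1+s})}$ for every $0<s\le1$. It must hold on algebras like $\mathcal R_m$, where the marginal may have diffuse spectrum and the finite-dimensional pinching argument is unavailable. A collision (R\'enyi-2) estimate does not give this; it yields only $s=1$. Reaching $s<1$ by truncating at a level $\gamma$ and paying $\gamma^{-s}Q_{1+s}$ for the discarded part produces only $2^{-\frac{s}{1+s}(\log m-D_{1+s})}$. The same happens with your smoothed bound $\Delta+\sqrt{2^\lambda/m}$. Either way you get the strictly weaker factor $\frac{s}{1+s}$ in place of $s$.

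The missing idea is the paper's:
\begin{enumerate}
\item Donald's identity plus data processing reduce everything to $D(\rho\Vert\theta)-D(\rho\Vert\theta+\frac1m\rho)$.
\item The relevant layer-cake formula is not a level decomposition of a functional. It is $\D\ln(X)[Y]=\int_0^\infty\{uX<Y\}\,du$ (Lemma~\ref{lemma:layer-cake}), valid in any von Neumann algebra thanks to Lemma~\ref{zero-level-average}.
\item This gives $\ln(A+B)-\ln B=\int_0^\infty\frac{1}{1+\gamma}\{A>\gamma B\}\,d\gamma$.
\item Young's inequality $\frac{1}{1+\gamma}\le c_s\gamma^{s-1}$ (the actual source of $s\le1$), Proposition~\ref{lem:finite-corner-change-variable} and Araki--Lieb--Thirring then give Theorem~\ref{general-convexsplit}.
\end{enumerate}
The paper transfers this scalar inequality to arbitrary $\M$ by martingale recovery of $D$ and $Q_{1+s}$, so no diagonal choice $m(n)$ is needed. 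If you instead transfer the decoupling error itself, note that restriction to $\mathcal R_m$ moves the inequality the wrong way. You would have to lift via $\omega\mapsto\omega\circ\Phi_m$ and use norm martingale convergence, as in Proposition~\ref{prop:haagerup-information-approximation}.
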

This result has a conceptual consequence. Although the reference system is now allowed to be an arbitrary von Neumann algebra, the reliability law remains valid and can be characterized by the R\'enyi mutual information under the
critical rate.

As a result, the decoupling exponent should be regarded as an intrinsic operator-algebraic quantity, rather than a consequence of finite-dimensional matrix techniques. In this sense, the operational content of catalytic decoupling remains stable when the reference system is replaced by a more complicated von Neumann algebraic system.

\subsection{Generalized layer-cake and convex split lemma}
In this subsection we prove a generalized version of the
Cheng--Gao--Hirche--Huang--Liu inequality, i.e. \cite[Theorem~1]{ChengGaoHircheHuangLiu2025}. A key step in the proof of this inequality is the layer-cake lemma in Cheng--Liu \cite{cheng2025}. However, Cheng--Liu's proof relies on the countability of the spectrum, while such a property does not necessarily hold for general von Neumann algebras. To address this issue, we give Lemma~\ref{zero-level-average}, which was established for semifinite von Neumann algebras in a recent work \cite[Lemma~6.3]{Cheng2026}.

For a bounded invertible positive operator \(X\in \mathcal \M_+\) and a
self-adjoint \(Y\in \mathcal \M\), we denote by
\[
    \D \ln(X)[Y]
    =
    \left.\frac{d}{dt}\right|_{t=0}\ln(X+tY)
    =
    \lim_{t\to0}
    \frac{\ln(X+tY)-\ln(X)}{t}
\]
the Fréchet derivative of the operator logarithm at \(X\) in the
direction \(Y\).  The limit is taken in operator norm.  In this sense,
one has the well-known representation
\[
    \D \ln(X)[Y]
    =
    \int_0^\infty
    (X+\lambda \mathbf{1})^{-1}Y(X+\lambda \mathbf{1})^{-1}\,d\lambda.
\]

We now give several auxiliary results.
\begin{lemma}\label{zero-level-average}
Let $\M$ be a von Neumann algebra, let $X\in \M_+$ be bounded and
invertible, and let $Y=Y^*\in \M$. Put
\[
Z=X^{-1/2}YX^{-1/2},\qquad T_u=Y-uX,
\]
and denote
\[
P_u=\1_{\{0\}}(T_u),\qquad E_u=\1_{\{u\}}(Z).
\]
Then, for every interval \(I\subseteq \mathbb R\),
\[
\int_I\, P_u\,du=0
\]
in the $\sigma$-weak topology.

\end{lemma}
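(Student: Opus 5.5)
The plan is to reduce the statement to a fact about the single self-adjoint operator \(Z=X^{-1/2}YX^{-1/2}\in\M\), and then use that \(Z\) has at most countably many eigenvalues as a bounded self-adjoint operator, even though its spectrum may be uncountable.

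\emph{Step 1: relating \(P_u\) and \(E_u\).} Since \(X\) is bounded and invertible,
\[
T_u=Y-uX=X^{1/2}(Z-u\1)X^{1/2}.
\]
Hence \(\ker T_u=X^{-1/2}\ker(Z-u\1)\). In particular \(P_u=0\) if and only if \(E_u=0\), and more precisely the range of \(P_u\) is the image of the range of \(E_u\) under the invertible operator \(X^{-1/2}\).

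\emph{Step 2: only countably many \(u\) matter.} The projections \(E_u\), for distinct \(u\), are mutually orthogonal spectral projections of \(Z\). Fix a normal faithful state, or work vectorwise in \(\mathcal H\). For each vector \(\xi\) we have \(\sum_u\langle\xi,E_u\xi\rangle\le\|\xi\|^2\), so only countably many \(u\) have \(E_u\xi\neq0\). Without \(\sigma\)-finiteness, I would argue directly: a nonzero \(E_u\) gives an eigenvector of \(Z\) for the eigenvalue \(u\). Since eigenvectors for distinct eigenvalues are orthogonal, the set \(\{u: E_u\neq 0\}\) is at most countable whenever \(\mathcal H\) is separable. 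In general I would instead pair with vectors, as follows.

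\emph{Step 3: the \(\sigma\)-weak integral.} The integral \(\int_I P_u\,du\) is defined weakly through the functions \(u\mapsto\langle\xi,P_u\eta\rangle\). For fixed \(\xi\), the function \(u\mapsto\langle\xi,P_u\xi\rangle\) vanishes except when \(\ker T_u\) is nonzero and \(P_u\xi\neq0\). The subspaces \(\ker T_u=X^{-1/2}\ker(Z-u)\) are not orthogonal, so I would use the bijection instead:
\[
\|P_u\xi\|\le\sup\{|\langle\zeta,\xi\rangle|:\zeta\in\ker T_u,\ \|\zeta\|=1\}.
\]
Writing \(\zeta=X^{-1/2}\zeta'\) with \(\zeta'\in E_u\mathcal H\) and \(\|\zeta'\|\le\|X\|^{1/2}\), this gives
\[
\|P_u\xi\|\le\|X\|^{1/2}\|X^{-1/2}\|\,\|E_u X^{-1/2}\xi\|.
\]
Therefore \(P_u\xi\neq0\) forces \(E_uX^{-1/2}\xi\neq0\), which happens for at most countably many \(u\) by orthogonality of the \(E_u\). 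Thus \(u\mapsto\langle\xi,P_u\xi\rangle\) is nonzero only on a countable, hence Lebesgue-null, set, and it is bounded by 1. So it is integrable with integral zero. Polarization extends this to \(\langle\xi,P_u\eta\rangle\). The integral over \(I\) is therefore the zero operator, with the \(\sigma\)-weak meaning following from boundedness and density of vector functionals.

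\emph{Main obstacle.} The main point is measurability of \(u\mapsto P_u\) and the fact that the integral is well defined. Since the integrand vanishes off a countable set, it is Lebesgue-a.e. zero, so measurability is automatic with respect to the completed Lebesgue measure. I expect this to be the only delicate point, and completeness of Lebesgue measure settles it.
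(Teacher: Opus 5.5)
Your argument is correct and is essentially the paper's. The paper proves $P_u\le\|X\|_\infty X^{-1/2}E_uX^{-1/2}$ and uses that the spectral measure $S\mapsto\varphi(X^{-1/2}\1_S(Z)X^{-1/2})$ has only countably many atoms; this is your vectorwise comparison plus Bessel's inequality for the orthogonal $E_u$, and your countable-support observation makes the paper's separate measurability argument via $f_n(T_u)$ unnecessary. One slip: the correct bound is $\|P_u\xi\|\le\|X\|^{1/2}\|E_uX^{-1/2}\xi\|$, without the extra factor $\|X^{-1/2}\|$ (your version fails for $X=4\cdot\1$), but you only use the implication $P_u\xi\neq0\Rightarrow E_uX^{-1/2}\xi\neq0$, so nothing breaks.
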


\begin{proof}
We first note that the scalar functions
\[
u\mapsto \varphi(P_u),\qquad \varphi\in \M_*,
\]
are Borel measurable. Indeed, let
\[
f_n(t)=\max\{1-n|t|,0\},\qquad t\in\mathbb R.
\]
Then $f_n\in C_b(\mathbb R)$ and
\[
f_n(T_u)\longrightarrow \1_{\{0\}}(T_u)=P_u
\]
strongly for every $u$. Since $u\mapsto T_u$ is norm-continuous, the map
$u\mapsto f_n(T_u)$ is norm-continuous. Hence
$u\mapsto \varphi(f_n(T_u))$ is continuous, and
$u\mapsto \varphi(P_u)$ is Borel measurable as a pointwise limit.

We next compare $P_u$ with the atomic spectral projection of $Z$ at $u$. Working in a faithful
normal representation, if $T_u\xi=0$, then
\[
0=(Y-uX)\xi=X^{1/2}(Z-u\1)X^{1/2}\xi.
\]
Since $X^{1/2}$ is invertible, this implies
\[
X^{1/2}\xi\in E_u\mathcal H.
\]
Thus
\[
\operatorname{Ran}(P_u)\subseteq X^{-1/2}E_u\mathcal H.
\]
Let $\ell(X^{-1/2}E_u)$ be the left support projection of $X^{-1/2}E_u$.
Then
\[
P_u\leq \ell(X^{-1/2}E_u).
\]
Moreover,
\[
(X^{-1/2}E_u)^*(X^{-1/2}E_u)
=
E_uX^{-1}E_u
\geq \|X\|_\infty^{-1}E_u.
\]
Consequently,
\[
X^{-1/2}E_uX^{-1/2}
\geq
\|X\|_\infty^{-1}\ell(X^{-1/2}E_u),
\]
and hence
\[
P_u\leq
\|X\|_\infty X^{-1/2}E_uX^{-1/2}.
\]

Now fix $\varphi\in \M_*^+$. Define another positive normal functional
\[
\psi(a)=\varphi(X^{-1/2}aX^{-1/2}),\qquad a\in \M.
\]
Then
\[
0\leq \varphi(P_u)
\leq
\|X\|_\infty \psi(E_u)<\infty.
\]
Let $\mu$ be the finite positive Borel measure on $\mathbb R$ defined by
\[
\mu(S)=\psi(\1_S(Z)),\qquad S\subset\mathbb R\ \text{Borel}.
\]
Then
\(
\psi(E_u)=\mu(\{u\}).
\)
The set of atoms of a finite positive Borel measure is at most countable.
Hence \(
u\mapsto \mu(\{u\})
\) is $0$ almost everywhere in the sense of Lebesgue measure. Therefore we get
\[
\int_I\,\mu(\{u\})\,du=0.
\]
It follows that
\[
0\leq
\int_I\,\varphi(P_u)\,du
\leq
\|X\|_\infty
\int_I\,\mu(\{u\})\,du
=0.
\]
Thus
\[
\int_I\,\varphi(P_u)\,du=0,
\qquad \varphi\in \M_*^+.
\]

Let
\[
Q=\int_I P_u\,du
\]
be the $\sigma$-weak integral. Since $P_u\geq0$, we have $Q\geq0$.
The preceding computation gives
\[
\varphi(Q)=0,\qquad \varphi\in \mathcal M_*^+.
\]
Since positive normal functionals separate the positive cone of $\M$, we
obtain $Q=0$. This proves the claim.
\end{proof}
\begin{remark}
In Cheng-Liu~\cite{cheng2025}, they prove the layer-cake lemma for the reason that the ``singular" spectrum is always countable, hence a zero-measure set. However, the previous lemma reveals that, even though the spectrum of a general operator may not be countable, the  ``singular" part still contributes nothing. This property indeed relies on the von Neumann-algebraic structure.
\end{remark}

The following is the general operator layer-cake lemma. The proof is a modified version of the initial proof in \cite{cheng2025}, with the obstacle being overcame by Lemma~\ref{zero-level-average}. 
\begin{lemma}
\label{lemma:layer-cake}
Let \(\M\) be an arbitrary von Neumann algebra.  Let
\(X\in\M_+\) be bounded and  invertible, and let
\(Y=Y^*\in\M\).  Then
\[
    \D \ln(X)[Y]
    =
    \int_0^\infty \{uX<Y\}\,du
    -
    \int_{-\infty}^0 \{uX>Y\}\,du .
\]
In particular, if \(Y\ge0\), then
\[
    \D \ln(X)[Y]
    =
    \int_0^\infty \{uX<Y\}\,du .
\]
\end{lemma}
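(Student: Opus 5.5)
The plan is to start from the integral representation
\[
    \D\ln(X)[Y]=\int_0^\infty (X+\lambda\1)^{-1}Y(X+\lambda\1)^{-1}\,d\lambda
\]
and reduce it to a one-variable spectral computation.

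\emph{Step 1: symmetric reduction.} Write $X+\lambda\1=X^{1/2}(\1+\lambda X^{-1})X^{1/2}$. This step does not yield a clean commutative reduction, because $X$ and $Y$ need not commute. Instead I will use the scalar identity
\[
    \frac{1}{(x+\lambda)^{2}}=\int_0^\infty t\,e^{-t(x+\lambda)}\,dt
\]
together with the following idea due to Cheng--Liu: rewrite the derivative as $\frac{d}{dt}\big|_{t=0}\ln(X+tY)$ and use the representation
\[
    \ln A=\int_0^\infty\bigl(\1_{\{A>v\}}-\1_{\{v<1\}}\bigr)\,\frac{dv}{v},
\]
that is, $\ln a=\int_0^\infty(\mathbf 1_{a>v}-\mathbf 1_{1>v})\,dv/v$. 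Differentiating a spectral projection in $t$ is ill-behaved, however. So the cleaner route is a direct verification: I show that both sides define the same bounded operator by pairing them with arbitrary $\varphi\in\M_*$ and using a \emph{first-order perturbation} argument.

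\emph{Step 2: the perturbative layer-cake identity.} For $\eps\ne0$ small, I start from
\[
    \frac{\ln(X+\eps Y)-\ln X}{\eps}=\frac1\eps\int_0^\infty\Bigl(\{X+\eps Y>v\}-\{X>v\}\Bigr)\frac{dv}{v}.
\]
Substituting $v=w$ and comparing with the layer sets of $Y-uX$ is possible but messy. Instead, I will follow Cheng--Liu's original route, which first proves the formula when $Z=X^{-1/2}YX^{-1/2}$ and $X$ generate a situation where one can compute. Concretely:

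\begin{enumerate}
\item Check the identity for the resolvent integrand. For fixed $\lambda$, write $(X+\lambda)^{-1}Y(X+\lambda)^{-1}$ by the spectral decomposition of $Y-uX$ via the resolvent integral. Equivalently, prove the operator identity
\[
    \int_0^\infty\{uX<Y\}\,du-\int_{-\infty}^0\{uX>Y\}\,du
\]
by showing that for every $\xi$ in the representation space the pairing $\langle\xi,\cdot\,\xi\rangle$ of both sides agree.
\item Use the derivative of the map $u\mapsto \1_{(0,\infty)}(Y-uX)$. This map is monotone decreasing in $u$, since $Y-uX$ decreases in $u$ and $X>0$ (Kato-type monotonicity of the spectral counting). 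The argument works in distribution sense, because jumps in the map occur only at levels where $P_u=\1_{\{0\}}(T_u)\ne0$.
\item Lemma~\ref{zero-level-average} guarantees that the set of such levels contributes nothing to $\int_I P_u\,du$. Hence, in the integrals, strict and non-strict inequalities can be interchanged freely, and $u\mapsto\{uX<Y\}$ is $\sigma$-weakly measurable.
\end{enumerate}

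\emph{Step 3: the key identity.} Concretely, I would verify
\[
    \int_a^b \{uX<Y\}\,du
\]
via the Heinz-type integral trick. For bounded $u$-range, $\int_{-R}^{R}\1_{(0,\infty)}(Y-uX)\,du$ is compared with $\D\ln(X)[Y]$ through the limiting resolvent formula $\D\ln(X)[Y]=\lim_{\eps\to0}\eps^{-1}(\ln(X+\eps Y)-\ln X)$. The spectral-shift identity
\[
    \ln(X+\eps Y)-\ln X=\int_0^\infty\bigl(\{X+\eps Y>v\}-\{X>v\}\bigr)\frac{dv}{v}
\]
is then rescaled by $v=\eps^{-1}$-dependent substitutions, turning level sets $\{X+\eps Y>v\}$ into $\{uX<Y\}$-type sets up to error terms. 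Those error terms are supported on $P_u$, which Lemma~\ref{zero-level-average} kills.

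The positive case $Y\ge0$ follows immediately: for $u<0$ we have $uX<0\le Y$, so $\{uX>Y\}=0$.

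\emph{Main obstacle.} I expect the main obstacle to be Step 3: rigorously justifying the exchange of the $\eps\to0$ limit with the $v$-integral in the $\sigma$-weak sense, uniformly in the level parameter. I would try first to pair with $\varphi\in\M_*^+$ and use dominated convergence, using that $\{uX<Y\}=0$ for $u>\|X^{-1}\|\|Y\|$ and $\{uX>Y\}=0$ for $u<-\|X^{-1}\|\|Y\|$. This confines all integrals to a compact interval, and the only non-convergent points are the atoms handled by Lemma~\ref{zero-level-average}.
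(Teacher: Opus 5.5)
Your proposal has a genuine gap: it never gives an exact identity linking the level sets of the pencil $u\mapsto Y-uX$ to the Fr\'echet derivative, and that link is the entire content of the lemma. The peripheral parts are correct and match the paper: the $u$-integrals live on a compact interval because $\{uX<Y\}=0$ for $u>\|X^{-1}\|\,\|Y\|$; the case $Y\ge0$ is immediate; and Lemma~\ref{zero-level-average} disposes of the projections $P_u=\1_{\{0\}}(Y-uX)$. The core step, however, fails.

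In Step~3 you start from $\ln(X+\eps Y)-\ln X=\int_0^\infty(\{X+\eps Y>v\}-\{X>v\})\,dv/v$, where the level sets are taken against scalar multiples of $\1$. In the commutative case, the substitution that turns these into $\{Y>uX\}$ is $v=(1+\eps u)x$, which depends on the spectral value $x$ of $X$. When $X$ and $Y$ do not commute, this substitution would have to be operator-valued, so no scalar change of variables yields $\{uX<Y\}$. What you would really need is a relative representation $\ln A-\ln B=\int_0^\infty(\{A>vB\}-\1_{\{v<1\}})\,dv/v$ for non-commuting $A=X+\eps Y$, $B=X$; then $v=1+\eps u$ would work. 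But that identity is as deep as the lemma itself. The paper in fact derives $\ln(A+B)-\ln B=\int_0^\infty\frac{1}{1+\gamma}\{A>\gamma B\}\,d\gamma$ from this lemma in the proof of Proposition~\ref{prop:finite-corner-CG}. Your claim that the error terms are ``supported on $P_u$'' has no basis: for $v$ in the spectrum of $X$, the projections $\{X+\eps Y>v\}$ are not even norm-continuous in $\eps$ at $0$, so there is no first-order expansion whose remainder could be localized on the $P_u$.

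Step~2 also asserts that $u\mapsto\{Y-uX>0\}$ is monotone decreasing, and this is false in the operator order. Take $X=\mathrm{diag}(1,4)$ and $Y$ with zero diagonal and off-diagonal entries $2$. The positive spectral projections of $Y-uX$ at $u=0$ and $u=1/2$ are distinct rank-one projections, hence incomparable. Only a dimension count is monotone, and that is unavailable without a trace.

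The idea you are missing is the paper's complex-analytic device. It replaces $\1_{(0,\infty)}$ and $\1_{(-\infty,0)}$ by $\Pi^\pm_\eps(x)=\frac12\bigl(1\pm\frac2\pi\arctan(x/\eps)\bigr)$ and writes these through principal logarithms $\Ln(\pm ix+\eps)$. It then observes that $z\mapsto\Ln(i(Y-zX)+\eps\1)$ is holomorphic on the half-strip $\operatorname{Im}z>-\eps/(2\|X\|)$, since the real part of $i(Y-zX)+\eps\1$ is $(\operatorname{Im}z)X+\eps\1$; the other sign is handled symmetrically. Cauchy's theorem then trades the real integral $\int_{-r}^r du$, where the level sets are singular, for integrals over the semicircles $C_r^\pm$. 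On those arcs the limit $\eps\downarrow0$ is harmless and converges in norm to $\D\ln(X)[Y]$. On the real side, the limit is $\sigma$-weak by bounded convergence and leaves only $\frac12\int P_u\,du$, which Lemma~\ref{zero-level-average} removes. Nothing in your proposal plays the role of this contour deformation, or of any other exact identity between $\int\{Y>uX\}\,du$ and the resolvent integral. As written it lists possible routes rather than giving a proof.
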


\begin{proof}
Put
\[
    Z=X^{-1/2}YX^{-1/2}.
\]
Choose \(r>0\) so large that
\[
    r\1>2Z>-r\1 .
\]
Then \(Y-uX<0\) for \(u\ge r\), and \(Y-uX>0\) for \(u\le -r\).  Hence the
claimed identity is equivalent to
\[
    \D \ln(X)[Y]
    =
    \int_0^r \{Y-uX>0\}\,du
    -
    \int_{-r}^0 \{Y-uX<0\}\,du .
\]
We write \(T_u=Y-uX\).

For \(\eps>0\), define the scalar functions
\[
    \Pi_\eps^+(x)
    =
    \frac12\left(1+\frac2\pi\arctan\frac{x}{\eps}\right),
    \qquad
    \Pi_\eps^-(x)
    =
    \frac12\left(1-\frac2\pi\arctan\frac{x}{\eps}\right).
\]
Equivalently, adopting the principal branch of \(\ln\), which we denote by $\Ln$, we get
\[
    \Pi_\eps^+(x)
    =
    \frac{1}{2\pi i}\bigl(\Ln(ix+\eps)-\Ln(-ix+\eps)+\pi i\bigr),
\]
and
\[
    \Pi_\eps^-(x)
    =
    \frac{1}{2\pi i}\bigl(-\Ln(ix+\eps)+\Ln(-ix+\eps)+\pi i\bigr).
\]
By the continuous functional calculus, for each fixed \(\eps>0\),
\[
\begin{aligned}
&\int_0^r \Pi_\eps^+(T_u)\,du
  -\int_{-r}^0 \Pi_\eps^-(T_u)\,du                                     \\
&\quad =
\frac{1}{2\pi i}
\left[
    \int_{-r}^r \Ln(i(Y-uX)+\eps\1)\,du
    -
    \int_{-r}^r \Ln(-i(Y-uX)+\eps\1)\,du
\right].
\end{aligned}
\]
Here the integrals are norm Bochner integrals, because the integrands are
norm-continuous in \(u\).

Let
\[
    C_r^+=\{r\mathrm{e}^{i\theta}:0\leq\theta\leq\pi\},
    \qquad
    C_r^-=\{r\mathrm{e}^{i\theta}:-\pi\leq\theta\leq0\},
\]
with the indicated counterclockwise orientations. If
\(\operatorname{Im}z>-\eps/(2\norm{X}_\infty)\), then, since
\(0<X\leq \norm{X}_\infty\1\),
\[
\begin{aligned}
    \operatorname{Re}(i(Y-zX)+\eps\1)
    =
    (\operatorname{Im}z)X+\eps\1                            \geq
    \bigl(\eps+\operatorname{Im}z\,\norm{X}_\infty\bigr)\1
    >\frac{\eps}{2}\1 .
\end{aligned}
\]
Thus the first logarithm is holomorphic on the open strip
\(\operatorname{Im}z>-\eps/(2\norm{X}_\infty)\), which contains the
closed upper half-disk bounded by \([-r,r]\) and \(C_r^+\).  Similarly, if
\(\operatorname{Im}z<\eps/(2\norm{X}_\infty)\), then
\[
    \operatorname{Re}(-i(Y-zX)+\eps\1)>\frac{\eps}{2}\1,
\]
and this strip contains the closed lower half-disk bounded by \([-r,r]\) and
\(C_r^-\).  Consequently the maps
\[
    z\mapsto \Ln(i(Y-zX)+\eps\1),
    \qquad
    z\mapsto \Ln(-i(Y-zX)+\eps\1)
\]
are holomorphic on the corresponding open half-neighbourhoods.  Cauchy's
integral theorem gives
\[
    \int_{-r}^r \Ln(i(Y-uX)+\eps\1)\,du
    =
    -\int_{C_r^+}\Ln(i(Y-zX)+\eps\1)\,dz,
\]
and
\[
    -\int_{-r}^r \Ln(-i(Y-uX)+\eps\1)\,du
    =
    -\int_{C_r^-}\Ln(-i(Y-zX)+\eps\1)\,dz.
\]
Consequently,
\begin{equation}
\label{eq:regularized-arc}
\begin{aligned}
&\int_0^r \Pi_\eps^+(T_u)\,du
  -\int_{-r}^0 \Pi_\eps^-(T_u)\,du                                     \\
&\quad =
-
\frac{1}{2\pi i}
\left[
    \int_{C_r^+}\Ln(i(Y-zX)+\eps\1)\,dz
    +
    \int_{C_r^-}\Ln(-i(Y-zX)+\eps\1)\,dz
\right].
\end{aligned}
\end{equation}

We now let \(\eps\downarrow0\).  We first consider the left-hand side.  For
all real \(x\),
\[
    \Pi_\eps^+(x)\longrightarrow \1_{(0,\infty)}(x)+\frac12\1_{\{0\}}(x),
    \quad
    \Pi_\eps^-(x)\longrightarrow \1_{(-\infty,0)}(x)+\frac12\1_{\{0\}}(x)
\]
pointwise. Let \(P_u=\1_{\{0\}}(T_u)\).  After testing against any positive normal
functional \(\varphi\in\M_*^+\), bounded convergence gives
\[
\begin{aligned}
\int_0^r \varphi(\Pi_\eps^+(T_u))\,du
&\longrightarrow
\int_0^r \varphi(\{T_u>0\})\,du
+\frac12\int_0^r \varphi(P_u)\,du,                                     \\
\int_{-r}^0 \varphi(\Pi_\eps^-(T_u))\,du
&\longrightarrow
\int_{-r}^0 \varphi(\{T_u<0\})\,du
+\frac12\int_{-r}^0 \varphi(P_u)\,du .
\end{aligned}
\]
Lemma~\ref{zero-level-average} shows that $\int_I \varphi(P_u)du$ is zero for $I=[-r,0]$ and $I=[0,r]$. Hence
\begin{equation*}
\int_0^r \Pi_\eps^+(T_u)\,du
  -\int_{-r}^0 \Pi_\eps^-(T_u)\,du
\xrightarrow[\eps\downarrow0]{\sigma-w}
\int_0^r\{T_u>0\}\,du-
\int_{-r}^0\{T_u<0\}\,du .
\end{equation*}
This is precisely the point where the finite-dimensional null-set argument in
Cheng--Liu is replaced.

Besides, we have
\begin{align*}
    -\frac{1}{2\pi i}
    \Big[\int_{C_r^+}\Ln(i(Y-zX)+\eps\1)\,dz
    +
    \int_{C_r^-}&\Ln(-i(Y-zX)+\eps\1)\,dz\Big]
 \\
&\to \D \ln(X)[Y].
\end{align*}
This operator-norm convergence follows from the calculation in \cite[version 1, Lemma B.2, from (132)]{cheng2025}, also see \cite[Theorem 6.1, (135)-(159)]{Cheng2026}.
% For the reader's convienience, we include  the caluculation in Appendix~\ref{app:contour-calculation}.
Combining these limits in \eqref{eq:regularized-arc}, we prove the result.
\end{proof}

\begin{proposition}
\label{lem:finite-corner-change-variable}
Let \((\mathcal N,\tau)\) be a finite von Neumann algebra.  Let \(A,B\in\mathcal N_+\), and assume that
\(B\ge \delta \mathbf{1}\) for some \(\delta>0\).  Then, for every \(0<s\le1\),
\[
    \int_0^\infty
    \tau\!\left[A\{A>\gamma B\}\right]\gamma^{s-1}\,d\gamma
    =
    \int_0^\infty
    \tau\!\left[
        \left(
            (B+t\mathbf{1})^{-1/2}A(B+t\mathbf{1})^{-1/2}
        \right)^{1+s}
    \right]\,dt .
\]
\end{proposition}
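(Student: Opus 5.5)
The identity is an operator-valued Mellin-type formula. I will prove it by converting the left-hand side into a weighted integral of the positive-part function $h(\gamma)=\tau[(A-\gamma B)_+]$, and then matching that with the right-hand side through the layer-cake Lemma~\ref{lemma:layer-cake}. As a sanity check, in the commutative case with scalars $a\ge 0$ and $b>0$ both sides equal $\frac1s a^{1+s}b^{-s}$. Indeed,
\[
    \int_0^{a/b} a\gamma^{s-1}\,d\gamma=\frac{a^{1+s}b^{-s}}{s},
    \qquad
    \int_0^\infty \frac{a^{1+s}}{(b+t)^{1+s}}\,dt=\frac{a^{1+s}b^{-s}}{s}.
\]
This computation will guide the bookkeeping of constants.

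\emph{Step 1 (left-hand side via the positive part).} Since $\tau$ is finite, $A,B$ are bounded and $B\ge\delta\1$, the function $h$ is finite, convex and nonincreasing. It vanishes for $\gamma\ge \|A\|_\infty/\delta$. On the set of $\gamma$ where the zero-level projection $\1_{\{0\}}(A-\gamma B)$ vanishes, it has derivative
\[
    h'(\gamma)=-\tau[B\{A>\gamma B\}].
\]
By Lemma~\ref{zero-level-average} (with $X=B$, $Y=A$, tested against the normal functional $\tau$), this zero-level projection contributes nothing after integrating in $\gamma$. Hence the formula for $h'$ holds for a.e.\ $\gamma$. Then
\[
    \tau[A\{A>\gamma B\}]=h(\gamma)-\gamma h'(\gamma).
\]
Integrating by parts against $\gamma^{s-1}$ gives
\[
    \text{LHS}=(1+s)\int_0^\infty h(\gamma)\gamma^{s-1}\,d\gamma .
\]
The boundary terms vanish because $s>0$, $h$ is bounded near $0$, and $h$ has compact support. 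The commutative check confirms the factor $1+s$.

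\emph{Step 2 (right-hand side via the layer cake).} Put $X_t=B+t\1$, which is bounded and invertible, and set
\[
    g(t)=\tau\bigl[(X_t^{-1/2}AX_t^{-1/2})^{1+s}\bigr].
\]
I plan to express the integrand of the right-hand side through the Fréchet derivative of $\ln$. The route is to use the integral representation
\[
    \D\ln(X_t)[A]=\int_0^\infty (X_t+\lambda\1)^{-1}A(X_t+\lambda\1)^{-1}\,d\lambda ,
\]
together with the layer-cake identity $\D\ln(X_t)[A]=\int_0^\infty\{uX_t<A\}\,du$ from Lemma~\ref{lemma:layer-cake}. The aim is to rewrite $\int_0^\infty g(t)\,dt$ as a double integral over $(u,t)$ of traces of spectral projections $\{A>u(B+t\1)\}$, weighted by explicit scalar kernels. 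After Fubini, and the substitution $\gamma=u$ combined with the shift absorbing $t$, the scalar kernels should integrate to $(1+s)\gamma^{s-1}$ against $h(\gamma)$. Trace cyclicity and the finiteness of $\tau$ justify all interchanges, since every integrand is bounded and has compact support in the spectral variable.

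\emph{Main obstacle.} The hard step is Step 2: turning $g(t)$, which is a power $1+s$ of a sandwiched operator, into a linear expression in spectral projections that the layer-cake formula can handle. My first attempt would be to differentiate $g$ in $t$, using cyclicity of $\tau$, and to show that $-g'(t)$ equals, up to the factor $1+s$, the trace of $X_t^{-1/2}AX_t^{-1/2}$ raised to the power $s$ and paired with $\D\ln(X_t)[A]$. Then Lemma~\ref{lemma:layer-cake} applied to $X=X_t$, $Y=A$ converts this into an integral of $\tau[\,\cdot\,\{A>uX_t\}]$. If the direct derivative computation does not close, I would fall back on the case where $A$ has finite spectrum and pass to general $A$ by norm approximation. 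Both sides are norm-continuous in $A$ on bounded sets because $B\ge\delta\1$: the left-hand side by the Step~1 formula and continuity of $h$, and the right-hand side by dominated convergence in $t$.
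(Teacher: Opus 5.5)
Your Step~1 is correct, and slightly easier than you make it. The function $h$ is convex and Lipschitz, and $h(\gamma')\ge\tau[\{A>\gamma B\}(A-\gamma'B)]$ shows that $-\tau[B\{A>\gamma B\}]$ is a subgradient of $h$ at every $\gamma$. It therefore equals $h'(\gamma)$ off a countable set, so Lemma~\ref{zero-level-average} is not needed there.

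The gap is Step~2, which carries the entire noncommutative content of the proposition, and the mechanism you sketch does not work. Applying Lemma~\ref{lemma:layer-cake} at $X=X_t=B+t\1$ produces the projections $\{A>u(B+t\1)\}=\1_{(0,\infty)}(A-uB-ut\1)$. When $A$ and $B$ do not commute, these are not of the form $\{A>\gamma B\}$ for any $\gamma$. Hence no substitution in $(u,t)$ can ``absorb the shift'' and land on $h$: a shift by $t\1$ cannot be rescaled away, whereas a perturbation along $A$ or $B$ can. The derivative identity you propose is also wrong as stated. With $K_t=X_t^{-1/2}AX_t^{-1/2}$ and commuting scalars, $-g'(t)=(1+s)a^{1+s}(b+t)^{-2-s}$, while $(1+s)\tau[K_t^s\,\D\ln(X_t)[A]]=(1+s)a^{1+s}(b+t)^{-1-s}$. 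The fallback to finite-spectrum $A$ does not help either, since $A$ and $B$ still fail to commute, and that is the only real difficulty.

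The missing idea is to perturb $B$ in the direction of $A$. For $\beta\ge0$, one has $\{u(B+\beta A)<A\}=\{A>\frac{u}{1-\beta u}B\}$ for $u<1/\beta$, and $0$ otherwise. So the layer-cake lemma at $X=B+\beta A$, with the substitution $\gamma=u/(1-\beta u)$, gives $\D\ln(B+\beta A)[A]=\int_0^\infty\{A>\gamma B\}(1+\beta\gamma)^{-2}\,d\gamma$, which stays inside your family of projections. On the other hand, $B+\beta A+t\1=S_t^{1/2}(\1+\beta K_t)S_t^{1/2}$ with $S_t=B+t\1$. The resolvent representation then yields $\D\ln(B+\beta A)[A]=\int_0^\infty S_t^{-1/2}K_t(\1+\beta K_t)^{-2}S_t^{-1/2}\,dt$.

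Differentiating $m$ times in $\beta$ at $\beta=0$ matches the moments: $\int_0^\infty\{A>\gamma B\}\gamma^m\,d\gamma=\int_0^\infty S_t^{-1/2}K_t^{m+1}S_t^{-1/2}\,dt$. Weierstrass approximation on $[0,\|B^{-1/2}AB^{-1/2}\|_\infty]$ extends this from $\gamma^m$ to any continuous $f(\gamma)$, with $K_t^{m+1}$ replaced by $K_tf(K_t)$. Monotone truncation $\min\{\gamma^{s-1},n\}$ then reaches $f(\gamma)=\gamma^{s-1}$. Finally, pairing with $A$ and using $\tau[AS_t^{-1/2}K_t^sS_t^{-1/2}]=\tau[K_t^{1+s}]$ gives the statement. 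This is the paper's route, and with this operator identity your Step~1 reformulation via $h$ becomes unnecessary.
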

\begin{proof}
Let $r=\Vert B^{-1/2}AB^{-1/2}\Vert$. Since \(A\le rB\), we have
\[
    \{A>\gamma B\}=0,\qquad \forall \gamma>r.
\]
Hence the left-hand side is an integral over \([0,r]\).

For \(t\ge0\), denote
\[
    S_t=B+t\mathbf{1},
    \qquad
    K_t=S_t^{-1/2}AS_t^{-1/2}.
\]
Then \(0\le K_t\le r\mathbf{1}\).  We first prove
\[
    \int_0^r \{A>\gamma B\}h(\gamma)\,d\gamma
    =
    \int_0^\infty
    S_t^{-1/2}K_t h(K_t)S_t^{-1/2}\,dt
\]
for every continuous function \(h\in C[0,r]\), where the integrals are
understood ultraweakly.

By Lemma~\ref{lemma:layer-cake}, for every \(\beta\ge0\) we have
\[
\begin{aligned}
\D\ln(B+\beta A)[A]
    &=
    \int_0^\infty \{u(B+\beta A)<A\}\,du                                      \\
    &=
    \int_0^{1/\beta}
        \{(1-u\beta)A>uB\}\,du                                                  \\
    &=
    \int_0^\infty
        \{A>\gamma B\}\frac{1}{(1+\beta\gamma)^2}\,d\gamma                       \\
    &=
    \int_0^r
        \{A>\gamma B\}\frac{1}{(1+\beta\gamma)^2}\,d\gamma .
\end{aligned}
\]
Here we used the change of scalar variables
\[
    \gamma=\frac{u}{1-u\beta},
    \qquad
    du=\frac{1}{(1+\beta\gamma)^2}\,d\gamma .
\]
On the other hand, for general nonnegative operators, \cite[Fact (vi)]{cheng2025} gives
\[
\begin{aligned}
   \D\ln(B+\beta A)[A]&=\int_0^\infty
    (B+\beta A+t\mathbf{1})^{-1}A(B+\beta A+t\mathbf{1})^{-1} dt
    \\
    &=
    \int_0^\infty
    S_t^{-1/2}(\mathbf{1}+\beta K_t)^{-1}K_t(\mathbf{1}+\beta K_t)^{-1}S_t^{-1/2}dt                            \\
    &=
    \int_0^\infty
    S_t^{-1/2}
    K_t(\mathbf{1}+\beta K_t)^{-2}
    S_t^{-1/2}\,dt .
\end{aligned}
\]
Here in the second line we use the referred fact and the third line follows by the commutativity between $(1+\beta K_t)$ and  $K_t$. Therefore, for every \(\beta\ge0\),
\[
    \int_0^r
        \{A>\gamma B\}\frac{1}{(\mathbf{1}+\beta\gamma)^2}\,d\gamma
    =
    \int_0^\infty
        S_t^{-1/2}
        K_t(\mathbf{1}+\beta K_t)^{-2}
        S_t^{-1/2}\,dt .
\]
Testing against arbitrary normal functionals on \(\mathcal N\), we may
differentiate both sides with respect to \(\beta\).  Since
\(0\le\gamma\le r\) and \(0\le K_t\le r\mathbf{1}\), the differentiations are
justified by dominated convergence.  Taking the \(m\)-th derivative and
then putting \(\beta=0\), we obtain, for every \(m\in\mathbb Z_{+}\),
\[
    \int_0^r
        \{A>\gamma B\}\gamma^m\,d\gamma
    =
    \int_0^\infty
        S_t^{-1/2}K_t^{m+1}S_t^{-1/2}\,dt .
\]
By linearity, this proves the desired formula for
polynomials \(h\).  By the Weierstrass approximation theorem, every
\(h\in C[0,r]\) is uniformly approximated by polynomials.  Since
\[
    \left\|
    \int_0^r \{A>\gamma B\}(h(\gamma)-p(\gamma))\,d\gamma
    \right\|_\infty
    \le r\|h-p\|_\infty ,
\]
and
\[
\begin{aligned}
    &\left\|
    \int_0^\infty
        S_t^{-1/2}K_t(h(K_t)-p(K_t))S_t^{-1/2}\,dt
    \right\|_\infty                                      \\
    &\qquad\le
    \|h-p\|_\infty
    \left\|
    \int_0^\infty S_t^{-1/2}K_tS_t^{-1/2}\,dt
    \right\|_\infty                                      \\
    &\qquad=
    \|h-p\|_\infty\,
    \|\mathcal{D}\ln(B)[A]\|_\infty
    \le
    r\|h-p\|_\infty ,
\end{aligned}
\]
the formula extends to all continuous \(h\in C[0,r]\).

We now apply this formula to the function
\[
    h(\gamma)=\gamma^{s-1}.
\]
When \(s=1\), this is continuous and the conclusion follows directly.
Assume \(0<s<1\).  For \(n\ge1\), define the continuous bounded functions
\[
    h_n(\gamma)=\min\{\gamma^{s-1},n\},
    \qquad 0\le\gamma\le r,
\]
with the convention \(h_n(0)=n\).  Then \(h_n\uparrow \gamma^{s-1}\) on
\((0,r]\).  Hence, by normal monotone convergence,
\[
    \int_0^r
        \{A>\gamma B\}\gamma^{s-1}\,d\gamma
    =
    \lim_{n\to\infty}
    \int_0^r
        \{A>\gamma B\}h_n(\gamma)\,d\gamma ,
\]
and
\[
    \int_0^\infty
        S_t^{-1/2}K_t^sS_t^{-1/2}\,dt
    =
    \lim_{n\to\infty}
    \int_0^\infty
        S_t^{-1/2}K_t h_n(K_t)S_t^{-1/2}\,dt .
\]
Thus
\[
    \int_0^r
        \{A>\gamma B\}\gamma^{s-1}\,d\gamma
    =
    \int_0^\infty
        S_t^{-1/2}K_t^sS_t^{-1/2}\,dt .
\]
Multiplying by \(A\) and taking trace on both sides, we obtain
\[
\begin{aligned}
    \int_0^\infty
    \tau\!\left[A\{A>\gamma B\}\right]\gamma^{s-1}\,d\gamma                 =
    \int_0^\infty
    \tau\!\left[
        A S_t^{-1/2}K_t^sS_t^{-1/2}
    \right]\,dt .
\end{aligned}
\]
Since
\[
    A=S_t^{1/2}K_tS_t^{1/2},
\]
the traciality of \(\tau\) gives
\[
    \tau\!\left[
        A S_t^{-1/2}K_t^sS_t^{-1/2}
    \right]
    =
    \tau\!\left(
        K_t^{1+s}
    \right).
\]
Therefore
\[
    \int_0^\infty
    \tau\!\left[A\{A>\gamma B\}\right]\gamma^{s-1}\,d\gamma
    =
    \int_0^\infty
    \tau\!\left(K_t^{1+s}\right)\,dt ,
\]
which is exactly the result after inserting $K_t=S_t^{-1/2}AS_t^{-1/2}$.
\end{proof}

Here and in the sequel, we define 
$$ c_s=s^s(1-s)^{1-s},\quad 0<s\leq 1, $$
with $c_1=1$.

\begin{proposition}
\label{prop:finite-corner-CG}
  Let $(\mathcal{N},\tau)$ be a finite von Neumann algebra and
\[
    A\in L^1(\mathcal{N},\tau)_+,\qquad
    B\in L^1(\mathcal{N},\tau)_+\cap \mathcal{N},
\]
and assume that \(B\ge \delta \mathbf{1}\) for some \(\delta>0\).  Then, for
every \(0<s\le1\),
\[
\begin{aligned}
    \tau\!\left[
        A\bigl(\ln(A+B)-\ln B\bigr)
    \right]
    &\le
    c_s
    \int_0^\infty
    \tau\!\left[
        \left(
            (B+t \mathbf{1})^{-1/2}A(B+t \mathbf{1})^{-1/2}
        \right)^{1+s}
    \right]\,dt                                      \\
    &\le
    \frac{c_s}{s}\,
    \tau\!\left[
        \left(
            B^{-\,\frac{s}{2(1+s)}}
            A
            B^{-\,\frac{s}{2(1+s)}}
        \right)^{1+s}
    \right].
\end{aligned}
\]
\end{proposition}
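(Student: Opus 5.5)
The first inequality follows the Cheng--Gao--Hirche--Huang--Liu template: express the left side through the layer-cake formula, bound it by a layer-cake integral with weight \(\gamma^{s-1}\), and then use Proposition~\ref{lem:finite-corner-change-variable}. The second inequality is a separate operator estimate.

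\emph{Step 1 (reduction to bounded \(A\)).} I first assume \(A\in\mathcal N_+\) is bounded. For general \(A\in L^1(\mathcal N,\tau)_+\), replace \(A\) by the truncation \(A\mathbf 1_{[0,N]}(A)\) and pass to the limit \(N\to\infty\). On the left, use lower semicontinuity or monotone convergence of the relative-entropy-type quantity \(\tau[A(\ln(A+B)-\ln B)]=D(A\Vert A+B)+\ldots\). On the right, use monotone convergence of the \(t\)-integrals. This step is technical but routine, since \(B\) is bounded and invertible.

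\emph{Step 2 (first inequality via layer-cake).} Write
\[
\ln(A+B)-\ln B=\int_0^1 \D\ln(B+\beta A)[A]\,d\beta .
\]
The proof of Proposition~\ref{lem:finite-corner-change-variable}, using Lemma~\ref{lemma:layer-cake}, gives
\[
\D\ln(B+\beta A)[A]=\int_0^\infty\{A>\gamma B\}(1+\beta\gamma)^{-2}\,d\gamma .
\]
Integrating in \(\beta\) over \([0,1]\) yields
\[
\ln(A+B)-\ln B=\int_0^\infty \{A>\gamma B\}\,\frac{1}{1+\gamma}\,d\gamma .
\]
Multiply by \(A\), take the trace, and use Fubini, which is legitimate because all integrands are positive. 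This gives
\[
\tau[A(\ln(A+B)-\ln B)]=\int_0^\infty \tau[A\{A>\gamma B\}]\,\frac{d\gamma}{1+\gamma}.
\]
By the scalar inequality \(\frac{1}{1+\gamma}\le c_s\gamma^{s-1}\), we get \(\frac1{1+\gamma}\le c_s\gamma^{s-1}\) for \(0<s\le1\). Indeed, \(\gamma^{1-s}\le c_s(1+\gamma)\) is sharp at \(\gamma=(1-s)/s\) with value \(c_s=s^s(1-s)^{1-s}\). Hence the left side is at most
\[
c_s\int_0^\infty \tau[A\{A>\gamma B\}]\gamma^{s-1}\,d\gamma ,
\]
and Proposition~\ref{lem:finite-corner-change-variable} converts this into the middle expression.

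\emph{Step 3 (second inequality, the main obstacle).} The task is to bound \(\int_0^\infty \tau[(S_t^{-1/2}AS_t^{-1/2})^{1+s}]\,dt\) by \(\frac1s\,Q_{1+s}(A\Vert B)\). My plan is to use Araki--Lieb--Thirring together with the monotonicity behind the sandwiched divergence. The map \(t\mapsto \tau[(S_t^{-1/2}AS_t^{-1/2})^{1+s}]\) equals \(Q_{1+s}(A\Vert S_t)\) up to exponents, and the natural comparison is
\[
\tau[(S_t^{-1/2}AS_t^{-1/2})^{1+s}]\le \tau\bigl[(B^{-\frac{s}{2(1+s)}}AB^{-\frac{s}{2(1+s)}})^{1+s}\bigr]\cdot\bigl\|B^{\frac{s}{1+s}}S_t^{-1}\bigr\|^{1+s}.
\]
This should follow from the inequality \(\|xyx\|_{1+s}\le \ldots\). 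Concretely, \(S_t^{-1/2}AS_t^{-1/2}=W^*(B^{-a}AB^{-a})W\) with \(W=B^{a}S_t^{-1/2}\) and \(a=\frac{s}{2(1+s)}\). Since \(B\) and \(S_t\) commute,
\[
\|W\|^{2}=\sup_{b\ge\delta}\frac{b^{s/(1+s)}}{b+t}.
\]
Then \(\tau[(W^*XW)^{1+s}]\le\|W\|^{2(1+s)}\tau[X^{1+s}]\). The remaining issue is that \(\int_0^\infty\|W_t\|^{2(1+s)}dt\) must equal \(1/s\) times something harmless. Using the scalar sup bound \(\frac{b^{s/(1+s)}}{b+t}\le C\,t^{-1/(1+s)}\) gives \(\|W_t\|^{2(1+s)}\lesssim t^{-1}\), which diverges at both ends. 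So the crude operator-norm bound fails, and I need a sharper argument.

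Instead I will apply a spectral/commuting-variable argument. Since \(B\) and \(S_t\) commute, I would first try the Lieb--Thirring-type concavity or the convexity of \(X\mapsto\tau[(X^{-1/2}AX^{-1/2})^{1+s}]\) via its variational formula,
\[
\tau[(S_t^{-1/2}AS_t^{-1/2})^{1+s}]=\sup_{H}\bigl\{(1+s)\tau(AH)-s\,\tau[(S_t^{1/2}HS_t^{1/2})^{(1+s)/s}]\bigr\}.
\]
I would then bound \(\int_0^\infty dt\) of the supremum by the supremum of the integral and evaluate the scalar integrals \(\int_0^\infty (b+t)^{-(1+s)}\,dt=\frac{b^{-s}}{s}\) in the commuting case. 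If this proves too delicate, I would fall back on the known finite-dimensional proof in Cheng--Gao--Hirche--Huang--Liu and transplant it verbatim, since it uses only traciality and operator monotonicity, both available in \((\mathcal N,\tau)\).
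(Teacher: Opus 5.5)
Your Steps 1 and 2 reproduce the paper's proof of the first inequality, and that part is fine. The paper proceeds the same way: the derivative representation of $\ln(A+B)-\ln B$, the layer-cake formula with the substitution $\gamma=u/(1-\beta u)$, the bound $\frac{1}{1+\gamma}\le c_s\gamma^{s-1}$, Proposition~\ref{lem:finite-corner-change-variable}, and a final truncation of $A$.

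The genuine gap is Step 3: the second inequality is never actually proved. The operator-norm estimate fails, as you noticed. The variational fallback also fails. Each integrand is a supremum over $H$, and
\[
\int_0^\infty \sup_H(\cdots)\,dt\;\ge\;\sup_H\int_0^\infty(\cdots)\,dt ,
\]
so exchanging the integral and the supremum gives a lower bound on the $t$-integral, not the upper bound you need. ``Transplanting the finite-dimensional proof'' is not an argument.

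The missing idea is to apply Araki--Lieb--Thirring with the weight kept as an operator inside the trace, instead of replacing it by its norm. Use your own factorization $S_t^{-1/2}AS_t^{-1/2}=W_tXW_t$, where
\[
S_t=B+t\mathbf{1},\qquad a=\tfrac{s}{2(1+s)},\qquad X=B^{-a}AB^{-a},\qquad W_t=B^{a}S_t^{-1/2}.
\]
ALT (Kosaki's tracial version) with exponent $1+s\ge1$ gives
\[
\tau\bigl[(W_tXW_t)^{1+s}\bigr]\le\tau\bigl[W_t^{1+s}X^{1+s}W_t^{1+s}\bigr]=\tau\bigl[X^{1+s}\,W_t^{2(1+s)}\bigr].
\]
Now $W_t^{2(1+s)}=B^{s}(B+t\mathbf{1})^{-(1+s)}$ is a function of $B$. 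The scalar identity $\int_0^\infty b^{s}(b+t)^{-(1+s)}\,dt=\frac1s$ (for $b>0$) therefore gives the operator identity $\int_0^\infty W_t^{2(1+s)}\,dt=\frac1s\mathbf{1}$, which converges $\sigma$-weakly and monotonically. Integrate in $t$ and use normality of $Y\mapsto\tau[X^{(1+s)/2}YX^{(1+s)/2}]$. This yields exactly $\frac1s\tau[X^{1+s}]$, the sandwiched quantity; this is the single ALT step the paper invokes.

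The choice of factorization matters. The other natural use of ALT moves the power onto $A$:
\[
\tau\bigl[(A^{1/2}S_t^{-1}A^{1/2})^{1+s}\bigr]\le\tau\bigl[A^{1+s}S_t^{-(1+s)}\bigr].
\]
That route only gives the weaker Petz-type bound $\frac1s\tau[A^{1+s}B^{-s}]$, so splitting through $B^{\pm a}$ is essential.
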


\begin{proof}
We first assume \(A\in \mathcal{N}\).  By the fundamental theorem of calculus,
\[
    \ln(A+B)-\ln B
    =
    \int_0^1 \D\ln(B+\beta A)[A]\,d\beta .
\]
Since \(B+\beta A\) is bounded and invertible, Lemma
\ref{lemma:layer-cake} gives
\[
    \D\ln(B+\beta A)[A]
    =
    \int_0^\infty \{u(B+\beta A)<A\}\,du .
\]
Applying the change of variables
\(
    \gamma=\frac{u}{1-\beta u}
\)
yields
\[
    \ln(A+B)-\ln B
    =
    \int_0^\infty \frac{1}{1+\gamma}\{A>\gamma B\}\,d\gamma .
\]
Young's inequality gives
\[
    \frac{1}{1+\gamma}
    \le
   s^s(1-s)^{1-s}\gamma^{s-1},
    \qquad \gamma>0 .
\]
Multiplying by \(A\), taking the trace, and using Proposition
\ref{lem:finite-corner-change-variable}, we obtain
\[
\begin{aligned}
    \tau\!\left(
        A\bigl(\ln(A+B)-\ln B\bigr)
    \right)
    &\le
    c_s
    \int_0^\infty
    \tau\!\left[A\{A>\gamma B\}\right]\gamma^{s-1}\,d\gamma        \\
    &=
    c_s
    \int_0^\infty
    \tau\!\left[
        \left(
            (B+t \mathbf{1})^{-1/2}A(B+t \mathbf{1})^{-1/2}
        \right)^{1+s}
    \right]\,dt .
\end{aligned}
\]
Using the Araki--Lieb--Thirring inequality (see \cite{Araki,Kosaki}), we get
\[
    \int_0^\infty
    \tau\!\left[
        \left(
            (B+t \mathbf{1})^{-1/2}A(B+t \mathbf{1})^{-1/2}
        \right)^{1+s}
    \right]\,dt
    \le
    \frac{1}{s}\,
    \tau\!\left[
        \left(
            B^{-\,\frac{s}{2(1+s)}}
            A
            B^{-\,\frac{s}{2(1+s)}}
        \right)^{1+s}
    \right].
\]
This proves the claim for bounded \(A\).  For general
\(A\in L^1(\mathcal{N},\tau)_+\), apply the bounded case to
\[
    A_{\delta,N}=A\,\mathbf{1}_{[\delta,N]}(A).
\]
The passage to the limit follows from Proposition~\ref{trunc} with $\delta\to 0$, $N\to\infty$ and $L^1$-recoverability for all the related quantities in the inequality.  Hence the same
inequality holds for arbitrary \(A\in L^1(\mathcal{N},\tau)_+\).
\end{proof}

We can now give the semifinite version of the Cheng--Gao--Hirche--Huang--Liu inequality.
\begin{theorem}
\label{thm:semifinite-CG}
Let \(A,B\in L^1(\M,\tau)_+\), and assume that
\(
    s(A)\le s(B).
\)
Then, for every \(0<s\le1\),
\[
\begin{aligned}
    \tau\!\left[
        A\bigl(\ln(A+B)-\ln B\bigr)
    \right]
    % &\le
    % c_s
    % \int_0^\infty
    % \tau\!\left[
    %     \left(
    %         (B+t\,p)^{-1/2}
    %         A
    %         (B+t\,p)^{-1/2}
    %     \right)^{1+s}
    % \right]\,dt                                      \\
    % &\le
    \leq \frac{c_s}{s}\,
    \tau\!\left[
        \left(
            B^{-\,\frac{s}{2(1+s)}}
            A
            B^{-\,\frac{s}{2(1+s)}}
        \right)^{1+s}
    \right].
\end{aligned}
\]
\end{theorem}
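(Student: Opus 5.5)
The plan is to reduce Theorem~\ref{thm:semifinite-CG} to the finite-corner inequality of Proposition~\ref{prop:finite-corner-CG}. We truncate along the spectrum of \(B\) and then pass to the limit using monotonicity and lower semicontinuity of both sides. The constant \(c_s/s\) is the one already produced there, so no new analytic estimate is needed. The work is in the approximation.

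\emph{Step 1 (reduction to a finite, invertible corner).} For \(0<\delta<N<\infty\) put \(e=e_{\delta,N}=\1_{[\delta,N]}(B)\). Since \(\tau(e)\le \delta^{-1}\tau(B)<\infty\), the corner \(\mathcal N=e\M e\) is a finite von Neumann algebra with trace \(\tau|_{\mathcal N}\). In \(\mathcal N\), the operator \(B_e=eBe=B\1_{[\delta,N]}(B)\) is bounded and satisfies \(B_e\ge\delta e\). We also set \(A_e=eAe\in L^1(\mathcal N,\tau)_+\). Proposition~\ref{prop:finite-corner-CG} then gives
\[
\tau\!\left[A_e\bigl(\ln(A_e+B_e)-\ln B_e\bigr)\right]\le \frac{c_s}{s}\,\tau\!\left[\left(B_e^{-\frac{s}{2(1+s)}}A_eB_e^{-\frac{s}{2(1+s)}}\right)^{1+s}\right],
\]
where logarithms and inverses are taken inside \(\mathcal N\).

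\emph{Step 2 (the right-hand side).} The right side equals \(\frac{c_s}{s}Q_{1+s}(A_e\Vert B_e)\). Because \(e\) commutes with \(B\), we have \(B_e^{-\frac{s}{2(1+s)}}A_eB_e^{-\frac{s}{2(1+s)}}=e\,B^{-\frac{s}{2(1+s)}}AB^{-\frac{s}{2(1+s)}}\,e\). This is a compression of the full sandwiched operator, which is well defined because \(s(A)\le s(B)\). Monotonicity of \(x\mapsto\tau(x^{1+s})\) under compressions (or data processing for the compression \(x\mapsto exe\oplus(\1-e)x(\1-e)\)) shows the right side is at most \(\frac{c_s}{s}Q_{1+s}(A\Vert B)\). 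If this quantity is infinite there is nothing to prove.

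\emph{Step 3 (the left-hand side).} Write \(\tau[A(\ln(A+B)-\ln B)]=D(A\Vert B)-D(A\Vert A+B)\), with the Umegaki convention adjusted for unnormalized arguments. The same identity holds in the corner. We then need
\[
\tau[A(\ln(A+B)-\ln B)]\le\liminf_{\delta\downarrow0,N\uparrow\infty}\tau[A_e(\ln(A_e+B_e)-\ln B_e)].
\]
\textbf{This is the main obstacle}, because the left side is a difference of relative entropies and is not obviously monotone under compression. I would proceed as follows. Use the integral representation obtained in the proof of Proposition~\ref{prop:finite-corner-CG},
\[
\tau[A(\ln(A+B)-\ln B)]=\int_0^\infty\frac{1}{1+\gamma}\,\tau[A\{A>\gamma B\}]\,d\gamma,
\]
which is valid in the corner. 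The integrand \(\tau[A_e\{A_e>\gamma B_e\}]\) should converge to \(\tau[A\{A>\gamma B\}]\) for almost every \(\gamma\). Here the corner projections increase to \(s(B)\ge s(A)\), which gives \(L^1\)-convergence \(A_e\to A\) by Proposition~\ref{trunc}, and the set of \(\gamma\) at which spectral projections jump is Lebesgue-null, as in Lemma~\ref{zero-level-average}. Fatou's lemma then gives the required liminf inequality.

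As a fallback, I would use lower semicontinuity of \(D(A\Vert B)\) together with continuity of \(D(A\Vert A+B)\). The latter is bounded, since \(A\le A+B\) gives \(D\le 0\) up to normalization. Proposition~\ref{thm:finite-trace-recoverability}, applied with \(e\uparrow s(B)\), supplies both limits, and subtracting them gives the liminf inequality.

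Combining Steps 1–3 and letting \(\delta\downarrow0\), \(N\uparrow\infty\) yields the theorem.
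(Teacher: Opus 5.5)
Your Steps 1 and 2 match the paper's proof. The paper uses the same truncation $e_{\delta,N}=\1_{[\delta,N]}(B)$ and applies Proposition~\ref{prop:finite-corner-CG} in the corner. For the right-hand side only monotonicity is needed, and your data-processing argument via the pinching $x\mapsto exe+(\1-e)x(\1-e)$, which leaves $B$ fixed, is a clean way to get it.

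For Step 3, your fallback is exactly the paper's argument. The paper writes the left side as $\frac{1}{\log e}\bigl(D(A+B\Vert B)+D(B\Vert A+B)\bigr)$ and applies recoverability of $D_1$ (Proposition~\ref{thm:finite-trace-recoverability}) to each term, using $A_e+B_e=e(A+B)e$. Your decomposition $D(A\Vert B)-D(A\Vert A+B)$ works equally well. Finiteness of $Q_{1+s}(A\Vert B)$ forces $D(A\Vert B)<\infty$, and $A\le A+B$ keeps $D(A\Vert A+B)$ finite, so the two recovered limits can be subtracted.

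Your primary Fatou route, by contrast, is not usable as written, so lead with the recoverability argument instead.
\begin{itemize}
\item The representation $\int_0^\infty\frac{1}{1+\gamma}\tau[A\{A>\gamma B\}]\,d\gamma$ is established in the paper only for bounded $A$ and bounded invertible $B$ in a finite algebra. Fatou needs it for the unbounded, non-invertible limit pair $(A,B)$. Even in the corner, $A_e$ need not be bounded.
\item The almost-everywhere convergence of the integrands would require an unbounded analogue of Lemma~\ref{zero-level-average}, namely triviality of $\ker(A-\gamma B)$ inside $s(B)$ for a.e.\ $\gamma$. You assert this rather than prove it.
\end{itemize}
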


\begin{proof}
For \(0<\delta<N<\infty\), set
\[
    e_{\delta,N}=\mathbf{1}_{[\delta,N]}(B),
    \qquad
    A_{\delta,N}=e_{\delta,N}Ae_{\delta,N},
    \qquad
    B_{\delta,N}=e_{\delta,N}Be_{\delta,N}.
\]
Since \(B\in L^1(\M,\tau)_+\), one has \(\tau(e_{\delta,N})<\infty\). Moreover, we have
\[
    \delta e_{\delta,N}
    \le
    B_{\delta,N}
    \le
    N e_{\delta,N}.
\]
Thus \(B_{\delta,N}\) is boundedly invertible in the finite-trace
corner.  Also, we have
\[
    A_{\delta,N}\to A,
    \qquad
    B_{\delta,N}\to B
    \quad\text{in }L^1(\M,\tau)
\]
as \(\delta\downarrow0\) and \(N\uparrow\infty\).

Applying Proposition \ref{prop:finite-corner-CG} in the corner
\(e_{\delta,N}Me_{\delta,N}\), we obtain
\[
\begin{aligned}
    &\tau\!\left[
        A_{\delta,N}
        \bigl(
            \ln (A_{\delta,N}+B_{\delta,N})
            -
            \ln B_{\delta,N}
        \bigr)
    \right]                                              \\
    &\quad\le
    c_s
    \int_0^\infty
    \tau\!\left[
        \left(
            (B_{\delta,N}+t e_{\delta,N})^{-1/2}
            A_{\delta,N}
            (B_{\delta,N}+t e_{\delta,N})^{-1/2}
        \right)^{1+s}
    \right]\,dt                                          \\
    &\quad\le
    \frac{c_s}{s}\,
    \tau\!\left[
        \left(
            B_{\delta,N}^{-\,\frac{s}{2(1+s)}}
            A_{\delta,N}
            B_{\delta,N}^{-\,\frac{s}{2(1+s)}}
        \right)^{1+s}
    \right].
\end{aligned}
\]
It remains to pass to the limit.  The convergence of the first line follows from the recoverability of $D_1$ and the
identity
\[
    \tau\!\left(
        A\bigl(\ln (A+B)-\ln B\bigr)
    \right)
    =
    \frac{1}{\log e}\Big(D(A+B||B)+D(B||A+B)\Big).
\]
The last line is naturally in R\'enyi divergence form. Hence the standard recoverability in Section~\ref{finite-recover} works here. This completes the proof.
\end{proof}

Then we can extend this result to general von Neumann algebras  via Haagerup--Junge--Xu reduction.
\begin{theorem}\label{general-convexsplit}
Let \(\M\) be a von Neumann algebra. Let \(\rho\in \mathcal{S}(\M)\) and  \(0\neq \sigma\in \M_*^+\) satisfy
\(
s(\rho)\leq s(\sigma)
\). 
Then, for every \(\lambda>0\) and \(0< s\leq 1\),
\[
D(\rho\Vert\sigma)
-
D(\rho\Vert\sigma+\lambda\rho)
\leq
\frac{c_s(\log e)}{s}\lambda^s
Q_{1+s}(\rho\Vert\sigma)=
\frac{c_s(\log e)}{s}\lambda^s
2^{\,sD_{1+s}(\rho\Vert\sigma)}.
\]
\end{theorem}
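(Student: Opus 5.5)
The plan is to prove the inequality first in the semifinite case as a rewriting of Theorem~\ref{thm:semifinite-CG}, and then to pass to an arbitrary $\M$ through the Haagerup--Junge--Xu reduction of Section~\ref{reduct}.

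\emph{Reductions.} If $Q_{1+s}(\rho\Vert\sigma)=\infty$ there is nothing to prove, so assume it is finite. Then $D(\rho\Vert\sigma)\le D_{1+s}(\rho\Vert\sigma)<\infty$. Since $\rho\le\lambda^{-1}(\sigma+\lambda\rho)$, we have $D(\rho\Vert\sigma+\lambda\rho)\le\log(1/\lambda)$, which is finite. Hence both terms on the left are finite and their difference makes sense. Because $s(\rho)\le s(\sigma)$, compressing to the corner $s(\sigma)\M s(\sigma)$ leaves every quantity unchanged, so we may assume $\sigma$ is faithful. This is exactly what the reduction of Section~\ref{reduct} requires, with $\varphi=\sigma/\sigma(\1)$.

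\emph{Semifinite case.} Let $(\M,\tau)$ be semifinite and apply Theorem~\ref{thm:semifinite-CG} with $A=\lambda\rho$ and $B=\sigma$. Then $s(A)\le s(B)$, and $A+B=\sigma+\lambda\rho$. By the Umegaki formula,
\[
D(\rho\Vert\sigma)-D(\rho\Vert\sigma+\lambda\rho)=(\log e)\,\tau\bigl[\rho(\ln(\sigma+\lambda\rho)-\ln\sigma)\bigr]=\frac{\log e}{\lambda}\,\tau\bigl[A(\ln(A+B)-\ln B)\bigr].
\]
This step needs care with unbounded logarithms. Because both relative entropies are finite, I would justify the splitting either through the truncations $e_{\delta,N}$ already used in the proof of Theorem~\ref{thm:semifinite-CG}, or through the identity for $\tau[A(\ln(A+B)-\ln B)]$ quoted there. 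On the right-hand side, homogeneity gives
\[
\tau\bigl[(B^{-\frac{s}{2(1+s)}}AB^{-\frac{s}{2(1+s)}})^{1+s}\bigr]=\lambda^{1+s}Q_{1+s}(\rho\Vert\sigma).
\]
Dividing by $\lambda$ gives the claimed bound
\[
\frac{c_s(\log e)}{s}\lambda^sQ_{1+s}(\rho\Vert\sigma).
\]

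\emph{General case.} Lift the functionals to $\widehat\rho=\rho\circ E$ and $\widehat\sigma=\sigma\circ E$ on $\mathcal R$, and restrict them to $\rho_m,\sigma_m$ on the finite algebras $\mathcal R_m$. Since $\widehat\sigma$ is faithful, $\sigma_m$ is faithful, so $s(\rho_m)\le s(\sigma_m)$ and the semifinite case applies on $\mathcal R_m$:
\[
D(\rho_m\Vert\sigma_m)-D(\rho_m\Vert\sigma_m+\lambda\rho_m)\le\frac{c_s\log e}{s}\lambda^sQ_{1+s}(\rho_m\Vert\sigma_m)\le\frac{c_s\log e}{s}\lambda^sQ_{1+s}(\rho\Vert\sigma).
\]
The last inequality is data processing under restriction, combined with Proposition~\ref{prop:crossed-product-retract}.

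\emph{Main obstacle.} The delicate point is passing to the limit in the subtracted term. Martingale recovery (Proposition~\ref{prop:haagerup-information-approximation}) is formulated with the reference state fixed by $\sigma$, not by $\sigma+\lambda\rho$. I would avoid this issue entirely by monotonicity. Restriction $\mathcal R\to\mathcal R_m$ is a channel, and $(\sigma+\lambda\rho)\circ E$ restricts to $\sigma_m+\lambda\rho_m$. Data processing and Proposition~\ref{prop:crossed-product-retract} then give
\[
D(\rho_m\Vert\sigma_m+\lambda\rho_m)\le D(\rho\Vert\sigma+\lambda\rho).
\]
Consequently
\[
D(\rho_m\Vert\sigma_m)-D(\rho\Vert\sigma+\lambda\rho)\le\frac{c_s\log e}{s}\lambda^sQ_{1+s}(\rho\Vert\sigma),
\]
and only the first term has to converge. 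Letting $m\to\infty$ and using $D(\rho_m\Vert\sigma_m)\nearrow D(\rho\Vert\sigma)$ from Proposition~\ref{prop:haagerup-information-approximation} with $\alpha=1$ yields the theorem. The final equality in the statement is simply the definition $Q_{1+s}=2^{sD_{1+s}}$.
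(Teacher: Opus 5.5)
Your proposal is correct and follows the paper's proof: lift to the Haagerup--Junge--Xu crossed product, apply Theorem~\ref{thm:semifinite-CG} with $A=\lambda\rho_m$ and $B=\sigma_m$ on the finite algebras $\mathcal R_m$, and pass to the limit $m\to\infty$. Where you differ is the limit step, and your version is slightly cleaner: you bound $D(\rho_m\Vert\sigma_m+\lambda\rho_m)$ and $Q_{1+s}(\rho_m\Vert\sigma_m)$ from above by data processing, so the only convergence you need is $D(\rho_m\Vert\sigma_m)\nearrow D(\rho\Vert\sigma)$. The paper instead appeals to martingale recovery for the subtracted term, whose reference functional $\sigma+\lambda\rho$ is not the one fixed in Proposition~\ref{prop:haagerup-information-approximation}; your explicit reduction to faithful $\sigma$ likewise fills in something the paper leaves implicit.
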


\begin{proof}
If
\(
Q_{1+s}(\rho\Vert\sigma)=+\infty,
\)
there is nothing to prove. Hence we assume that this quantity is
finite.

We still follow the notation in Section~\ref{reduct}. Let
\[
\M\subseteq \M\rtimes_\alpha G=\RR=\left(
        \bigcup_{m\geq 1}\mathcal R_m
    \right)^{\prime \prime}
    =
    \mathcal R.
\]
be the corresponding Haagerup--Junge--Xu reduction algebra. Put
\[
\widehat\rho=\rho\circ E,
\qquad
\widehat\sigma=\sigma\circ E.
\]

Let
\[
\rho_m=\widehat\rho|_{\RR_m},
\qquad
\sigma_m=\widehat\sigma|_{\RR_m}.
\]
Since \(\RR_m\) is finite, Theorem~\ref{thm:semifinite-CG} gives, by putting $A=\lambda\rho_m$ and $B=\sigma_m$,
\[
D(\rho_m\Vert\sigma_m)
-
D(\rho_m\Vert\sigma_m+\lambda\rho_m)
\leq
\frac{(\log e)c_s}{s}\lambda^s
Q_{1+s}(\rho_m\Vert\sigma_m).
\]
Letting \(m\to\infty\) and using the martingale recovery of the
Araki relative entropy and sandwiched R\'enyi divergences yields the same inequality
for \((\widehat\rho,\widehat\sigma)\) on \(\RR\).  The invariance of the information theoretic quantities between $\RR$ and $\M$ then gives the assertion on \(\M\).
\end{proof}

The preceding theorem directly gives the following convex-splitting estimate.
\begin{proposition}
\label{prop:dimension-free-convex-split}
Let \(\M\) be a von Neumann
algebra and \(\mathcal H\) be a finite-dimensional Hilbert space, and let
\[
    \rho_{\M \mathcal H}
    \in
    \mathcal{S}(\M \bar\otimes \mathcal{B}(\mathcal H))
\]
be a normalized state.  Put
\[
    \rho_{\M}(x)
    =
   \rho_{\M \mathcal H}(x\otimes \1_\mathcal{H}),\quad x\in \M,
\]
and fix a state \(\sigma_{\mathcal H}\in \mathcal{S}(\mathcal H)\).  Define
\[
    \theta_{\M\mathcal H}
    =
    \rho_{\M}\otimes\sigma_{\mathcal H}.
\]
For \(m\in\N \), set
\[
    \Theta_m
    =
    \rho_{\M}\otimes\sigma_{\mathcal H}^{\otimes m},
\]
and
\[
    \Omega_m
    =
    \frac1m
    \sum_{j=1}^m
    \rho_{\M\mathcal H_j}
    \otimes
    \bigotimes_{i\ne j}\sigma_{\mathcal H_i}.
\]
Then, for every \(0<s\le1\),
\[
    D(\Omega_m\Vert\Theta_m)
    \le
    \frac{c_s \log(e)}{s}\,
    2^{-s\left(
        \log m
        -
        D_{1+s}
        (
            \rho_{\M\mathcal H}
            \Vert
            \rho_{\M}\otimes\sigma_{\mathcal H}
        )
    \right)}.
\]
\end{proposition}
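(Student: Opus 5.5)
The plan follows the standard convex-split argument of Cheng--Gao--Hirche--Huang--Liu: reduce $D(\Omega_m\Vert\Theta_m)$ to a single-copy relative entropy difference, and then bound that difference with Theorem~\ref{general-convexsplit}.

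\emph{Step 1: the convex-split identity.} Write $\rho=\rho_{\M\mathcal H}$ and $\theta=\rho_\M\otimes\sigma_{\mathcal H}$. By symmetry under permutations of $\mathcal H_1,\dots,\mathcal H_m$, one has
\[
D(\Omega_m\Vert\Theta_m)=D(\rho_{\M\mathcal H_1}\otimes\sigma^{\otimes(m-1)}\Vert\Theta_m)-D(\rho_{\M\mathcal H_1}\otimes\sigma^{\otimes(m-1)}\Vert\Omega_m).
\]
This uses the identity $D(\Omega\Vert\Theta)=\frac1m\sum_j[D(\omega_j\Vert\Theta)-D(\omega_j\Vert\Omega)]$ for $\Omega=\frac1m\sum_j\omega_j$. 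In the von Neumann setting I would justify it through the Araki relative entropy chain rule for convex combinations. Alternatively, one can first prove everything on each $\RR_m$ of the Haagerup--Junge--Xu reduction, where the traces make it the usual computation, and then pass to the limit via martingale recovery (Proposition~\ref{prop:haagerup-information-approximation}).

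The first term equals $D(\rho\Vert\theta)$ by additivity. For the second term, apply data processing under the partial trace over $\mathcal H_2,\dots,\mathcal H_m$, which maps $\Omega_m$ to $\frac1m\rho+\frac{m-1}m\theta$. This gives
\[
D(\Omega_m\Vert\Theta_m)\le D(\rho\Vert\theta)-D\bigl(\rho\Vert \tfrac{m-1}{m}\theta+\tfrac1m\rho\bigr).
\]

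\emph{Step 2: rescaling.} Scaling the second argument gives $D(\rho\Vert c\omega)=D(\rho\Vert\omega)-\log c$. Hence the right-hand side equals
\[
D(\rho\Vert\theta)-D(\rho\Vert\theta+\lambda\rho)+\log\tfrac{m}{m-1},\qquad \lambda=\tfrac1{m-1}.
\]
To avoid the stray $\log\frac m{m-1}$ term, use instead the exact form $\sigma=\frac{m-1}{m}\theta$, $\lambda=\frac1m$. Then
\[
D(\rho\Vert\sigma)=D(\rho\Vert\theta)+\log\tfrac m{m-1},
\]
and the bound becomes exactly $D(\rho\Vert\sigma)-D(\rho\Vert\sigma+\lambda\rho)-\log\frac m{m-1}$. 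Dropping the negative term, Theorem~\ref{general-convexsplit} applied on $\M\bar\otimes\B(\mathcal H)$ gives
\[
\frac{c_s\log e}{s}\,m^{-s}\Bigl(\tfrac{m}{m-1}\Bigr)^{s}Q_{1+s}(\rho\Vert\theta).
\]
This is off from the claim by the factor $(m/(m-1))^s$.

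The main obstacle is to get the clean constant. My fallback is to keep the term $-\log\frac{m}{m-1}$ and compare it with the excess. Alternatively, I would apply Theorem~\ref{general-convexsplit} with $\sigma=\theta$ and $\lambda=\frac1{m-1}$ and use the sharper integral form underlying Theorem~\ref{thm:semifinite-CG}. I expect some such bookkeeping to absorb the factor, or else the statement holds up to this harmless factor. The support hypothesis $s(\rho)\le s(\theta)$ needed for Theorem~\ref{general-convexsplit} holds because $s(\rho_{\M\mathcal H})\le s(\rho_\M)\otimes\1$, provided $\sigma_{\mathcal H}$ is faithful. If $\sigma_{\mathcal H}$ is not faithful and $Q_{1+s}=\infty$, the bound is trivial.
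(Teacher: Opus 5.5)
Your Step~1 matches the paper: Donald's identity, then data processing under discarding $\mathcal H_2,\dots,\mathcal H_m$. It correctly gives
\[
D(\Omega_m\Vert\Theta_m)\le D(\rho\Vert\theta)-D\bigl(\rho\Vert \tfrac{m-1}{m}\theta+\tfrac1m\rho\bigr).
\]
The gap is in Step~2. You never reach the stated constant, and the ``bookkeeping'' you hope for cannot work as described. Rescaling the second argument to $\sigma=\frac{m-1}{m}\theta$ costs a multiplicative factor, since $Q_{1+s}\bigl(\rho\Vert\tfrac{m-1}{m}\theta\bigr)=\bigl(\tfrac{m}{m-1}\bigr)^sQ_{1+s}(\rho\Vert\theta)$. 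The compensating term $-\log\frac m{m-1}$, by contrast, is additive and independent of $Q_{1+s}$. Absorbing one into the other would require
\[
\frac{c_s\log e}{s}\,Q_{1+s}(\rho\Vert\theta)\bigl[(m-1)^{-s}-m^{-s}\bigr]\le\log\frac m{m-1},
\]
which fails once $Q_{1+s}(\rho\Vert\theta)$ is large. Two smaller points: your first rescaled display has a sign slip (the extra term is $-\log\frac m{m-1}$, not $+$). Also, $\sigma=\frac{m-1}{m}\theta$ degenerates to $\sigma=0$ at $m=1$, which Theorem~\ref{general-convexsplit} excludes. A bound ``up to the factor $(m/(m-1))^s$'' would be enough for the later reliability-function application, but it does not prove the proposition as stated.

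The missing ingredient is that the Araki relative entropy is antimonotone in its second argument with respect to the order on $\M_*^+$: if $\omega_1\le\omega_2$, then $D(\rho\Vert\omega_1)\ge D(\rho\Vert\omega_2)$. Since $\frac{m-1}{m}\theta+\frac1m\rho\le\theta+\frac1m\rho$, you get directly
\[
D(\Omega_m\Vert\Theta_m)\le D(\rho\Vert\theta)-D\bigl(\rho\Vert\theta+\tfrac1m\rho\bigr).
\]
Theorem~\ref{general-convexsplit} with $\sigma=\theta$ and $\lambda=\frac1m$ then gives exactly $\frac{c_s\log e}{s}\,m^{-s}Q_{1+s}(\rho\Vert\theta)$, for every $m\ge1$ and with no rescaling. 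This is what the paper does. It phrases the data-processing step through the channel $\mathcal E_j$ that re-appends $\sigma_{\mathcal H}$ on the discarded registers, so that $\mathcal E_j(\Omega_m)=\frac1mA_j+\frac{m-1}m\Theta_m\le\frac1mA_j+\Theta_m$, and then invokes this order monotonicity. A minor point on supports: when $\sigma_{\mathcal H}$ is not faithful but $Q_{1+s}(\rho\Vert\theta)<\infty$, the condition $s(\rho)\le s(\theta)$ already follows from finiteness. So the faithfulness case split you make is unnecessary.
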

\begin{proof}
If
\(
    D_{1+s}
    (
        \rho_{\M\mathcal H}
        \Vert
        \theta_{\M\mathcal H}
    )
    =
    +\infty
\),
there is nothing to prove.  Hence we assume that this quantity is finite, which implies
\(
    s(\rho_{\M\mathcal H})
    \le
    s(\theta_{\M\mathcal H})
\).

For \(1\le j\le m\), write
\[
    A_j
    =
    \rho_{\M\mathcal H_j}
    \otimes
    \bigotimes_{i\ne j}\sigma_{\mathcal H_i}.
\]
Thus
\[
    \Omega_m=\frac1m\sum_{j=1}^m A_j.
\]
Let \(\mathcal E_j\) be the quantum channel which keeps
\(\M\mathcal H_j\), discards all other \(\mathcal H\)-registers,
and appends \(\sigma_{\mathcal H}\) on them:
\[
\mathcal E_j(\omega)
=
\omega_{\mathcal M\mathcal H_j}
\otimes
\bigotimes_{i\neq j}\sigma_{\mathcal H_i},
\]
where \(\omega_{\mathcal M\mathcal H_j}\) is the marginal of \(\omega\) on \(\M\mathcal H_j\). Hence we have
\[
\begin{aligned}
\mathcal E_j(\omega)
\left(
x\otimes y_1\otimes\cdots\otimes y_m
\right)
&=
\omega
\left(
x\otimes
\1\otimes\cdots\otimes y_j\otimes\cdots\otimes\1
\right)\times
\prod_{i\ne j}\sigma_{\mathcal H}(y_i).
\end{aligned}
\]
% \[
%     \mathcal E_j(X)
%     =
%     X\left(x\otimes \bigotimes_{i\ne j}\1_{\mathcal H_i} \right)
%     \otimes
%     \bigotimes_{i\ne j}\sigma_{\mathcal H_i},\quad x\in \M\bar{\otimes}\mathcal{H}_j .
% \]
Then
\[
    \mathcal E_j(A_j)=A_j,\qquad
    \mathcal E_j(A_i)=\Theta_m,\quad i\ne j.
\]
Therefore
\[
    \mathcal E_j(\Omega_m)
    =
    \frac1m A_j+\frac{m-1}{m}\Theta_m
    \le
    \frac1m A_j+\Theta_m.
\]

By Donald's identity for the Araki relative entropy,
\[
D(\Omega_m\Vert\Theta_m)
=
\frac1m\sum_{j=1}^m
\left[
D(A_j\Vert\Theta_m)
-
D(A_j\Vert\Omega_m)
\right].
\]
The data-processing inequality under \(\mathcal E_j\) gives
\[
D(A_j\Vert\Omega_m)
\geq
D\bigl(
\mathcal E_j(A_j)
\Vert
\mathcal E_j(\Omega_m)
\bigr)
=
D\bigl(
A_j
\Vert
\mathcal E_j(\Omega_m)
\bigr).
\]
Since the relative entropy is monotone with respect to the order of normal positive functionals, the inequality
\[
    \mathcal E_j(\Omega_m)
    \le
    \frac1m A_j+\Theta_m
\]
implies
\[
D\bigl(
A_j
\Vert
\mathcal E_j(\Omega_m)
\bigr)
\geq
D\left(
A_j
\middle\Vert
\frac1m A_j+\Theta_m
\right).
\]
Consequently,
\[
D(\Omega_m\Vert\Theta_m)
\leq
\frac1m\sum_{j=1}^m
\left[
D(A_j\Vert\Theta_m)
-
D\left(
A_j
\middle\Vert
\Theta_m+\frac1m A_j
\right)
\right].
\]
Since the remaining \(\mathcal{H}\)-registers are the same, we have
\[
D(A_j\Vert\Theta_m)
=
D(\rho_{\M\mathcal{H}}\Vert\theta_{\M\mathcal{H}}),
\]
and
\[
D\left(
A_j
\middle\Vert
\Theta_m+\frac1m A_j
\right)
=
D\left(
\rho_{\M\mathcal{H}}
\middle\Vert
\theta_{\M\mathcal{H}}+\frac1m\rho_{\M\mathcal{H}}
\right).
\]
Thus
\[
D(\Omega_m\Vert\Theta_m)
\leq
D(\rho_{\M\mathcal{H}}\Vert\theta_{\M\mathcal{H}})
-
D\left(
\rho_{\M\mathcal{H}}
\middle\Vert
\theta_{\M\mathcal{H}}+\frac1m\rho_{\M\mathcal{H}}
\right).
\]

Applying Theorem~\ref{general-convexsplit} with
\(
\lambda=\frac1m
\)
gives
\[
D(\Omega_m\Vert\Theta_m)
\leq
\frac{c_s \log(e)}{s}\,
2^{-s\left(
\log m-D_{1+s}(\rho_{\M\mathcal{H}}\Vert\theta_{\M\mathcal{H}})
\right)}.
\]
This proves the claim.
\end{proof}

\subsection{Proof of Theorem~\ref{thm:semifinite-decoupling-reliability}}
\label{subsec:semifinite-decoupling-proof}
From the generalized convex-split lemma, we can get the following estimate on quantum information decoupling:
\begin{proposition}
\label{prop:dimension-free-decoupling}
Under the assumptions of Proposition~\ref{prop:dimension-free-convex-split}, for
every \(m\in\mathbb N\), every \(\sigma_{\mathcal H}\in \mathcal{S}(\mathcal H)\), and
every \(0<s\le1\),
\[
    P^{\mathrm{dec-u}}_{\M:\mathcal H}
    (
        \rho_{\M\mathcal H},
        \log m
    )
    \le
    \sqrt{\frac{c_s(\log e)}{s}}\,
    2^{-\frac{s}{2}\left(
        \log m
        -
        D_{1+s}
        (
            \rho_{\M\mathcal H}
            \Vert
            \rho_{\M}\otimes\sigma_{\mathcal H}
        )
    \right)}.
\]
Equivalently, 
\[
    P^{\mathrm{dec}}_{\M:\mathcal H}
    \left(
        \rho_{\M\mathcal H},
        \frac12\log m
    \right)
    \le
    \sqrt{\frac{c_s(\log e)}{s}}\,
    2^{-\frac{s}{2}\left(
        \log m
        -
        D_{1+s}
        (
            \rho_{\M\mathcal H}
            \Vert
            \rho_{\M}\otimes\sigma_{\mathcal H}
        )
    \right)}.
\]
\end{proposition}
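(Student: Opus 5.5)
The plan is to realize the standard convex-split decoupling protocol in the random-unitary formulation and then convert the relative-entropy bound of Proposition~\ref{prop:dimension-free-convex-split} into a purified-distance bound. Take the catalyst $\mathcal K=\mathcal H^{\otimes(m-1)}$ in the state $\eta_{\mathcal K}=\sigma_{\mathcal H}^{\otimes(m-1)}$. Relabel $\mathcal H\otimes\mathcal K=\mathcal H_1\otimes\cdots\otimes\mathcal H_m$ with the original system in slot $1$. Let $U_j$ be the unitary swapping registers $\mathcal H_1$ and $\mathcal H_j$, with $U_1=\1$. Then $\Phi(X)=\frac1m\sum_j U_j^*XU_j$ is a random-unitary channel with cost $\log m$.

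Since $U_j$ acts only on the finite-dimensional side, $(\operatorname{id}_\M\otimes\Phi)_*(\rho_{\M\mathcal H}\otimes\eta_{\mathcal K})=\Omega_m$ from Proposition~\ref{prop:dimension-free-convex-split}. I would verify this by evaluating both functionals on elementary tensors $x\otimes y_1\otimes\cdots\otimes y_m$; the swap sends $\rho_{\M\mathcal H_1}\otimes\bigotimes_{i\neq1}\sigma_{\mathcal H_i}$ to $A_j$. Choose the target $\omega_{\mathcal H\mathcal K}=\sigma_{\mathcal H}^{\otimes m}$, so that $\rho_\M\otimes\omega_{\mathcal H\mathcal K}=\Theta_m$. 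Consequently
\[
P^{\mathrm{dec-u}}_{\M:\mathcal H}(\rho_{\M\mathcal H},\log m)\le P(\Omega_m,\Theta_m).
\]

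Next, bound the purified distance by the Araki relative entropy. For normalized normal states on a von Neumann algebra one has $-\log F(\omega,\theta)^2\le D(\omega\Vert\theta)$, which is the $\alpha=1/2$ versus $\alpha=1$ monotonicity of the sandwiched/Petz-type Rényi family, with $F^2=2^{-D_{1/2}}$. This gives $P^2=1-F^2\le 1-2^{-D}\le D\ln 2\le D$, where $D$ is measured in bits. Combined with Proposition~\ref{prop:dimension-free-convex-split}, we get
\[
P(\Omega_m,\Theta_m)\le\sqrt{D(\Omega_m\Vert\Theta_m)}\le\sqrt{\tfrac{c_s\log e}{s}}\,2^{-\frac s2(\log m-D_{1+s}(\rho_{\M\mathcal H}\Vert\rho_\M\otimes\sigma_{\mathcal H}))}.
\]
The equivalent subsystem-removal statement then follows immediately from the normalization relation \eqref{normalize} with $r=\frac12\log m$.

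The main obstacle is justifying the inequality $F^2\ge 2^{-D}$ on an arbitrary von Neumann algebra. I would first try the Haagerup--Junge--Xu reduction: on the finite algebras $\mathcal R_m$ the inequality is classical, by Jensen applied to the tracial formulas. It would then transfer through martingale recovery of $D_1$ (Proposition~\ref{prop:haagerup-information-approximation}) and the fact that the fidelity of restrictions decreases to the fidelity on $\mathcal R$, which follows from the $L^1$-convergence $\widehat\rho\circ\Phi_m\to\widehat\rho$ as in that proof.

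A cruder alternative avoids this step: use Pinsker's inequality, available for Araki entropy, together with Fuchs--van de Graaf, $P\le\sqrt{\|\cdot\|_1}$. This loses the square-root constant, however, so I prefer the fidelity route, which yields exactly the stated constant.
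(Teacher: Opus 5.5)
Your proposal is correct and follows the same route as the paper. You realize the swap protocol with catalyst $\sigma_{\mathcal H}^{\otimes(m-1)}$, so that the output is $\Omega_m$ and the target is $\Theta_m$; you then bound $P(\Omega_m,\Theta_m)^2=1-F^2\le D(\Omega_m\Vert\Theta_m)$, apply Proposition~\ref{prop:dimension-free-convex-split}, and pass to the subsystem-removal formulation via \eqref{normalize}. The paper cites the fidelity--relative entropy inequality without proof, so your justification via $-\log F^2=D_{1/2}\le D$ (or via the reduction) is added detail rather than a different argument; note only that the Pinsker route you rejected would give $P\le(2\ln 2\cdot D)^{1/4}$, which halves the exponent rather than merely losing a constant.
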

\begin{proof}
Let \(\Omega_m\) and \(\Theta_m\) be the two states in
Proposition~\ref{prop:dimension-free-convex-split}.  The random-unitary
decoupling protocol is the standard swap protocol: take \(m-1\) catalyst
copies of \(\sigma_{\mathcal H}\), choose uniformly one of the \(m\) registers,
and swap the original \(\mathcal H\)-system into that register.  The averaged
output state is precisely
\[
    \Omega_m
    =
    \frac1m
    \sum_{j=1}^m
    \rho_{\M\mathcal H_j}
    \otimes
    \bigotimes_{i\ne j}\sigma_{\mathcal H_i},
\]
while the target product state is
\[
    \Theta_m
    =
    \rho_{\M}\otimes\sigma_{\mathcal H}^{\otimes m}.
\]
Hence
\[
    P^{\mathrm{dec-u}}_{\M:\mathcal H}
    (
        \rho_{\M\mathcal H},
        \log m
    )
    \le
    P(\Omega_m,\Theta_m).
\]
For normalized states, the fidelity--relative entropy inequality gives
\[
    P(\Omega_m,\Theta_m)^2
    =
    1-F(\Omega_m,\Theta_m)^2
    \le
    D(\Omega_m\Vert\Theta_m).
\]
Combining this with Proposition~\ref{prop:dimension-free-convex-split}
yields
\[
    P(\Omega_m,\Theta_m)
    \le
    \sqrt{\frac{c_s(\log e)}{s}}\,
    2^{-\frac{s}{2}\left(
        \log m
        -
        D_{1+s}
        (
            \rho_{\M\mathcal H}
            \Vert
            \rho_{\M}\otimes\sigma_{\mathcal H}
        )
    \right)}.
\]
This proves the random-unitary estimate.  The subsystem-removal estimate
follows from the definition
\[
    P^{\mathrm{dec}}_{\M:\mathcal H}
    \left(
        \rho_{\M\mathcal H},
        \frac12\log m
    \right)
    =
    P^{\mathrm{dec-u}}_{\M:\mathcal H}
    (
        \rho_{\M\mathcal H},
        \log m
    ).
\]
\end{proof}

\begin{lemma}
\label{prop:semifinite-smooth-max-info}
Let $\M$ be a semifinite von Neumann algebra and $\mathcal{H}$ be a finite-dimensional Hilbert space. Let \[
    \rho_{\M\mathcal H}
    \in
    \mathcal{S}(
        \M\bar\otimes \mathcal{B}(\mathcal H),\tau_\M\otimes \operatorname{Tr}_{\mathcal{H}})
\]
be a normalized state. Define
\[
    \delta_{\M:\mathcal H}
    (
        \rho_{\M \mathcal H},\lambda
    )
    =
    \inf_{\sigma_{\mathcal H}\in \mathcal{S}(\mathcal H)}
    \Delta
    \bigl(
        \rho_{\M \mathcal H}
        \Vert
        \rho_{\M}\otimes\sigma_{\mathcal H},
        \lambda
    \bigr).
\]
Then, for every \(r\in\mathbb R, r\neq\frac{1}{2} I_\infty(\M:\mathcal{H})_\rho\),
\[
    \lim_{n\to\infty}
    -\frac1n
    \log
    \delta_{\M^{\bar\otimes n}:\mathcal H^{\otimes n}}
    \bigl(
        \rho_{\M\mathcal H}^{\otimes n},2nr
    \bigr)
    =
    \sup_{s\ge0}
    s\left(
        r-\frac12 I_{1+s}(\M:\mathcal H)_\rho
    \right).
\]
\end{lemma}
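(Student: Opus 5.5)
The plan is to prove the two inequalities separately, reducing to the single-$\sigma$ smoothing exponent of Theorem~\ref{thm:semifinite-tracial-smoothing-exponent}. Throughout, write
\[
E(r)=\sup_{s\ge0}s\bigl(r-\tfrac12 I_{1+s}(\M:\mathcal H)_\rho\bigr).
\]

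\textbf{Achievability (lower bound on the exponent).} Since $\delta$ is an infimum over $\sigma_{\mathcal H^n}$, it suffices to choose $\sigma_{\mathcal H^n}=\sigma_{\mathcal H}^{\otimes n}$ for a fixed $\sigma_{\mathcal H}\in\mathcal S(\mathcal H)$. Corollary~\ref{cor:tracial-smoothing-estimate} with $\gamma=2^{2nr}$, combined with Proposition~\ref{prop:type-II-one-shot-renyi-bound} and additivity, gives for every $s\ge0$
\[
\Delta\bigl(\rho^{\otimes n}\Vert(\rho_\M\otimes\sigma_{\mathcal H})^{\otimes n},2nr\bigr)\le 2\cdot2^{-\frac{ns}{2}(2r-D_{1+s}(\rho_{\M\mathcal H}\Vert\rho_\M\otimes\sigma_{\mathcal H}))}.
\]
Hence
\[
\liminf_n-\tfrac1n\log\delta_n\ge\sup_{\sigma_{\mathcal H}}\sup_{s\ge0}s\bigl(r-\tfrac12D_{1+s}(\rho\Vert\rho_\M\otimes\sigma_{\mathcal H})\bigr).
\]
It then remains to exchange $\sup_{\sigma}$ with $\sup_s$, which yields $E(r)$ because $\inf_\sigma$ commutes with $\sup_s$ only after the exchange. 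Writing the left side as $\sup_s\sup_\sigma$ gives $\sup_s s(r-\frac12\inf_\sigma D_{1+s})=E(r)$ directly, since two suprema always commute. So achievability is routine.

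\textbf{Converse (upper bound on the exponent).} This is where the main work lies, because the optimal $\sigma_{\mathcal H^n}$ need not be of product form. First I would use permutation invariance. Both $\rho^{\otimes n}$ and the constraint set are invariant under the permutation channel acting on $(\M\bar\otimes\mathcal B(\mathcal H))^{\bar\otimes n}$, and $\Delta$ satisfies data processing and is quasi-convex. Quasi-convexity follows by averaging the feasible $\widetilde\rho$ and using joint concavity of fidelity, with care for purified distance; concavity of $F$ in the second argument for a fixed first argument suffices. Hence we may take $\sigma_{\mathcal H^n}$ symmetric. Next I would invoke the universal symmetric state $\omega_n$ on $\mathcal H^{\otimes n}$, for which $\sigma_{\mathcal H^n}\le g_n\,\omega_n$ with $g_n\le(n+1)^{d^2}$ polynomial. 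Enlarging $\sigma$ only enlarges the feasible set, so
\[
\delta_n\ge\Delta\bigl(\rho^{\otimes n}\Vert\rho_\M^{\otimes n}\otimes\omega_n,2nr+\log g_n\bigr).
\]
The polynomial factor is absorbed by a small shift of $r$, and at the end I would use continuity of $E$ away from the critical point.

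\textbf{Evaluating the bound.} I would repeat the converse argument of Theorem~\ref{thm:semifinite-tracial-smoothing-exponent} with the test projection $\{\rho^{\otimes n}>9\cdot2^{2nr'}\rho_\M^{\otimes n}\otimes\omega_n\}$. This reduces the question to the exponent of $\tau_n[\rho^{\otimes n}\{\rho^{\otimes n}>t2^{2nr'}\rho_\M^{\otimes n}\otimes\omega_n\}]$. The universal state is not i.i.d., so Theorem~\ref{thm:finite-tracial-positive-part-hoeffding} does not apply verbatim. I would redo its Step~2/3 lower bound: decompose $\omega_n$ as a mixture over types, $\omega_n=\sum_\lambda p_\lambda\,\sigma$-like components, or use the de~Finetti-type bound $\omega_n\le\mathrm{poly}(n)\int\sigma^{\otimes n}d\mu(\sigma)$.

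The standard finite-dimensional argument (Hayashi/Li--Yao) expresses the exponent through
\[
\lim_n\tfrac1n D_{1+s}(\rho^{\otimes n}\Vert\rho_\M^{\otimes n}\otimes\omega_n)=\inf_\sigma D_{1+s}(\rho\Vert\rho_\M\otimes\sigma)=I_{1+s},
\]
with the $\mathcal H$-side pinching taken with respect to $\omega_n$. That pinching commutes with the $\M$-side because it acts only on $\mathcal H^{\otimes n}$, and it has polynomially many blocks. I then need a lower bound of the form $\liminf\frac1n\log p_n\ge\inf_s s(I_{1+s}-2r')$. I would obtain it by pinching with respect to $\omega_n$ and then treating $\rho_\M^{\otimes n}$ as in Step~4, via the truncation $e_{\delta,N}$, a finite-spectrum approximation, and Cramér. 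Since $\omega_n$ commutes with $\mathcal E_{\omega_n}$ and the blocks are permutation-symmetric, the pair becomes block-commuting and Gärtner--Ellis applies with the limiting log-moment function $sI_{1+s}$. Lemma~\ref{l-trans} then transfers the Legendre transforms.

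I expect the main obstacle to be exactly this non-i.i.d. Hoeffding lower bound with $\omega_n$ on the semifinite side. The first thing I would try is to use the additivity of $I_{1+s}$ (the Hayashi--Tomamichel identity $I_{1+s}(\rho^{\otimes n})=nI_{1+s}(\rho)$, achieved by symmetric $\sigma$) together with the Gärtner--Ellis theorem on the pinched commutative algebra.
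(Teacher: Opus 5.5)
Your achievability half is correct, and slightly cleaner than the paper's. The paper applies Theorem~\ref{thm:semifinite-tracial-smoothing-exponent} to each product pair, so it must tacitly avoid the rates $2r=D_\infty(\rho_{\M\mathcal H}\Vert\rho_\M\otimes\sigma_{\mathcal H})$; your one-shot bound holds for every $n$, $s$ and $\sigma_{\mathcal H}$. The opening of your converse is also sound: symmetrization via concavity of $F(\rho^{\otimes n},\cdot)$, domination by the universal symmetric state, and the test-projection bound $\Delta\ge\frac{\sqrt2}{3}\sqrt{p_n}$.

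The gap is the step you flag yourself. The bound $\liminf_n\frac1n\log p_n\ge\inf_{s\ge0}s\bigl(I_{1+s}(\M:\mathcal H)_\rho-2r'\bigr)$ for the non-i.i.d.\ pair $(\rho_{\M\mathcal H}^{\otimes n},\rho_\M^{\otimes n}\otimes\omega_n)$ is not proved, and the route you sketch does not work as stated:
\begin{itemize}
\item Pinching with respect to $\omega_n$ acts only on $\mathcal H^{\otimes n}$, so the pinched state still does not commute with $\rho_\M^{\otimes n}\otimes\1$. For a general density $\rho_\M$ on a semifinite algebra (possibly with continuous spectrum), there is no pinching inequality $x\le N_n\mathcal E_n(x)$ for $\rho_\M^{\otimes n}\otimes\omega_n$ with subexponential $N_n$. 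So a finite-spectrum replacement of $\rho_\M$ must come \emph{before} any pinching. Moreover, $p_n$ itself is not monotone under such a replacement; you would have to route through $\tau_n(\cdot)_+$.
\item Even then, G\"artner--Ellis is not a black box. Its lower bound needs differentiability (essential smoothness) of the limiting cumulant function, which you have not established. The blocking-plus-Cram\'er device of Theorem~\ref{thm:finite-tracial-positive-part-hoeffding}, which avoids this, does not apply because $\omega_{qm}\neq\omega_m^{\otimes q}$.
\item The cumulant limit is a priori only the regularized quantity $s\lim_n\frac1n I_{1+s}(\M^{\bar\otimes n}:\mathcal H^{\otimes n})_{\rho^{\otimes n}}\le sI_{1+s}(\M:\mathcal H)_\rho$. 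Identifying it with $sI_{1+s}$ requires additivity of the sandwiched R\'enyi mutual information on $\M\bar\otimes\mathcal B(\mathcal H)$ for infinite-dimensional $\M$. Without that, your converse bound exceeds the claimed exponent.
\end{itemize}

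The paper never attempts this large deviation on $\M$ itself. It replaces $\rho_\M$ by an arbitrary second marginal $k_\M$ and reduces the smoothing quantity $\delta(\cdot)_k$, not $p_n$, in two stages.
\begin{itemize}
\item First, it applies the trace-preserving truncation $x\mapsto\bigl(exe,\tau((\1-e)x)\bigr)$ into $e\M e\oplus\mathbb C$, with $e=\1_{[\delta,N]}(k_\M)$. Data processing then gives exactly the inequality the converse needs. The R\'enyi quantities are recovered by sandwiching the truncated divergence between the compressed and the original ones and applying Proposition~\ref{thm:finite-trace-recoverability}.
\item Second, it takes a finite-spectrum $\pi_\varepsilon$ with $2^{-2\varepsilon}\pi_\varepsilon\le k\le2^{2\varepsilon}\pi_\varepsilon$. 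This nests the feasible sets at rates $r\pm\varepsilon$ and moves $D_{1+s}(\rho\Vert k\otimes\sigma_{\mathcal H})$ by at most $2\varepsilon$ uniformly in $\sigma_{\mathcal H}$. Lemma~\ref{l-trans} and convexity of the exponent in $r$ then pass to the limit.
\end{itemize}
Only when $k$ has finite spectrum does the paper pinch, with respect to $k^{\otimes n}\otimes\sigma_{\mathcal H^{\otimes n}}$, and invoke the Li--Yao argument. That is where your symmetrization and universal-state steps belong.

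What your proposal lacks is this reduction to a finite-spectrum marginal at the level of $\delta$. After it, the remaining non-i.i.d.\ step is the finite-spectrum situation that the paper treats as covered by Li--Yao, rather than an open appeal to G\"artner--Ellis and Hayashi--Tomamichel additivity on a general semifinite algebra.
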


\begin{proof}
For this proof, we use the following notation. For a fixed $k_\M \in L^1({\mathcal{M}})_+$, define
$$ I_{\alpha}(\M:\mathcal{H})_{\rho,k}=\inf_{\sigmah\in{\mathcal{S}(\mathcal{H})}} D_\alpha(\rho_{\M\mathcal{H}} \Vert k_\M\otimes \sigmah),$$
and
\[
    \delta_{\M:\mathcal H}
    (
        \rho_{\M\mathcal H},\lambda
    )_k
    =
    \inf_{\sigma_{\mathcal H}\in S(\mathcal H)}
    \Delta
    \bigl(
        \rho_{\M\mathcal H}
        \Vert
        k_\M\otimes\sigma_{\mathcal H},
        \lambda
    \bigr)
\]
We also denote the related Legendre transform as
$$\sup_{s\ge0}
    s\left(
        r-\frac12 I_{1+s}(\M:\mathcal H)_{\rho,k}
    \right)=\sup_{s\ge0}\left\{sr-\varphi_{\rho,k}(s)
    \right\}=E_{\rho,k}(r). $$
It is sufficient to prove for the modified definition, and put $k_\M=\rho_\M$ one can recover the desired result.

The lower bound follows by restricting the optimizer in the smoothing
definition to product states
\[
    \sigma_{\mathcal H^{\otimes n}}
    =
    \sigma_{\mathcal H}^{\otimes n}
\]
and applying the result in Theorem~\ref{thm:semifinite-tracial-smoothing-exponent} to the pair \(
    (\rho_{\M\mathcal H}^{\otimes n}
    ,
    k_{\M}^{\otimes n}
    \otimes
    \sigma_{\mathcal H}^{\otimes n})\).
Optimizing over \(\sigma_{\mathcal H}\in \mathcal{S}(\mathcal H)\) gives the required ``$\geq$'' direction.

It remains to prove the converse direction. We follow a similar procedure in Theorem~\ref{thm:finite-tracial-positive-part-hoeffding}. We assume that $E_{\rho,k}(r)<\infty$, otherwise the conclusion is trivial.

\medskip
\noindent
\emph{Step 1: Finite spectrum case.}

Assume first that \({\mathcal{N}}\) is finite and
\(k_{\mathcal{N}}\) has finite spectrum. Then, for \(n\geq1\), let \(\mathcal E_n\) be the pinching map with respect to
\(k_{\mathcal{N}}^{\otimes n}\otimes \sigma_{\mathcal{H}^{\otimes n}}\).
Hence the proof of \cite[Theorem 15]{LiYao2024} holds for such case. Therefore, in this
finite-spectrum situation,
\[
\limsup_{n\to\infty}
-\frac1n
\log
\delta_{\mathcal{N}^{\bar\otimes n}:\mathcal{H}^{\otimes n}}
\bigl(\rho_{\mathcal{N}\mathcal{H}}^{\otimes n},2nr\bigr)_k
\leq
 E_{\rho,k}(r).
\]

\medskip
\noindent
\emph{Step 2: Finite corner result.}
Assume that $k_\NN\in L^1(\NN,\tau)_+$ is bounded and invertible. Now we can find a positive element $\pi_\varepsilon\in L^1(\mathcal{N})_+$ which has a finite-spectrum decomposition, such that
$$ 2^{-2\varepsilon}\pi_{\varepsilon}\leq k_{\mathcal{N}}\leq 2^{2\varepsilon}\pi_{\varepsilon}.$$

For all candidates $\psi\leq 2^{2nr}k_\mathcal{N}^{\otimes n}\otimes \sigma_{\mathcal{H}^{\otimes n}}$, we have
$$ {\psi}\leq 2^{2n(r+\varepsilon)}\pi_\varepsilon^{\otimes n}\otimes \sigma_{\mathcal{H}^{\otimes n}}.$$
Hence we get
$$ \delta_{\mathcal{N}^{\bar\otimes n}:\mathcal{H}^{\otimes n}}
\bigl(\rhonh^{\otimes n},2n(r+\varepsilon)\bigr)_{\pi_\varepsilon}\leq \delta_{\mathcal{N}^{\bar\otimes n}:\mathcal{H}^{\otimes n}}
\bigl(\rhonh^{\otimes n},2nr\bigr)_k$$
Similarly,
$$ \delta_{\mathcal{N}^{\bar\otimes n}:\mathcal{H}^{\otimes n}}
\bigl(\rhonh^{\otimes n},2nr\bigr)_k\leq \delta_{\mathcal{N}^{\bar\otimes n}:\mathcal{H}^{\otimes n}}
\bigl(\rhonh^{\otimes n},2n(r-\varepsilon)\bigr)_{\pi_\varepsilon}$$
Applying the finite-spectrum results on $\pi_\varepsilon$ and letting $n\to \infty$, we get
\begin{align}\label{sand-estimate}
E_{\rho,\pi_\varepsilon}(r-\varepsilon)\leq\limsup_{n\to\infty}
-\frac1n
\log
\delta_{\mathcal{N}^{\bar\otimes n}:\mathcal{H}^{\otimes n}}
\bigl(\rho_{\mathcal{N}\mathcal{H}}^{\otimes n},2nr\bigr)_k
\leq
 E_{\rho,\pi_\varepsilon}(r+\varepsilon).
\end{align}
Since $E_{\rho,k}(r)$ is the supremum function of a family of affine functions with respect to $r$, $E_{\rho,k}(r)$ is convex. Hence it is continuous at $r$.

Besides, the choice of $\pi_\varepsilon$ implies the uniform estimate with respect to $\sigma_\mathcal{H}$
$$ \left| D_{1+s}(\rho_{\mathcal{NH}}\Vert \pi_\varepsilon \otimes \sigma_\mathcal{H})-D_{1+s}(\rho_{\mathcal{NH}}\Vert k_\mathcal{N} \otimes \sigma_\mathcal{H})\right|\leq 2\varepsilon.$$
Hence the estimate can be passed to $I_{1+s}(\mathcal{N}:\mathcal{H})_k$. Let $\varepsilon \to 0$, by Lemma~\ref{l-trans} we get the following uniform convergence
$$\sup_{s\geq 0} s\left(r-\frac{1}{2}D_{1+s}(\rho_{\mathcal{NH}}\Vert \pi_\varepsilon \otimes \sigma_\mathcal{H})\right)\to \sup_{s\geq 0} s\left(r-\frac{1}{2}D_{1+s}(\rho_{\mathcal{NH}}\Vert k_\mathcal{N} \otimes \sigma_\mathcal{H})\right).$$
Taking the supremum over $\sigma_\mathcal{H}\in \mathcal{S}(\mathcal{H})$ we obtain
 $E_{\rho,\pi_\varepsilon}(r)\to E_{\rho,k}(r)$. Combining \eqref{sand-estimate} we get
 $$ \limsup_{n\to\infty}
-\frac1n
\log
\delta_{\mathcal{N}^{\bar\otimes n}:\mathcal{H}^{\otimes n}}
\bigl(\rho_{\mathcal{N}\mathcal{H}}^{\otimes n},2nr\bigr)_k
\leq
 E_{\rho,k}(r).$$
This completes the finite-corner case.

\medskip
\noindent
\emph{Step 3: Semifinite recovery.}
Denote $e_{\delta,N}=\1_{[\delta,N]}(k_{\M})$ for $0<\delta<N<\infty$. Hence $\mathcal{N}=e_{\delta,N} \M e_{\delta,N}\oplus \mathbb{C}$ is a finite algebra. Put
\begin{align*}
  \Phi:  L^1(\M,\tau_\M)& \to  L^1(\NN,\tau_{\NN}),\\
   x\;\;\;\;&\mapsto \Big(e_{\delta,N}xe_{\delta,N},\tau_\M\big((\1_\M-e_{\delta,N})x\big)\Big).
\end{align*}
This is a CPTP map with $\widetilde{\Phi}=\Phi \otimes \operatorname{id}_\mathcal{H}$ its extension. The data-processing inequality of the purified distance gives
$$\Delta(\widetilde{\Phi}(\rho_{\mathcal{M}\mathcal{H}})\Vert \widetilde{\Phi}(k_{\mathcal{M}}\otimes \sigma_{\mathcal{H}}),2nr)\leq \Delta(\rho_{\mathcal{M}\mathcal{H}}\Vert k_{\mathcal{M}}\otimes \sigma_{\mathcal{H}},2nr),$$
from which we can deduce
\begin{align*}\limsup_{n\to\infty}
\frac{-1}n
\log
\delta_{\M^{\bar\otimes n}:\mathcal{H}^{\otimes n}}
&\bigl(\rho_{\M \mathcal{H}}^{\otimes n},2nr\bigr)_{k}\\
&\leq\limsup_{n\to\infty}
\frac{-1}n
\log
\delta_{\mathcal{N}^{\bar\otimes n}:\mathcal{H}^{\otimes n}}
\bigl(\widetilde{\Phi}(\rho_{\M \mathcal{H}})^{\otimes n},2nr\bigr)_{\Phi(k)}\\
&\leq E_{\Phi(\rho),\Phi(k)}(r).
\end{align*}
Here the last inequality follows from \emph{Step 2}.

Moreover, with a similar uniform convergence discussion in Step 2, the only task now is to check
$$ D_{1+s}(\widetilde{\Phi}(\rho_{\M\mathcal{H}})\Vert \widetilde{\Phi}(k_\M\otimes \sigma_\mathcal{H}))\to D_{1+s}(\rho_{\M\mathcal{H}}\Vert k_\M\otimes \sigma_\mathcal{H}) $$
with $\delta \to 0 $ and $N \to \infty.$
For this purpose, we write
\begin{align*}
     &D_{1+s}(\rho_{\M\mathcal{H}}\Vert k_\M \otimes \sigma_\mathcal{H})\\
    \geq&D_{1+s}(\widetilde{\Phi}(\rho_{\M\mathcal{H}})\Vert \widetilde{\Phi}(k_\M\otimes \sigma_\mathcal{H}))\\
    \geq& D_{1+s}(e \rho_{\M\mathcal{H}}e \Vert e(k_\M \otimes \sigma_\mathcal{H})e)
\end{align*}
where $e=e_{\delta,N}\otimes \1_\mathcal{H}$. The first inequality follows from the data processing inequality, and the second follows from discarding the second part in the direct sum. Then Proposition~\ref{thm:finite-trace-recoverability} implies the convergence.

Finally, setting $k_\M=\rho_\M$, we prove the desired result.
\end{proof}

Applying the Haagerup--Junge--Xu reduction as in Theorem~\ref{thm:general-smoothing-exponent}, we get the following result.
\begin{theorem}
\label{general-smooth-max-info}
Let $\M$ be a von Neumann algebra and $\mathcal{H}$ be a finite-dimensional Hilbert space. Let \[
    \rho_{\M\mathcal H}
    \in
    \mathcal{S}(
        \M\bar\otimes \mathcal{B}(\mathcal H))
\]
be a normalized state. Define
\[
    \delta_{\M:\mathcal H}
    (
        \rho_{\M \mathcal H},\lambda
    )
    =
    \inf_{\sigma_{\mathcal H}\in \mathcal{S}(\mathcal H)}
    \Delta
    \bigl(
        \rho_{\M \mathcal H}
        \Vert
        \rho_{\M}\otimes\sigma_{\mathcal H},
        \lambda
    \bigr).
\]
Then, for every \(r\in\mathbb R, r\neq\frac{1}{2} I_\infty(\M:\mathcal{H})_\rho\),
\[
    \lim_{n\to\infty}
    -\frac1n
    \log
    \delta_{\M^{\bar\otimes n}:\mathcal H^{\otimes n}}
    \bigl(
        \rho_{\M\mathcal H}^{\otimes n},2nr
    \bigr)
    =
    \sup_{s\ge0}
    s\left(
        r-\frac12 I_{1+s}(\M:\mathcal H)_\rho
    \right).
\]
\end{theorem}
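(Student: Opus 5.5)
The plan is to mirror the proof of Theorem~\ref{thm:general-smoothing-exponent}, replacing the pair $(\rho,\sigma)$ by $(\rho_{\M\mathcal H},\rho_\M\otimes\sigma_{\mathcal H})$ and tracking the extra infimum over $\sigma_{\mathcal H}$.

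\emph{Step 1: the lower bound.} For the ``$\ge$'' direction I restrict the infimum in $\delta$ to product states $\sigma_{\mathcal H}^{\otimes n}$. For each fixed $\sigma_{\mathcal H}$, Theorem~\ref{thm:general-smoothing-exponent} applies directly on $\M\bar\otimes\mathcal B(\mathcal H)$, since it already holds for arbitrary von Neumann algebras, with $\sigma=\rho_\M\otimes\sigma_{\mathcal H}$ and rate $2r$. This gives
\[
\liminf_n -\frac1n\log\delta(\rho^{\otimes n}_{\M\mathcal H},2nr)\ge \frac12\sup_{s\ge0}s\bigl(2r-D_{1+s}(\rho_{\M\mathcal H}\Vert\rho_\M\otimes\sigma_{\mathcal H})\bigr).
\]
I then take the supremum over $\sigma_{\mathcal H}$. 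The right-hand side becomes $\sup_{s}\sup_{\sigma}$, which equals $\sup_s s(r-\frac12 I_{1+s})$ because the order of the suprema can be exchanged. One point needs care: the exceptional rate $2r=D_\infty(\rho\Vert\rho_\M\otimes\sigma_{\mathcal H})$ for a particular $\sigma$. I handle it by monotonicity of $\Delta$ in $\lambda$ together with continuity of the convex exponent, perturbing $r$ slightly.

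\emph{Step 2: the upper bound via reduction.} I apply the Haagerup--Junge--Xu reduction to $\M$ only, with $\varphi$ a faithful normal state. If needed, I first pass to the support of $\rho_\M$, which is legitimate because $s(\rho_{\M\mathcal H})\le s(\rho_\M)\otimes\1$. I tensor everything with $\mathcal B(\mathcal H)$: the crossed product $\RR\bar\otimes\mathcal B(\mathcal H)$, the finite subalgebras $\RR_m\otimes\mathcal B(\mathcal H)$, and the expectations $E\otimes\mathrm{id}$ and $\Phi_m\otimes\mathrm{id}$. The state $\rho_m$ is the restriction of $\widehat\rho_{\M\mathcal H}$, and its $\RR_m$-marginal is $(\rho_\M)_m$.

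Restriction to $\RR_m^{\bar\otimes n}\otimes\mathcal B(\mathcal H^{\otimes n})$ is a channel. It maps $\rho_\M^{\otimes n}\otimes\sigma_{\mathcal H^n}$ to $(\rho_\M)_m^{\otimes n}\otimes\sigma_{\mathcal H^n}$ and does not touch the $\mathcal H$-part. Combined with Proposition~\ref{prop:crossed-product-retract}, data processing therefore gives
\[
\delta_{\RR_m}(\rho_m^{\otimes n},2nr)\le\delta_\M(\rho^{\otimes n},2nr),
\]
uniformly over $\sigma_{\mathcal H^n}$. Lemma~\ref{prop:semifinite-smooth-max-info} on the finite algebra $\RR_m$ then yields
\[
\limsup_n -\frac1n\log\delta_\M\le\sup_s s\bigl(r-\tfrac12 I_{1+s}(\RR_m:\mathcal H)_{\rho_m}\bigr).
\]

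\emph{Step 3: letting $m\to\infty$.} I need, for each $s$,
\[
I_{1+s}(\RR_m:\mathcal H)_{\rho_m}\nearrow I_{1+s}(\M:\mathcal H)_\rho.
\]
For each fixed $\sigma_{\mathcal H}$, Proposition~\ref{prop:haagerup-information-approximation} gives monotone convergence of $D_{1+s}$, since $\rho_\M\otimes\sigma_{\mathcal H}$ lifts compatibly. The difficulty is exchanging the limit with the infimum over $\sigma_{\mathcal H}$. I plan to use that $\mathcal S(\mathcal H)$ is compact, that each $\sigma\mapsto D_{1+s}(\rho_m\Vert(\rho_\M)_m\otimes\sigma)$ is lower semicontinuous, and that the sequence is increasing in $m$. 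A Dini-type (Sion-style) argument then gives $\lim_m\inf_\sigma=\inf_\sigma\lim_m$: the sets $\{\sigma: D_m(\sigma)\le c\}$ are closed, decreasing in $m$, and nonempty for $c$ slightly above the limit, so they have a common point. Once this pointwise convergence is in hand, Lemma~\ref{l-trans} transfers it to the Legendre transforms, using $r\neq\frac12 I_\infty$.

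I expect this interchange of $\lim_m$ and $\inf_{\sigma_{\mathcal H}}$ to be the main obstacle. A secondary check is that Lemma~\ref{l-trans} applies to the $\inf_\sigma$ functions: they remain convex in the appropriate sense after taking the infimum, and the condition $a<D_\infty$ translates to $2r<I_\infty$.
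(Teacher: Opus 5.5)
Your proposal is correct and follows essentially the same route as the paper. The lower bound restricts to product states $\sigma_{\mathcal H}^{\otimes n}$ and applies Theorem~\ref{thm:general-smoothing-exponent}, and the upper bound uses data processing to $\RR_m\bar\otimes\mathcal B(\mathcal H)$, Lemma~\ref{prop:semifinite-smooth-max-info}, martingale recovery, compactness of $\mathcal S(\mathcal H)$ and Lemma~\ref{l-trans}; your nested-closed-sets argument correctly fills in the paper's one-line appeal to compactness for exchanging $\lim_m$ with $\inf_{\sigma_{\mathcal H}}$, and your perturbation of $r$ (justified by left-continuity of the nondecreasing, lower semicontinuous exponent, not full continuity) covers exceptional rates the paper passes over.
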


\begin{proof}
Restricting the optimization to product states
\(\sigma_{\mathcal H}^{\otimes n}\) and applying
Theorem~\ref{thm:general-smoothing-exponent} gives
\[
\begin{aligned}
\liminf_{n\to\infty}
-\frac1n\log
\delta_{\mathcal M^{\bar\otimes n}:\mathcal H^{\otimes n}}
\bigl(\rho_{\mathcal M\mathcal H}^{\otimes n},2nr\bigr)
\geq
\sup_{s\geq0}
s\left(
r-\frac12 I_{1+s}(\mathcal M:\mathcal H)_\rho
\right).
\end{aligned}
\]

For the reverse inequality, let
\(\rho_{\mathcal R_m\mathcal H}\) be the restrictions of the canonical
lift of \(\rho_{\mathcal M\mathcal H}\) to
\(\mathcal R_m\bar\otimes\mathcal B(\mathcal H)\).
After data processing to $\RR_m \bar\otimes\mathcal B(\mathcal H)$, Lemma~\ref{prop:semifinite-smooth-max-info} implies
\[
\begin{aligned}
\limsup_{n\to\infty}
-\frac1n\log
\delta_{\mathcal M^{\bar\otimes n}:\mathcal H^{\otimes n}}
\bigl(\rho_{\mathcal M\mathcal H}^{\otimes n},2nr\bigr)
\leq
\sup_{s\geq0}
s\left(
r-\frac12 I_{1+s}(\RR_m:\mathcal{H})_{\rho_{\mathcal R_m\mathcal H}}
\right).
\end{aligned}
\]
Combining the martingale recovery of the $D_\alpha$ in Proposition~\ref{prop:haagerup-information-approximation}, the compactness of
\(\mathcal S(\mathcal H)\) and Lemma~\ref{l-trans}, we obtain
\[
   \sup_{s\geq0}
s\left(
r-\frac12 I_{1+s}(\RR_m:\mathcal{H})_{\rho_{\mathcal R_m\mathcal H}}
\right)
    \longrightarrow
    \sup_{s\geq0}
s\left(
r-\frac12 I_{1+s}(\M:\mathcal{H})_{\rho}
\right) .
\]
This proves the claim.
\end{proof}

\begin{proposition}
\label{prop:semifinite-decoupling-converse}
Let $\M$ be a von Neumann algebra. For every state $\rho_{\M\mathcal H}\in \mathcal{S}(\M\bar{\otimes}\mathcal{B}(\mathcal{H}))$ and every \(k\ge0\),
\[
    P^{\mathrm{dec}}_{\M:\mathcal H}
    (
        \rho_{\M\mathcal H},k
    )
    \ge
    \delta_{\M:\mathcal H}
    (
        \rho_{\M\mathcal H},2k
    ).
\]
\end{proposition}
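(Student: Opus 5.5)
The plan is to adapt the standard finite-dimensional converse argument (as in Li--Yao) and check that every step touching the reference system $\M$ is either data processing or an Uhlmann-type extension statement, both of which hold for normal states on von Neumann algebras. The other steps live entirely on the finite-dimensional side.

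\emph{Setup.} Fix an admissible protocol: a catalyst $\mathcal K$, a state $\eta_{\mathcal K}$, a unitary $U:\mathcal H\otimes\mathcal K\to\mathcal H'\otimes\mathcal L$ with $\log\dim\mathcal L\le k$, and a state $\omega_{\mathcal H'}$. Let $\xi_{\M\mathcal H'\mathcal L}$ be the normal state on $\M\bar\otimes\mathcal B(\mathcal H')\otimes\mathcal B(\mathcal L)$ obtained by transporting $\rho_{\M\mathcal H}\otimes\eta_{\mathcal K}$ through $\operatorname{id}_\M\otimes\operatorname{Ad}U$, so that its restriction to $\M\bar\otimes\mathcal B(\mathcal H')$ is $\Phi^{U,\mathcal L}_*(\rho_{\M\mathcal H}\otimes\eta_{\mathcal K})$. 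Put $\varepsilon=P\bigl(\Phi^{U,\mathcal L}_*(\rho_{\M\mathcal H}\otimes\eta_{\mathcal K}),\rho_\M\otimes\omega_{\mathcal H'}\bigr)$.

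\emph{Step 1 (Uhlmann extension).} I will find a subnormalized state $\xi'$ on $\M\bar\otimes\mathcal B(\mathcal H'\otimes\mathcal L)$ that extends $\rho_\M\otimes\omega_{\mathcal H'}$ and satisfies $P(\xi,\xi')\le\varepsilon$ (or $\le\varepsilon+\epsilon'$ for arbitrary $\epsilon'>0$, which suffices). The tool is Uhlmann's theorem for the inclusion $\M\bar\otimes\mathcal B(\mathcal H')\subseteq\M\bar\otimes\mathcal B(\mathcal H')\otimes\mathcal B(\mathcal L)$: the fidelity of restrictions equals the supremum of fidelities over extensions. I would argue this through vector representatives in a common representation, namely the standard form of the larger algebra. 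Representatives of $\xi$ and of some extension of $\rho_\M\otimes\omega_{\mathcal H'}$ can be chosen there, and partial isometries in the commutant adjust the overlap until it attains the transition probability of the restricted states. This is the step I expect to be the main obstacle. It is not among the earlier results of the paper, so I would cite the von Neumann algebraic Uhlmann theorem (Uhlmann, Alberti). Subnormalization enters through the generalized fidelity in the usual way.

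\emph{Step 2 (operator inequality).} Next I will show $\xi'\le(\dim\mathcal L)^2\,\rho_\M\otimes\omega_{\mathcal H'}\otimes\pi_{\mathcal L}$, where $\pi_{\mathcal L}$ is the maximally mixed state. Let $\{W_j\}_{j=1}^{d^2}$ with $d=\dim\mathcal L$ be the Weyl (discrete Heisenberg) unitaries on $\mathcal L$. Averaging $\xi'$ under conjugation by $\1\otimes\1\otimes W_j$ gives the completely depolarizing channel on $\mathcal L$, whose output is $\xi'_{\M\mathcal H'}\otimes\pi_{\mathcal L}=\rho_\M\otimes\omega_{\mathcal H'}\otimes\pi_{\mathcal L}$. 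The $j=0$ term of this average is $\xi'$ itself and all terms are positive, so
\[
\xi'\le d^2\,\rho_\M\otimes\omega_{\mathcal H'}\otimes\pi_{\mathcal L}.
\]
This step is purely finite-dimensional on $\mathcal L$ and does not interact with $\M$.

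\emph{Step 3 (pull back and conclude).} Let $\Psi$ be the normal channel that undoes $U$ and then discards $\mathcal K$. Then $\Psi(\xi)=\rho_{\M\mathcal H}$ and $\Psi(\rho_\M\otimes\omega_{\mathcal H'}\otimes\pi_{\mathcal L})=\rho_\M\otimes\sigma_{\mathcal H}$ for a state $\sigma_{\mathcal H}\in\mathcal S(\mathcal H)$. By positivity of $\Psi$, the candidate $\widetilde\rho=\Psi(\xi')\in\mathcal S_\le(\M\bar\otimes\mathcal B(\mathcal H))$ satisfies $\widetilde\rho\le 2^{2k}\rho_\M\otimes\sigma_{\mathcal H}$. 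By data processing for the purified distance, $P(\rho_{\M\mathcal H},\widetilde\rho)\le P(\xi,\xi')\le\varepsilon$. Hence
\[
\delta_{\M:\mathcal H}(\rho_{\M\mathcal H},2k)\le\Delta(\rho_{\M\mathcal H}\Vert\rho_\M\otimes\sigma_{\mathcal H},2k)\le\varepsilon.
\]
Taking the infimum over all protocols gives the claimed inequality.
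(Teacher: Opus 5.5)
Your proposal is correct and matches the paper's proof. The paper defers to Li--Yao's finite-dimensional converse: an Uhlmann extension of the decoupled marginal, the twirling bound $\xi'\le d^2\,\rho_\M\otimes\omega_{\mathcal H'}\otimes\pi_{\mathcal L}$, and pulling back through $U^*$ and the partial trace over the catalyst, with the remark that Uhlmann's theorem holds for normal states on arbitrary von Neumann algebras; you have written out these same steps with that theorem as the only external input touching $\M$.
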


\begin{proof}
The result follows from \cite[Proposition~19]{LiYao2024}, since the proof only relies on the operations on the finite-dimensional part $\mathcal{B}(\mathcal{H})$. We mention that the Uhlmann's theorem used there also holds for states on arbitrary von Neumann algebras, e.g., see \cite{Uhlmann1976}, not only for matrix algebras.
\end{proof}

\begin{proof}[Proof of Theorem~\ref{thm:semifinite-decoupling-reliability}]
Let $\M$ be a von Neumann algebra. For the ``$\leq$'' part,   Proposition~\ref{prop:semifinite-decoupling-converse}
gives
\[
    P^{\mathrm{dec}}_{\M^{\bar\otimes n}:\mathcal H^{\otimes n}}
    \bigl(
        \rho_{\M \mathcal H}^{\otimes n},nr
    \bigr)
    \ge
    \delta_{\M^{\bar\otimes n}:\mathcal H^{\otimes n}}
    \bigl(
        \rho_{\M\mathcal H}^{\otimes n},2nr
    \bigr).
\]
Using Theorem~\ref{general-smooth-max-info}, we obtain
\[
\begin{aligned}
    E^{\mathrm{dec}}_{\M :\mathcal H}
    (
        \rho_{\M \mathcal H},r
    )
    &\le
    \lim_{n\to\infty}
    -\frac1n
    \log
    \delta_{\M^{\bar\otimes n}:\mathcal H^{\otimes n}}
    \bigl(
        \rho_{\M\mathcal H}^{\otimes n},2nr
    \bigr)                                                  \\
    &=
    \sup_{s\ge0}
    s\left(
        r-\frac12 I_{1+s}(\M:\mathcal H)_\rho
    \right).
\end{aligned}
\]

For the ``$\geq$'' part.  Proposition~\ref{prop:dimension-free-decoupling} shows that for every state $\rho_{\M\mathcal{H}}$, every $\sigma_{\mathcal{H}} \in \mathcal{S}(\mathcal{H})$, every
$0<s\le 1$, and every $k\ge 0$,
\[
   P^{\mathrm{dec}}_{\M:H}(\rho_{\M\mathcal{H}},k)
   \le C_s\,
   2^{-s\left(k-\frac12
   D_{1+s}(\rho_{\M\mathcal{H}}\Vert \rho_{\M}\otimes\sigma_{\mathcal{H}})\right)},
\]
where $C_s<\infty$ depends only on $s$. Apply this one-shot bound to
\[
   \rho_{\M\mathcal{H}}^{\otimes n}
   \quad\text{and}\quad
   \sigma_{\mathcal{H}}^{\otimes n},
\]
with cost $k=nr$.  By additivity of the sandwiched Rényi divergence,
\[
   D_{1+s}\!\left(
        \rho_{\M\mathcal{H}}^{\otimes n}
        \Big\Vert
        \rho_{\M}^{\otimes n}\otimes \sigma_{\mathcal{H}}^{\otimes n}
   \right)
   =
   nD_{1+s}(\rho_{\M\mathcal{H}}\Vert \rho_{\M}\otimes\sigma_{\mathcal{H}}).
\]
Hence
\[
   P^{\mathrm{dec}}_{\M^{\bar\otimes n}:\mathcal{H}^{\otimes n}}
   \left(\rho_{\M\mathcal{H}}^{\otimes n},nr\right)
   \le
   C_s\,
   2^{-ns\left(
        r-\frac12D_{1+s}(\rho_{\M\mathcal{H}}\Vert \rho_{\M}\otimes\sigma_{\mathcal{H}})
   \right)} .
\]
Taking logarithms, then taking the limit,  we obtain
\[
   \liminf_{n\to\infty}
   -\frac1n
   \log
   P^{\mathrm{dec}}_{\M^{\bar\otimes n}:\mathcal{H}^{\otimes n}}
   \left(\rho_{\M\mathcal{H}}^{\otimes n},nr\right)
   \ge
   s\left(
        r-\frac12D_{1+s}(\rho_{\M\mathcal{H}}\Vert \rho_{\M}\otimes\sigma_{\mathcal{H}})
   \right).
\]
Since this holds for every $\sigma_{\mathcal{H}}\in \mathcal{S}(\mathcal{H})$, we may optimize over
$\sigma_{\mathcal{H}}$ and obtain
\[
   E^{\mathrm{dec}}_{\M:\mathcal{H}}(\rho_{\M\mathcal{H}},r)
   \ge
   s\left(
        r-\frac12 I_{1+s}(\M:\mathcal{H})_\rho
   \right).
\]
Finally, optimizing over $0<s\le 1$, and including the trivial value
$s=0$, yields
\[
   E^{\mathrm{dec}}_{\M:\mathcal{H}}(\rho_{\M\mathcal{H}},r)
   \ge
   \sup_{0\le s\le 1}
   s\left(
        r-\frac12 I_{1+s}(\M:\mathcal{H})_\rho
   \right).
\]

It remains to identify the region where the two bounds coincide.  Let
\[
    f(s)
    =
    s\left(
        r-\frac12 I_{1+s}(\M:\mathcal H)_\rho
    \right),
    \qquad s\ge0.
\]
The map
\[
    s\mapsto s I_{1+s}(\M:\mathcal H)_\rho
\]
is convex, and therefore \(f\) is concave.  If
\[
    f'(1)\le0,
\]
then the maximum of \(f\) over \([0,\infty)\) is already attained on
\([0,1]\).  This condition is precisely
\[
    r\le
    \frac12
    \left.
    \frac{d}{ds}
    \bigl[
        s I_{1+s}(\M:\mathcal H)_\rho
    \bigr]
    \right|_{s=1}
    =
    R^\sharp.
\]
Hence, for \(r\le R^\sharp\),
\[
    \sup_{s\ge0} f(s)
    =
    \max_{0\le s\le1} f(s),
\]
and the achievability and converse bounds match.
\end{proof}

\appendix
\section{Legendre transform lemma}\label{app-ltrans}
\newtheorem*{lemma*}{Lemma \ref{l-trans}}
\begin{lemma*}
Let \(I\) be a directed set. Let
\[
0\neq\rho_i,\sigma_i\in(\mathcal M_i)_*^+,
\qquad
0\neq\rho,\sigma\in\mathcal M_*^+.
\]
Put
\(
\rho_i(\1)\to \rho(\1),\,\sigma_i(\1)\to \sigma(\1)
\).
For \(s>0\), put
\[
\varphi_i(s)
=
sD_{1+s}(\rho_i\Vert\sigma_i),
\quad
\varphi(s)
=
sD_{1+s}(\rho\Vert\sigma).
\] 
with natural extension $\varphi_i(0)=\log \rho_i(\1)$, $\varphi(0)=\log \rho(\1)$.

If \(
a<D_\infty(\rho\Vert\sigma)
\) and $\varphi_i(s)\to \varphi(s)$ pointwise, then
\[
J_i(a)=\inf_{s\geq0}
\{\varphi_i(s)-a s\}
\longrightarrow
\inf_{s\geq0}
\{\varphi(s)-as\}=J(a).
\]
Equivalently,
\[
E_i(a)=\sup_{s\geq0}
\{a s-\varphi_i(s)\}
\longrightarrow
\sup_{s\geq0}
\{as-\varphi(s)\}=E(a).
\]
\end{lemma*}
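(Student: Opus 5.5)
The plan is to use convexity of the functions $\varphi_i,\varphi$ on $[0,\infty)$ together with the standing assumption $a<D_\infty(\rho\Vert\sigma)$. The assumption forces the infimum defining $J(a)$ to be effectively attained on a compact interval. Pointwise convergence of convex functions then upgrades to locally uniform convergence.

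\emph{Preliminary facts.} The map $s\mapsto sD_{1+s}(\rho\Vert\sigma)=\log Q_{1+s}(\rho\Vert\sigma)$ is convex on $[0,\infty)$. I would justify this by log-convexity of $\alpha\mapsto Q_\alpha$, for instance via the Haagerup $L^p$ (Riesz--Thorin type) interpolation description; the same holds for each $\varphi_i$. The value at $s=0$ is consistent with $\varphi(0)=\log\rho(\1)$, because $Q_\alpha\to\rho(\1)$ as $\alpha\downarrow1$. The convergence $\varphi_i(0)\to\varphi(0)$ is exactly the hypothesis $\rho_i(\1)\to\rho(\1)$. Convexity also gives that $\varphi(s)/s=D_{1+s}(\rho\Vert\sigma)$ is nondecreasing, with limit $D_\infty(\rho\Vert\sigma)$ as $s\to\infty$.

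\emph{Upper bound for $\limsup J_i(a)$.} This step is immediate. For each fixed $s$ we have $J_i(a)\le\varphi_i(s)-as\to\varphi(s)-as$, so taking the infimum over $s$ gives $\limsup_i J_i(a)\le J(a)$.

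\emph{Lower bound for $\liminf J_i(a)$.} Choose $a<b<D_\infty(\rho\Vert\sigma)$ and $s_0>0$ with $\varphi(s_0)>bs_0$, which is possible by the monotone limit above. If $\varphi$ is finite on $[0,s_0]$, convex functions converging pointwise on an interval converge uniformly on compact subintervals. At the endpoint $s=0$ I would use the chord bound together with convergence at the two endpoints $0$ and $s_0$. This gives $\inf_{[0,s_0]}(\varphi_i-a\cdot)\to\inf_{[0,s_0]}(\varphi-a\cdot)$.

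For $s\ge s_0$, convexity of $\varphi_i$ gives
\[
\varphi_i(s)\ge\varphi_i(0)+\frac{s}{s_0}\bigl(\varphi_i(s_0)-\varphi_i(0)\bigr).
\]
Eventually $\varphi_i(s_0)-\varphi_i(0)\ge bs_0-\varphi(0)-o(1)$, so $\varphi_i(s)-as\ge \varphi_i(0)+(b-a)s-C$, which grows linearly in $s$. Enlarging $s_0$ if necessary, the tail $s\ge s_0$ therefore cannot produce a value below $\inf_{[0,s_0]}$, uniformly in $i$. Combining the two ranges gives $\liminf_i J_i(a)\ge J(a)$.

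\emph{Case where $\varphi$ is $+\infty$ beyond some point.} Let $s^*=\sup\{s:\varphi(s)<\infty\}$. On $[0,s^*)$ I would run the argument above. For $s>s^*$, pointwise convergence gives $\varphi_i(s)\to\infty$, and by convexity $\varphi_i$ blows up on the whole tail $[s,\infty)$ for that $s$. The remaining task is to control the neighbourhood of $s^*$. There, lower semicontinuity of $\varphi$ together with the convex chord inequalities shows that values of $\varphi_i$ near $s^*$ are bounded below, asymptotically, by $\varphi$ near $s^*$.

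\emph{Main obstacle.} I expect the hardest part to be the uniform tail control: ruling out that the minimizers of $\varphi_i-a\,s$ escape to infinity, or to the boundary of the effective domain, along the net. The choice of $s_0$ using $a<D_\infty(\rho\Vert\sigma)$ is precisely what prevents this. The statement for $E_i$ then follows from $E_i=-J_i$ and $E=-J$.
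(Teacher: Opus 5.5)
Your proposal is correct in outline and follows essentially the paper's route. The paper likewise uses convexity and $a<D_\infty(\rho\Vert\sigma)$ to confine the infimum to a fixed compact interval $[0,S_0]$ for all large $i$, obtains $\limsup_i J_i(a)\le J(a)$ from pointwise convergence, and gets the reverse inequality from near-minimizers $s_i\in[0,S_0]$, a convergent subnet, and the property $\liminf_i\varphi_i(s_i)\ge\varphi(s)$ whenever $s_i\to s$. That property carries the same content as your locally uniform convergence together with your analysis at $s^*$. The paper first normalizes so that $\varphi_i(0)=0$. Without this, $\varphi(s)/s$ need not be monotone, and your tail error is really $-\tfrac{s}{s_0}\varphi(0)$, linear in $s$ rather than a constant $-C$; this is harmless only after enlarging $s_0$.

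One caveat applies to both arguments: near $s=0$, convexity gives no uniform lower bound on $\varphi_i$. The chord through $0$ and $s_0$ bounds $\varphi_i$ from above, not below. Moreover, when $\varphi\equiv+\infty$ on $(0,\infty)$, the convex functions $\max\{-ns,\,n^2s-n-1\}$ converge pointwise to such a $\varphi$ yet equal $-1$ at $s=1/n$, so pointwise convergence of convex functions alone does not suffice. You should therefore add the data-processing bound $\varphi_i(s)\ge(1+s)\log\rho_i(\1)-s\log\sigma_i(\1)$, obtained by restricting to $\mathbb{C}\1$. This bound is uniform in $i$ precisely because $\rho_i(\1)\to\rho(\1)$ and $\sigma_i(\1)\to\sigma(\1)$, and the latter hypothesis is otherwise never used.
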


\begin{proof}
Without loss of generality,  we can reduce to the case where
\(\rho_i,\rho\) are normalized states. Indeed, one has
\[
\varphi_i(s)
=
\log\rho_i(\1)
+\overline\varphi_i(s),\quad \varphi(s)
=
\log\rho(\1)
+\overline\varphi(s)
\]
where \(\overline\varphi_i\) is defined for $\frac{{\rho}_i}{\rho_i(\1)},\frac{\sigma_i}{\rho_i(\1)}$ and \(\overline\varphi\) is defined similarly. Hence the additive constant $\rho_i(\1)\to \rho{(\1)}$ will not affect the result.

Then, one can  easily check, by the properties of the relative entropy, $\varphi_i$ is convex and
\(\varphi_i(0)=0\). Hence the function
\[
s\longmapsto \frac{\varphi_i(s)}{s}
\]
is nondecreasing on \((0,\infty)\). The same holds for \(\varphi\).

Choose \(S_0>0\) and \(\eta>0\) such that
\[
\frac{\varphi(S_0)}{S_0}>a+2\eta.
\]
The value \(\varphi(S_0)=+\infty\) is allowed. By pointwise convergence, for all sufficiently large \(i\),
\[
\frac{\varphi_i(S_0)}{S_0}>a+\eta.
\]
Consequently, for every \(s\geq S_0\),
\[
\varphi_i(s)-as
=
s\left(\frac{\varphi_i(s)}{s}-a\right)
\geq \eta s>0.
\]
On the other hand,
\[
 J_i(a)\leq\varphi_i(0)=0.
\]
Thus, for all sufficiently large \(i\), the infimum of
\( J_i(a)\) may be restricted to the compact interval
\([0,S_0]\).

We shall use the following standard consequence of convexity: if
\(s_i\to s\), then
\[
\liminf_i\varphi_i(s_i)\geq\varphi(s).
\]
Choose \(s_i\in[0,S_0]\) such that
\[
\varphi_i(s_i)-a s_i
\leq
J_i(a)+o(1).
\]
After passing to a subnet, we may assume that \(s_i\to\widetilde{s}\in[0,S_0]\).
Hence
\[
\begin{aligned}
\liminf_i J_i(a)\geq
\liminf_i\{\varphi_i(s_i)-a s_i\}
\geq
\varphi(\widetilde{s})-a\widetilde{s}\geq J(a).
\end{aligned}
\]

For the reverse inequality, let \(\varepsilon>0\) and choose
\(s_\varepsilon\geq0\) such that
\[
\varphi(s_\varepsilon)-as_\varepsilon
\leq
 J(a)+\varepsilon.
\]
By pointwise convergence,
\[
\begin{aligned}
\limsup_iJ_i(a)\leq
\lim_i
\{\varphi_i(s_\varepsilon)-a s_\varepsilon\}=
\varphi(s_\varepsilon)-as_\varepsilon\leq J(a)+\varepsilon.
\end{aligned}
\]
Letting \(\varepsilon\downarrow0\) proves
\(
 J_i(a)\longrightarrow J(a)
\). The assertion for \( E_i\) follows from
\[
 E_i(a)=- J_i(a),
\qquad
 E(a)=- J(a).
\]
This completes the proof.
\end{proof}

\backmatter
\bmhead{Acknowledgements}
Hongsen Qiu and Xinyu Zhang are supported by the National
Natural Science Foundation of China (Grant No. 12371138 and No. W2441002). Zhiwen Lin is supported by the National
Natural Science Foundation of China (Grant No. 62571166, No. 12371138 and No. W2441002). They would like to thank their supervisor, Prof. Xiao Xiong, for his guidance and support throughout the research. They would also like to thank Prof. Li Gao for his valuable comments and suggestions.

\bmhead{Data Availability} Data sharing is not applicable to this article, since no data sets were used.

\bmhead{Conflict of Interest} The authors declare that they have no conflict of interest.

\bibliography{references_ams}

\end{document}